\documentclass[num-refs]{wiley-networks}

\usepackage{amssymb}
\usepackage{mathrsfs}
\usepackage{algorithm}
\usepackage{algorithmic}
\usepackage{subcaption}
\usepackage{booktabs}
\usetikzlibrary{positioning, arrows.meta, calc}
\usepackage{pgfplots}
\pgfplotsset{compat=1.16}

\providecommand{\texorpdfstring}[2]{#1}

\papertype{Preprint}

\title{Betweenness-Central Nodes Under Uncertainty: An Absorbing Markov Chain Approach}

\author[1]{Wencheng Bao}
\author[1]{Eleftheria Kontou}
\author[2]{Chrysafis Vogiatzis}

\affil[1]{Department of Civil \& Environmental Engineering, University of Illinois Urbana-Champaign, Urbana, Illinois, 61801, USA}
\affil[2]{Department of Industrial \& Enterprise Systems Engineering, University of Illinois Urbana-Champaign, Urbana, Illinois, 61801, USA}

\corraddress{Chrysafis Vogiatzis, Department of Industrial \& Enterprise Systems Engineering, University of Illinois Urbana-Champaign, Urbana, Illinois, 61801, USA}
\corremail{chrys@illinois.edu}

\fundinginfo{National Science Foundation, Grant/Award Number: 2237881}

\runningauthor{Bao, Kontou, and Vogiatzis}

\begin{document}

\maketitle

\begin{abstract}
We propose a betweenness centrality measure and algorithms for stochastic networks, where edges can fail and weights vary across realizations, making the most central node random. Our approach models the sequence of reported central nodes as an absorbing Markov chain and measures node importance by the share of pre-absorption time spent at each node. This produces a way to study centrality under uncertainty, which can then be estimated with Monte Carlo simulation. We also analyze robustness when the transition kernel is only approximately known, using row-wise perturbations to assess sensitivity and potential ranking changes. The framework further admits extensions to weighted rewards and restricted candidate sets without altering the Markov chain formulation. Experiments on Erd\H{o}s--R\'enyi, Watts--Strogatz, and Les Mis\'erables networks with stochastic edges show that the method identifies a small set of dominant nodes, reveals stable versus sensitive rankings under perturbations, and supports reward-based and structure-constrained variants. 

\keywords{betweenness centrality, stochastic networks, absorbing Markov chains, robust optimization, network motifs, ranking robustness}
\end{abstract}

\section{Introduction}
Centrality has been a prominent measure of importance in transportation, communication, and infrastructure networks \cite{cadini_zio_petrescu_2009, derrible_2012}. Among the many centrality metrics proposed over the years, betweenness centrality (BC) stands out as a natural fit in applications involving shortest paths and routing \cite{li2020identification, morelli_cunha_2021, sun_pei_hao_wang_zuo_wong_2018, chakrabarti_kushari_mazumder_2022}. Mathematically, the BC of node $n \in N$ can be calculated as \cite{wasserman_faust_1994}:
\begin{equation}
\mathrm{B}(n)=\sum_{s \neq n \neq t \in N} \frac{\sigma_{s, t}(n)}{\sigma_{s,t}}
\end{equation}
where $\sigma_{s,t}$ is the total number of shortest paths (also referred to as geodesics) between nodes $s$ and $t$, and $\sigma_{s, t}(n)$ is the total number of paths between the pair that pass through node $n$. 

Centrality has traditionally focused on deterministic and static settings, with some recent developments in stochastic and dynamic graphs (e.g., \cite{avella2018centrality}). In addition, nodal centrality metrics have been studied more than their group counterparts \cite{camur2024survey}. However, in many modern applications, the network observed is neither deterministic nor static: edges can fail (e.g., flooded streets in transportation networks), capacities and other edge characteristics can change (e.g., lane closures in a transportation network), and travel times and edge costs can be random and correlated in groups (e.g., travel times in a transportation network near a car crash). Under these circumstances, the node of highest betweenness centrality is itself random and can vary substantially across different realizations of the network.

A common approach to introducing uncertainty in BC calculations is to redefine the definition of the shortest path in stochastic networks, typically via randomized or probabilistic shortest path models. We then aggregate over these paths to compute expected shortest path BC or probabilistic BC variants \cite{kivimaki2016two, saha_brokkelkamp_velaj_khan_bonchi_2021}. The key idea is to identify candidate (in the form of expected or most probable) shortest paths between source and target pairs and measure importance by how frequently a node lies on these paths. However, this shortest path-centric approach becomes difficult once additional structural requirements are imposed, such as restricting paths to a specific subgraph or region, or enforcing other node or path level constraints \cite{parchas_gullo_papadias_bonchi_2015, horvath_kis_2017}.


Another widely used tool for analyzing importance is centrality based on random walk \cite{newman_2005}, including stationary visit measures such as PageRank \cite{brin_page_1998}, and other random walk counterparts of BC like current flow variants \cite{witt_2005, stolarksy_doyle_snell_1987}. 
These include walk- or flow-based BC within settings such as multilayer networks, dense graphs, or graphon limits, and higher order structures \cite{bottcher2021classical, petit_lambiotte_carletti_2021, schaub_benson_horn_lippner_jadbabaie_2020, benzi_klymko_2015}. Another novel approach is based on absorbing Markov chains (AMC): this moves away from the geodesic limit toward a limit where the path length is unrestricted for walkers \cite{gurfinkel_rikvold_2020}.


However, centrality metrics that are based on random walkers quantify where a walker tends to be (or how flow spreads) under exogenously specified Markov dynamics. Consequently, high stationary visitation probability for a node, which may be heavily influenced by nodal degree or by regularization, need not align with being a node of maximum BC, so walk-based rankings can drift toward frequently visited nodes rather than structurally mediating ones. 

\begin{figure}[htbp]
    \centering
    \includegraphics[width=0.6\linewidth]{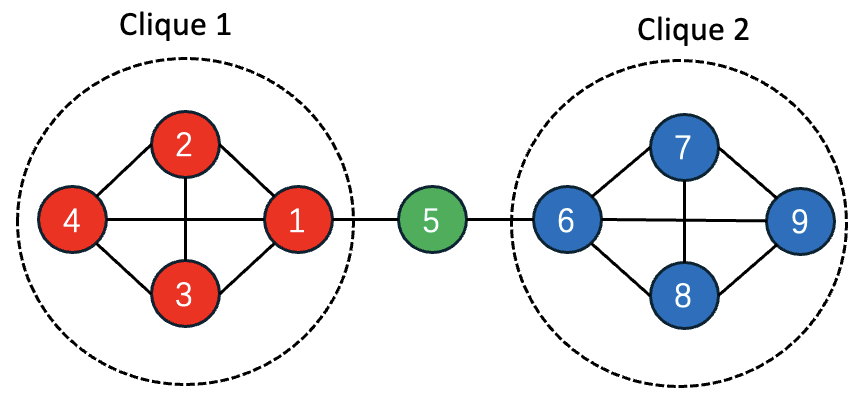}
    \caption{A deterministic graph consisting of two $4$-cliques connected by a length-two bridge $1$--$5$--$6$. Nodes $\{1,2,3,4\}$ form Clique~1 and nodes $\{6,7,8,9\}$ form Clique~2; node $5$ is the unique intermediary connecting the two cliques.}
    \label{fig:clique_example}
\end{figure}

In Figure~\ref{fig:clique_example}, the provided deterministic graph consists of two dense cliques connected only through the path $1 \to 5 \to 6$. Under the classical notion of BC, node $5$ is the unique maximizer since every shortest path with endpoints in different cliques has to pass through $5$. However, centrality metrics based on random walks measure long run visitation under a chosen Markov rule, and for an undirected simple random walk the stationary visit probability is proportional to node degree. Since $\deg(5)=2$ while $\deg(1)=\deg(6)=4$ (with all other clique nodes having degree $3$), node $5$ has stationary visit probability $2/28$, which is smaller than $4/28$ for nodes $1$ and $6$, so walk-based rankings favor nodes $1$ and $6$ over $5$.

Considering our context of a stochastic network, edge availability and edge weights may vary across realizations, leading to graph realizations that are disconnected. This is why centrality metrics based on random walk often rely on restarts or teleportation to remain well-defined. 
In a stochastic version of Figure~\ref{fig:clique_example}, the bridge edges $(1,5)$ and $(5,6)$ may be unavailable. When a bridge edge is absent, the graph becomes disconnected, and consequently paths connecting the two cliques disappear, meaning that the node of maximum centrality will vary sharply across realizations. Random walk centrality metrics then require a modeling choice. Without restarts, the walk is limited to its starting component, and no single global long-run ranking is meaningful. In the presence of restarts or teleportation, the realized connectivity and topology of the network instance is bypassed, which can blur any fragmentation effects \cite{lambiotte_rosvall_2012, masuda_porter_lambiotte_2017}. Moreover, computing walk- or flow-based scores over many realizations can be computationally expensive due to repeated stationary distribution calculations or Laplacian-type solves required \cite{cloninger_gal_mishne_oslandsbotn_robertson_wan_wang_2024}.

These observations motivate a complementary question: how does the reported central node itself evolve as the network is repeatedly realized? As an example, consider a monitoring resource (e.g., a sensor, or an inspection crew), stationed at the currently most central node. Each period, the network re-realizes (edges fail or recover, travel times change); then, the resource observes the connected component it currently occupies and relocates to the betweenness maximizer of that component. If the component has become too small to support an informative ranking, the episode ends. The long-run fraction of time the resource spends at each node is the quantity our framework computes and returns, and the anchoring of each realization at the previous center reflects the physical constraint that the resource can only observe and act within the part of the network it currently occupies.


\subsection{Our contributions}

We propose a unified framework for studying betweenness centrality in stochastic networks. Instead of recomputing a ranking from scratch for each realized graph, we model how the reported central node changes across these realizations. This leads to an absorbing Markov chain on the node set together with a terminal state. We call this \textbf{absorbing-frequency centrality} (AFC), which measures the normalized fraction of pre-absorption time spent at each node. 

We define local centers (component-wise) relative to a current anchor node, addressing modeling issues that arise in possibly disconnected graphs. We introduce a component-size threshold that triggers absorption when the component including the anchor node becomes too small to support informative ranking. We also clarify why the absorbing-chain representation tracks a single reported center. In an anchor-free special case, we show that AFC admits a simple mixture representation when the post-initial center law is stationary, conditional on survival. We further characterize when this simplification breaks down under heterogeneous transition dynamics.

We further extend the framework to row-wise perturbation analysis, reward-based AFC, and Top-$k$ or structure-constrained selection, all within the same absorbing-chain representation. A row-wise Monte Carlo procedure with finite-sample guarantees makes the method implementable, and experiments on the Erd\H{o}s--R\'enyi, Watts--Strogatz, and Les Mis\'erables networks show that AFC identifies dominant nodes, reveals ranking sensitivity, and supports reward-aware and constrained analyses using a single estimated chain. 

\subsection{Paper roadmap}

First, we introduce the proposed framework through a small worked example in Section~\ref{sec:glance}. The example is designed to separate it from four established alternatives:
\begin{enumerate}
    \item classical betweenness on the base graph,
    \item realization-averaged betweenness,
    \item random-walk betweenness \cite{newman_2005}, and
    \item and absorbing random-walk centrality \cite{gurfinkel_rikvold_2020}.
\end{enumerate}

The rest of the paper develops the theory in the order needed for the main results. After introducing the stochastic network model, global/component-wise betweenness centers, and local Top-$k$ sets, we construct a node-valued AMC on reported local centers and establish its well-posedness. In Subsection~\ref{subsec:single-node-uniqueness}, Proposition~\ref{prop:single-node-uniqueness} shows that unresolved candidate correspondences need not induce a unique node-valued row, Corollary~\ref{cor:tie-break-unique-row} gives the canonical row under deterministic tie-breaking, and Proposition~\ref{prop:set-state-not-unique} shows that propagating the whole candidate set does not remove this ambiguity. In Subsection~\ref{subsec:wellposed}, Proposition~\ref{prop:order-invariance} and Lemma~\ref{lem:row-normalization} show that the compressed row is determined by the pushforward realized-graph law and absorption rule and is a valid probability row. Subsection~\ref{subsec:uniform-step} then gives the trajectory interpretation of AFC: Proposition~\ref{prop:uniform-step} identifies AFC with a length-biased uniformly sampled pre-absorption step, and Corollary~\ref{cor:survival-decomp} yields the survival-weighted decomposition used throughout the sequel.

Building on this representation, Section~\ref{subsec:stochastic} studies arbitrary initial distributions in the connected anchor-free regime, where Theorem~\ref{thm:mixture}, Corollary~\ref{cor:geometric}, and Proposition~\ref{prop:linear-algebra} provide the mixture benchmark and its closed-form fundamental matrix representation. Section~\ref{subsec:robust-perturb-b} then analyzes row-wise perturbations, showing that the post-initial stationarity required by Theorem~\ref{thm:mixture} typically fails, while Remark~\ref{rem:pm-transience}, \eqref{eq:pm-survival-average}, the first-order sensitivity expansion, and the visit-gap bounds still yield perturbation and ranking-stability diagnostics. Section~\ref{subsec:multi-node-reward} next extends the same AMC through rewards: Proposition~\ref{prop:reward-afc-general} reduces simulator-level rewards to the same fundamental-matrix formula, leading to valued local Top-$k$ summaries, pool-restricted rewards, and node- and transition-based reward-AFC without enlarging the state space. Section~\ref{subsec:set-topk-phi-QN} complements this by modifying the selector itself, leading to the constrained kernels $P^W$ and $P^{W,\mathrm{fb}}$, the feasibility probabilities \eqref{eq:settopk-xi}, and the pool-mass summaries \eqref{eq:settopk-bPsi}. Section~\ref{subsec:construct-amc} then turns the theory into an implementable procedure via Algorithms~\ref{alg:sample-next} and~\ref{alg:construct-amc}, proves consistency of the row-wise Monte Carlo estimator, and gives finite-sample guarantees for estimating both the one-shot law $p$ and the full kernel $P$, together with error propagation to AFC. Finally, Section~\ref{sec:numerical-experiments} illustrates the full pipeline on ER and WS random graphs and on the Les Mis\'erables network through baseline AFC, robust perturbation analysis, reward-aware summaries, and structure-constrained experiments.

\section{Absorbing-frequency centrality at a glance}
\label{sec:glance}

In this section, we introduce the proposed measure informally and through a worked example. Let $G(V,E)$ be a base topology. A stochastic network generates random working graphs $H$ by failing edges and/or perturbing edge weights. For each realized $H$, classical betweenness is well defined on every connected component, and a deterministic tie-breaker selects a unique reported center in the component containing a chosen anchor. We track the sequence of reported centers as the network is repeatedly resampled. That is, from the current center $i$, draw a new realization anchored at $i$; if the component of $i$ is smaller than a threshold $k_{\min}$, or an exogenous stopping event occurs, move to an absorbing state $\perp$; otherwise move to the tie-broken betweenness maximizer of that component. The result is an absorbing Markov chain with transition matrix of the form \eqref{eq:amc}, and \textbf{absorbing-frequency centrality} (AFC) is the normalized expected pre-absorption occupancy $b(s)=sN/sN\mathbf 1$ with $N=(I-Q)^{-1}$, formally defined in \eqref{eq:afc-def}. High AFC means a node is reported as a local center often \emph{and} tends to remain so under the chain's own dynamics.

\subsection{A worked example: the asymmetric crossroads}
\label{subsec:crossroads}

Let $G$ consist of three triangles $A=\{1,2,3\}$, $B=\{4,5,6\}$, $C=\{7,8,9\}$ and a central node $10$. All triangle edges are deterministic. On the other hand, the three \emph{spokes} (edges $(10,1)$, $(10,4)$, $(10,7)$) are independently present at each step with probabilities $p_A=0.8$, $p_B=0.5$, $p_C=0.2$: that is, spoke $A$ is reliable, and spoke $C$ more often fails. We take $k_{\min}=3$, and let exogenous absorption $\alpha=0.1$ per step, the smallest-index tie-breaker, and the uniform initial distribution $s=\tfrac1{10}\mathbf 1$; see Figure~\ref{fig:crossroads} for a visual representation.

\begin{figure}[htbp]
    \centering
    \begin{tikzpicture}[
        every node/.style={circle, draw, minimum size=6.5mm, inner sep=0pt, font=\small},
        gateway/.style={fill=blue!15},
        interior/.style={fill=gray!10},
        hub/.style={fill=red!20, minimum size=7.5mm},
        spoke/.style={dashed, thick},
        tri/.style={thick}
    ]
        \node[hub] (n10) at (0,0) {10};
        \node[gateway] (n1) at (-2.5, 1.45) {1};
        \node[interior] (n2) at (-4.0, 2.2) {2};
        \node[interior] (n3) at (-4.0, 0.7) {3};
        \draw[tri] (n1) -- (n2); \draw[tri] (n2) -- (n3); \draw[tri] (n1) -- (n3);
        \node[gateway] (n4) at (2.5, 0) {4};
        \node[interior] (n5) at (4.2, 0.8) {5};
        \node[interior] (n6) at (4.2, -0.8) {6};
        \draw[tri] (n4) -- (n5); \draw[tri] (n5) -- (n6); \draw[tri] (n4) -- (n6);
        \node[gateway] (n7) at (-2.5, -1.45) {7};
        \node[interior] (n8) at (-4.0, -0.7) {8};
        \node[interior] (n9) at (-4.0, -2.2) {9};
        \draw[tri] (n7) -- (n8); \draw[tri] (n8) -- (n9); \draw[tri] (n7) -- (n9);
        \draw[spoke] (n10) -- (n1) node[midway, above, draw=none, fill=white, inner sep=1pt, font=\footnotesize] {$p_A=0.8$};
        \draw[spoke] (n10) -- (n4) node[midway, above, draw=none, fill=white, inner sep=1pt, font=\footnotesize] {$p_B=0.5$};
        \draw[spoke] (n10) -- (n7) node[midway, below, draw=none, fill=white, inner sep=1pt, font=\footnotesize] {$p_C=0.2$};
        \node[draw=none, font=\itshape\small] at (-4.0, 2.9) {Triangle $A$};
        \node[draw=none, font=\itshape\small] at (4.2, 1.55) {Triangle $B$};
        \node[draw=none, font=\itshape\small] at (-4.0, -2.9) {Triangle $C$};
    \end{tikzpicture}
    \caption{The asymmetric crossroads toy example. Three deterministic triangles are attached to a central node $10$ via stochastic spokes (dashed edges) with existence probabilities equal to $p_A=0.8$, $p_B=0.5$, $p_C=0.2$.}
    \label{fig:crossroads}
\end{figure}

The competing measures behave as follows (all measures normalized to sum to $1$ over $V$ for comparability; exact values are also shown in Figure~\ref{fig:crossroads-comparison}). Classical betwenness on the base graph $G$ cannot distinguish the three gateways at all: the base topology is symmetric in them, so nodes $1,4,7$ tie at $0.203$ each, with node $10$ at $0.391$ and the interior gray nodes at zero. Random-walk betweenness \cite{newman_2005} likewise ties the gateways ($0.167$ each) and, through its degree bias, it also assigns every gray node some positive value ($0.031$). In other words, the structural differentiation between mediating and merely well-connected nodes is not clear anymore. Realization-averaged betweenness, or $\mathbb E_H[\mathrm B_v(H)]$, does separate the gateways, and does so monotonically in spoke reliability: $0.294$, $0.237$, $0.116$ for nodes $1$, $4$, $7$.

AFC, on the other hand, can be computed exactly from the $10$-state chain, and is equal to
$$
b=(0.262,\,0.013,\,0.013,\,0.115,\,0.013,\,0.013,\,0.149,\,0.013,\,0.013,\,0.397)
$$
with $\mathbb E_s[T]=7.78$, giving the ranking $10\succ 1\succ 7\succ 4\succ$ all gray nodes. Three features distinguish it from all references above. 

First, conditional on a valid continuation, the chain remains at node $10$ with probability $0.544$. Node $10$ is the unique betweenness maximizer whenever at least two spokes are present (this happens with probability $0.50$). It is also the unique maximizer with probability $0.42$ when exactly one spoke is present and the chain moves to that gateway. Finally, with probability $0.08$, node $10$ is isolated and the step is invalid. 

Secondly, interior nodes are ``invisible'' to AFC. They are never reported as centers (within an isolated triangle all betweenness values tie at zero and the tie-breaker selects the gateway; in any larger component the gateway or hub strictly dominates), so their AFC mass is exactly the initial-distribution contribution $s_v/\mathbb E_s[T]=0.1/7.78\approx 0.013$. 

Third, we observe a reliability inversion: AFC ranks gateway $7$ ($0.149$) above gateway $4$ ($0.115$) (the opposite of realization-averaged betweenness), while keeping gateway $1$ first. Because spoke $C$ rarely works, anchor $7$ is repeatedly re-reported as the local center of its own isolated triangle, accumulating occupancy that no average of per-realization scores records or sees. Gateway $1$ nevertheless remains first among gateways because the hub transitions occupancy back to it: from anchor $10$, the single-spoke continuations move to nodes $1$, $4$, $7$ with conditional probabilities $0.348$, $0.087$, $0.022$.

\begin{figure}[htbp]
    \centering
    \begin{tikzpicture}
    \begin{axis}[
        width=0.95\linewidth, height=6.4cm,
        ybar=0pt, bar width=3.4pt,
        enlarge x limits=0.07,
        ymin=0, ymax=0.45,
        ylabel={\footnotesize Normalized score},
        xlabel={\footnotesize Node},
        symbolic x coords={1,2,3,4,5,6,7,8,9,10},
        xtick=data,
        ymajorgrids=true, major grid style={dashed, gray!30},
        legend style={at={(0.5,1.02)}, anchor=south, legend columns=4, font=\scriptsize, draw=none},
        tick label style={font=\footnotesize}
    ]
    \addplot+[fill=red!70, draw=red!90!black] coordinates {
        (1,0.262) (2,0.013) (3,0.013) (4,0.115) (5,0.013) (6,0.013) (7,0.149) (8,0.013) (9,0.013) (10,0.397)};
    \addplot+[fill=blue!55, draw=blue!80!black] coordinates {
        (1,0.203) (2,0.000) (3,0.000) (4,0.203) (5,0.000) (6,0.000) (7,0.203) (8,0.000) (9,0.000) (10,0.391)};
    \addplot+[fill=cyan!50, draw=cyan!70!black] coordinates {
        (1,0.294) (2,0.000) (3,0.000) (4,0.237) (5,0.000) (6,0.000) (7,0.116) (8,0.000) (9,0.000) (10,0.352)};
    \addplot+[fill=green!50!black, draw=green!35!black] coordinates {
        (1,0.167) (2,0.031) (3,0.031) (4,0.167) (5,0.031) (6,0.031) (7,0.167) (8,0.031) (9,0.031) (10,0.314)};
    \legend{AFC, BC (base graph), Averaged BC, RW-BC}
    \end{axis}
    \end{tikzpicture}
    \caption{Exact centrality scores on the asymmetric crossroads of Figure~\ref{fig:crossroads}, normalized to sum to one: (i) AFC with uniform $s$; (ii) classical betweenness on the base graph; (iii) realization-averaged betweenness $\mathbb E_H[\mathrm B_v(H)]$; and (iv) random-walk (current-flow) betweenness \cite{newman_2005} on the base graph.}
    \label{fig:crossroads-comparison}
\end{figure}

Hence, AFC is neither a ``robustified'' variant of betweenness nor an extension or refinement of walk centrality. It measures a different quantity: the long-run frequency with which a node is reported as the local betweenness maximum under the chain's own resampling dynamics. This frequency depends jointly on the probability that a node can be central in its realized component and on the probability that it remains so at the next step. 

\section{Problem Statement}

\subsection{Network states and central node}


Let $G(V,E)$ be a base topology, with $|V|=n$. A network state is any random object $Y$ that produces a working graph $G(Y)$: $Y$ may encode edge availability $A\subseteq E$ and/or positive edge weights (e.g., representing travel times) $\tau\in\mathbb{R}_+^{|E|}$, possibly correlated across edges and over time.

For each realization $Y$, compute weighted shortest-path distances and classical (weighted) betweenness 
$\mathrm{B}_v(Y)$. We also fix a deterministic tie-breaker (e.g., select the smallest-index node among ties) and then we can define the \textit{unique} global central node as in \eqref{eq:central-node}.
\begin{equation}\label{eq:central-node}
c(Y)\ :=\ \arg\max_{v\in V}\ \mathrm{B}_v(Y)\ \in V.
\end{equation}

Let $H(V,E_H,\tau)$ be a realization (working graph) with positive edge weights $\tau_e>0$ for each edge $e\in E$. For $s,t\in V$, let $d_H(s,t)$ denote the weighted shortest-path distance in $H$, with the convention $d_H(s,t)=+\infty$ if $t$ is unreachable from $s$. For $s\neq t$, let $\sigma_{st}(H)$ be the number of weighted shortest paths from $s$ to $t$ in $H$ (so $\sigma_{st}(H)=0$ if $d_H(s,t)=+\infty$), and let $\sigma_{st}(v;H)$ be the number of such shortest paths whose internal vertices include $v$ (excluding endpoints). Define the (non-normalized) weighted betweenness of $v$ on $H$ by
$$
\mathrm{B}_v(H)
:= \sum_{\substack{s,t\in V\\ s\neq t,\ s\neq v,\ t\neq v}}
\mathbf 1\{d_H(s,t)<\infty\}\,\frac{\sigma_{st}(v;H)}{\sigma_{st}(H)}.
$$
When $d_H(s,t)=+\infty$, the pair $(s,t)$ contributes $0$.

\begin{remark}[Normalization]
We use the non-normalized betweenness definition above for ranking, hence normalization constants (e.g., division by $(n-1)(n-2)$ in the connected case, or by the number of reachable pairs) do not affect $c(H)$ as long as they do not depend on $v$.
\end{remark}

\subsection{Component-wise (local) centers for disconnected realizations}
The realization $H$ may be disconnected and hence comprise components. In that case, a single global argmax $c(H)$ is not representative of each component: hence, we can define a component-wise betweenness center anchored at a node.

For a realized working graph $H(V,E_H,\tau)$, let $\mathcal{K}(H)$ denote the collection of components (maximally connected subgraphs) of $H$. For any $i\in V$, let $K(i;H)\in\mathcal{K}(H)$ be the unique component containing $i$ (so $K(i;H)=\{i\}$ if $i$ is isolated). Using the same deterministic tie-breaker as in \eqref{eq:central-node}, define the component-wise center of $K\in\mathcal{K}(H)$ by:
$$
c_{\mathrm{comp}}(K;H)\ :=\ \arg\max_{v\in K}\ \mathrm{B}_v(H)\ \in K.
$$

Given an anchor node $i\in V$ and a realized working graph $H$, define the local center by
\begin{equation}\label{eq:local-center}
c_{\mathrm{loc}}(i;H)\ :=\ c_{\mathrm{comp}}(K(i;H);H)\ \in V.
\end{equation}


We can further fix a threshold $k_{\min}$ and treat realizations with $|K(i;H)|<k_{\min}$ (i.e., small enough) as having no valid continuation from anchor $i$. Motivated by \cite{artime_grassia_de_domenico_gleeson_makse_mangioni_perc_radicchi_2024}, it is possible to set a component size threshold either as an absolute cutoff $k_{\min}$ or as a relative cutoff $k_{\min}=\lceil \theta |V| \rceil$ with $\theta\in \left(0,1\right)$, and treat realizations with $|K(i;H)|<k_{\min}$ as having no valid continuation from anchor $i$; this event is absorbed into $\mathcal A_i$. This avoids tie-break artifacts dominating the dynamics in highly fragmented states.

\subsection{Multiple local centers}

We may opt for a set of nodes with maximum centrality value (top-$k$ nodes), instead of a single center in each component \cite{riondato_kornaropoulos_2015, segarra_ribeiro_2016}.
For a realized working graph $H(V,E_H,\tau)$ and a component $K\in\mathcal{K}(H)$, let $\pi_{K,H}$ denote a deterministic tie-broken ordering of the vertices in $K$ by decreasing $\mathrm{B}_v(H)$ (ties resolved by the same global rule, restricted to $K$). Then, the component-wise top-$k$ set is defined as
\begin{equation}\label{eq:topk-comp}
\mathrm{Top}_k(K;H)\ :=\ \{\pi_{K,H}(1),\dots,\pi_{K,H}(k\wedge |K|)\}\ \subseteq\ K.
\end{equation}

\begin{definition}[Local top-$k$]\label{def:local-topk}
Given an anchor node $i\in V$ and a realized working graph $H$, define the local top-$k$ set by
\begin{equation}\label{eq:local-topk}
\mathrm{Top}^{\mathrm{loc}}_k(i;H)\ :=\ \mathrm{Top}_k\!\big(K(i;H);H\big)\ \subseteq\ V.
\end{equation}
When $|K(i;H)|<k_{\min}$, we treat the step as having no valid continuation from anchor $i$.
\end{definition}

\begin{remark}[Choosing $k$ vs.\ $k_{\min}$]
If one component requires the top-$k$ set to contain exactly $k$ distinct nodes, it suffices to choose $k_{\min}\ge k$. Otherwise \eqref{eq:topk-comp} uses $k\wedge |K|$ and remains well-defined for any $k\ge 1$ whenever $|K|\ge k_{\min}$.
\end{remark}

\section{Absorbing Markov chain on reported local centers}

We summarize the stochastic network evolution by tracking only the reported local center.
The state space is $S=V\cup\{\perp\}$, where $\perp$ is an absorbing state.

If the current state is $i\in V$, we generate one new realized working graph anchored at $i$.
If this realization has no valid continuation, the chain moves to $\perp$.
Otherwise it moves to the tie-broken local center of the realized graph.

Formally, let $(\mathcal H,\mathscr H)$ be the measurable space of realized working graphs.
Fix the following one-step simulator throughout this section:
for each $i\in V$, there exists $(\Omega_i,\mathcal F_i,\mathbb P_i), \Pi_i:\Omega_i\to\mathcal H,$ and $\mathcal A_i\in\mathcal F_i$.

Here $\Pi_i(\omega)$ is the realized working graph produced from anchor $i$, and $\mathcal A_i$ is the event of no valid continuation. In particular, $\{|K(i;\Pi_i(\omega))|<k_{\min}\}\subseteq \mathcal A_i$. On $\Omega_i\setminus\mathcal A_i$, the next reported node is $c_{\mathrm{loc}}(i;\Pi_i(\omega))$.

This gives an absorbing Markov chain (AMC) on $S$ with transition matrix
\begin{equation}\label{eq:amc}
P=\begin{bmatrix}
Q & r \\
0 & 1
\end{bmatrix},
\qquad
Q\in[0,1]^{n\times n},\quad r\in[0,1]^n,\quad Q\mathbf{1}+r=\mathbf{1}.
\end{equation}
$Q$ is the transient block on $V$, and $r_i=P_{i\perp}$ is the absorption probability from $i$. Equivalently, for $i,j\in V$, $P_{ij} = \mathbb P_i\!\big(\omega\notin\mathcal A_i,\ c_{\mathrm{loc}}(i;\Pi_i(\omega))=j\big), P_{i\perp} = \mathbb P_i(\mathcal A_i)$.

The Markov property means that, conditional on the current reported node, the next step does not depend on earlier history. In the present setup, for every $t\ge 0$,
\[
\mathbb P(X_{t+1}=j\mid X_t=i, X_{0:t-1})
=
\mathbb P_i\!\big(\omega\notin\mathcal A_i,\ c_{\mathrm{loc}}(i;\Pi_i(\omega))=j\big),
\]
and $\mathbb P(X_{t+1}=\perp\mid X_t=i, X_{0:t-1}) = \mathbb P_i(\mathcal A_i)$.

Assume $\rho(Q)<1$. Note that this holds automatically whenever every transient state has strictly positive absorption probability, e.g., under an exogenous per-step stopping probability $\alpha>0$, since then $\|Q\|_\infty\le 1-\alpha<1$.

Then, absorption occurs almost surely and the expected absorption time is finite.
Let $X_0\sim s\in\Delta^{n-1}$, and define the absorption time
\begin{equation}\label{eq:absorption-time}
T:=\inf\{t\ge 0:X_t=\perp\}.
\end{equation}
The fundamental matrix is
\begin{equation}\label{eq:fundamental}
N:=(I-Q)^{-1}=\sum_{t\ge 0}Q^t.
\end{equation}
Hence the expected pre-absorption visit counts are
\begin{equation}\label{eq:visits}
\mu(s):=sN,
\end{equation}
and the expected pre-absorption length is
\begin{equation}\label{eq:expected-length}
\mathbb E_s[T]=\mu(s)\mathbf 1.
\end{equation}
We define \textbf{absorbing-frequency centrality} (AFC) as in \eqref{eq:afc-def}.
\begin{equation}\label{eq:afc-def}
b(s):=\frac{\mu(s)}{\mu(s)\mathbf 1}\in\Delta^{n-1}.
\end{equation}
Thus $b_v(s)$ is the fraction of pre-absorption steps (including $t=0$) spent at node $v$.

\begin{remark}[Temporal dependence and state augmentation]
If the process $(X_t)$ is not Markov because the network has additional memory beyond the current reported node, one can enlarge the state space by adding an environment variable $Z_t$. The resulting process is on $(V\times\mathcal Z)\cup\{\perp\}$, and the same occupation and absorption constructions still apply on this enlarged transient space. When $\mathcal Z$ is finite, the matrix formulae above remain unchanged.
\end{remark}

\subsection{Why the AMC state is a single node}
\label{subsec:single-node-uniqueness}

The chain is node-valued, so each valid realization must produce a single next node.
Before tie-breaking, however, a realized graph may have several best candidates.

For $i\in V$ and a valid realized graph $H$ (that is, $|K(i;H)|\ge k_{\min}$), define the unresolved local co-maximizer set $\Gamma^{\max}(i;H) :=
\arg\max_{v\in K(i;H)} \mathrm{B}_v(H)
\subseteq K(i;H)$.

This is the local Top-$1$ output before tie-breaking. More generally, the discussion below applies to any nonempty candidate set $\Gamma(i;H)\subseteq V$, for example $\Gamma^{\max}(i;H)$ or $\mathrm{Top}^{\mathrm{loc}}_k(i;H)$. Fix $i\in V$ and such a candidate correspondence $\Gamma(i;\cdot)$ on valid realizations. Assume $\Gamma(i;\cdot)$ is measurable in the sense that $\{H\in\mathcal H:\ u\in \Gamma(i;H)\}\in\mathscr H$ for every $u\in V$. An admissible selector is a measurable map $g_i:\mathcal H\to V$ such that $g_i(H)\in \Gamma(i;H)$ for every valid $H$. Each admissible selector induces a node-valued row by $P^{g_i}_{ij} := \mathbb P_i\!\big(\omega\notin\mathcal A_i,\ g_i(\Pi_i(\omega))=j\big), j\in V$, and $P^{g_i}_{i\perp}:=1-\sum_{j\in V}P^{g_i}_{ij}$.

\begin{proposition}[Selector-independence and nonuniqueness]
\label{prop:single-node-uniqueness}
Fix $i\in V$ and a measurable candidate correspondence $\Gamma(i;\cdot)$.

\medskip

\noindent (i) If every two admissible selectors agree $\mathbb P_i$-almost surely on $\Omega_i\setminus\mathcal A_i$, then they induce the same node-valued row.

\medskip

\noindent
(ii) If $\mathbb P_i\!\big(|\Gamma(i;\Pi_i(\omega))|\ge 2,\ \omega\notin\mathcal A_i\big)>0$, then there exist two admissible selectors that induce different node-valued rows.
\end{proposition}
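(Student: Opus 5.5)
For part (i), the plan is a direct comparison of the induced rows. Let $g_i,g_i'$ be admissible selectors that agree $\mathbb P_i$-almost surely on $\Omega_i\setminus\mathcal A_i$. Set $D:=\{\omega\notin\mathcal A_i:\ g_i(\Pi_i(\omega))\neq g_i'(\Pi_i(\omega))\}$, so that $\mathbb P_i(D)=0$. For each $j\in V$, the events $\{\omega\notin\mathcal A_i,\ g_i(\Pi_i(\omega))=j\}$ and $\{\omega\notin\mathcal A_i,\ g_i'(\Pi_i(\omega))=j\}$ differ only by a subset of $D$. Hence $P^{g_i}_{ij}=P^{g_i'}_{ij}$ for all $j\in V$, and subtracting from one gives equality of the $\perp$ entries as well. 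Measurability of these events follows because $g_i$ is measurable and $\Pi_i$ is $\mathcal F_i/\mathscr H$-measurable, which I take to be implicit in the simulator setup.

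For part (ii), I would construct two explicit selectors. Fix a total order on $V$, such as the index order used by the tie-breaker. Let $g_i(H)$ be the minimum element of $\Gamma(i;H)$ and $g_i'(H)$ the maximum element. On invalid $H$, set both to some fixed node, since only valid realizations matter.

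Measurability of $g_i$ follows from writing
\[
\{g_i=u\}=\{u\in\Gamma\}\cap\bigcap_{v<u}\{v\notin\Gamma\}.
\]
This is a finite Boolean combination of the sets assumed measurable, intersected with the validity set. That set, $\{|K(i;H)|\ge k_{\min}\}$, I will treat as measurable in $\mathscr H$ (or else work on $\Omega_i$ directly with $\mathcal A_i$). The same argument applies to $g_i'$.

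Now let $E:=\{\omega\notin\mathcal A_i:\ |\Gamma(i;\Pi_i(\omega))|\ge 2\}$, which has positive probability by hypothesis. On $E$ the minimum is strictly less than the maximum, so
\[
\sum_{j\in V} j\,P^{g_i}_{ij}=\mathbb E_i\!\big[g_i(\Pi_i)\mathbf 1_{\Omega_i\setminus\mathcal A_i}\big]
<\mathbb E_i\!\big[g_i'(\Pi_i)\mathbf 1_{\Omega_i\setminus\mathcal A_i}\big]=\sum_{j\in V} j\,P^{g_i'}_{ij}.
\]
The inequality holds because $g_i\le g_i'$ everywhere, with strict inequality on $E$, a set of positive measure. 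Two rows with different index-weighted sums cannot coincide.

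The main obstacle I expect is the measurability bookkeeping. Specifically, I need that validity and $\Gamma$ interact measurably, and that the events defining the rows lie in $\mathcal F_i$. The strict-expectation argument for the differing rows is routine once measurability is settled. An alternative route for (ii) is to pick $j$ as the smallest node with $\mathbb P_i(E\cap\{\min\Gamma=j\})>0$, and show that the $j$-entry strictly drops when switching to $\max$. The expectation argument avoids that case work, so I would use it instead.
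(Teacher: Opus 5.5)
Your proof is correct. Part (i) matches the paper's argument. For part (ii) you take a different route from the paper.

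The paper first uses finiteness of $V$ to find distinct $u,v$ that lie together in $\Gamma(i;\Pi_i(\omega))$ with positive probability. It then builds two ``preference'' selectors $g_i^{(u)}$ and $g_i^{(v)}$: each outputs its preferred node when that node is a candidate, and otherwise falls back to $\min_\prec\Gamma$. Via the inclusion $F_u^{(v)}\subseteq F_u^{(u)}$, strict on the co-membership event, it shows that the $u$-entries of the two rows differ.

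You instead take the two extreme selectors $\min\Gamma$ and $\max\Gamma$. You separate the rows with the linear functional $q\mapsto\sum_j j\,q_j$: since $\min\le\max$ pointwise on valid realizations, with a gap of at least $1$ on $E$, the index-weighted sums differ by at least $\mathbb P_i(E)>0$.

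What each route buys:
\begin{itemize}
\item Yours is shorter. It needs no pigeonhole step over pairs and no event-inclusion bookkeeping.
\item The paper's exhibits an explicit coordinate $u$ at which the rows differ. It also shows that changing the preference for a single node, relative to the canonical tie-breaker, already changes the row. Your functional argument only certifies that \emph{some} coordinate differs, though your mentioned alternative (the smallest $j$ with positive mass) would recover an explicit coordinate.
\end{itemize}

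On the measurability point you flag, you need no separate assumption that $\{|K(i;H)|\ge k_{\min}\}\in\mathscr H$. The set where the candidate set is nonempty, $\bigcup_{u\in V}\{H:\ u\in\Gamma(i;H)\}$, is measurable directly from the hypothesis. So you can define both selectors as a fixed node off this set, and the rest of your finite Boolean-combination argument goes through unchanged.
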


\begin{proof}
Part (i) is immediate:
if $g_i$ and $\tilde g_i$ agree $\mathbb P_i$-almost surely on $\Omega_i\setminus\mathcal A_i$, then for every $j\in V$, $\mathbf 1\{\omega\notin\mathcal A_i,\ g_i(\Pi_i(\omega))=j\}
= \mathbf 1\{\omega\notin\mathcal A_i,\ \tilde g_i(\Pi_i(\omega))=j\}$ holds $\mathbb P_i$-almost surely, so $P^{g_i}_{ij}=P^{\tilde g_i}_{ij}$.

For part (ii), let $E_i:=\{\omega\notin\mathcal A_i:\ |\Gamma(i;\Pi_i(\omega))|\ge 2\}$. By assumption, $\mathbb P_i(E_i)>0$. Since $V$ is finite, there exist distinct $u,v\in V$ such that $E_{i,u,v} := \{\omega\in E_i:\ u,v\in\Gamma(i;\Pi_i(\omega))\}$ also has positive $\mathbb P_i$-measure.

Now, fix a strict precedence order $\prec$ on $V$, and define the baseline selector $h_i(H):=\min_{\prec}\Gamma(i;H)$. Because $V$ is finite and the membership events $\{H:u\in\Gamma(i;H)\}$ are measurable, $h_i$ is also measurable. We define two admissible selectors:
$$
g_i^{(u)}(H)
:=
\begin{cases}
u, & u\in\Gamma(i;H),\\
h_i(H), & u\notin\Gamma(i;H),
\end{cases}
\qquad
g_i^{(v)}(H)
:=
\begin{cases}
v, & v\in\Gamma(i;H),\\
h_i(H), & v\notin\Gamma(i;H).
\end{cases}
$$
Both are measurable and satisfy $g_i^{(u)}(H),g_i^{(v)}(H)\in\Gamma(i;H)$ on valid realizations.

Consider the events
\[
F_u^{(u)}
:=
\{\omega\notin\mathcal A_i:\ g_i^{(u)}(\Pi_i(\omega))=u\},
\qquad
F_u^{(v)}
:=
\{\omega\notin\mathcal A_i:\ g_i^{(v)}(\Pi_i(\omega))=u\}.
\]
We claim that $F_u^{(v)}\subseteq F_u^{(u)}$.
Indeed, if $\omega\in F_u^{(v)}$, then $g_i^{(v)}(\Pi_i(\omega))=u$.
This is impossible when $v\in\Gamma(i;\Pi_i(\omega))$, because in that case $g_i^{(v)}$ would output $v$.
Hence $v\notin\Gamma(i;\Pi_i(\omega))$, so $g_i^{(v)}(\Pi_i(\omega))=h_i(\Pi_i(\omega))=u$.
In particular, $u\in\Gamma(i;\Pi_i(\omega))$, and therefore $g_i^{(u)}(\Pi_i(\omega))=u$.
So $F_u^{(v)}\subseteq F_u^{(u)}$.

Moreover, the inclusion is strict on $E_{i,u,v}$: for every $\omega\in E_{i,u,v}$, $g_i^{(u)}(\Pi_i(\omega))=u, g_i^{(v)}(\Pi_i(\omega))=v\neq u$. Thus $E_{i,u,v}\subseteq F_u^{(u)}\setminus F_u^{(v)}$, and since $\mathbb P_i(E_{i,u,v})>0$, $\mathbb P_i(F_u^{(u)})>\mathbb P_i(F_u^{(v)})$. Equivalently, $P^{g_i^{(u)}}_{iu}>P^{g_i^{(v)}}_{iu}$. So the two induced rows are different.
\end{proof}

\begin{corollary}[Deterministic tie-breaking gives a canonical row]
\label{cor:tie-break-unique-row}
Fix a strict total precedence order $\prec$ on $V$ (for example, increasing node index). Then
$$
c_{\mathrm{loc}}(i;H):=\min_{\prec}\Gamma^{\max}(i;H)
$$
is an admissible selector. It therefore defines a canonical node-valued row, and hence a well-defined node-valued AMC on $V\cup\{\perp\}$.
\end{corollary}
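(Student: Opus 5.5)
The plan has three parts: check that $c_{\mathrm{loc}}(i;\cdot)=\min_{\prec}\Gamma^{\max}(i;\cdot)$ is an admissible selector, show that the row it induces is well defined, and assemble the rows into a stochastic matrix of the form \eqref{eq:amc}.

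\emph{Admissibility.} Two properties are needed.

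First, for every valid $H$ the set $\Gamma^{\max}(i;H)$ must be nonempty. On valid realizations we have $|K(i;H)|\ge k_{\min}\ge 1$, so $K(i;H)$ is a nonempty finite set. A maximum of $\mathrm{B}_v(H)$ over it therefore exists, and $\min_{\prec}\Gamma^{\max}(i;H)$ is a well-defined element of $\Gamma^{\max}(i;H)$.

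Second, the map must be measurable. Here I would reuse the argument from the proof of Proposition~\ref{prop:single-node-uniqueness}(ii) for the baseline selector $h_i$. Since $V$ is finite and $\prec$ is total, for each $u\in V$ we can write
$$\{H:\min_{\prec}\Gamma^{\max}(i;H)=u\}=\{H:u\in\Gamma^{\max}(i;H)\}\cap\bigcap_{w\prec u}\{H:w\notin\Gamma^{\max}(i;H)\}.$$
This is a finite Boolean combination of the membership events. Those events are measurable by the standing assumption on candidate correspondences, applied to $\Gamma=\Gamma^{\max}$.

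If one wants to justify that assumption rather than take it as given, the route is as follows. Each $\mathrm{B}_v(H)$ is a finite sum of ratios of path counts. These counts are determined by the edge set and by finitely many comparisons of path weights, so $\mathrm{B}_v$ is measurable whenever the realized graph law is defined on a $\sigma$-algebra generating availability and weights. The set $\{u\in\Gamma^{\max}(i;H)\}$ is then a finite union, over possible components $K\ni i,u$, of the events $\{K(i;H)=K\}\cap\bigcap_{v\in K}\{\mathrm{B}_u\ge \mathrm{B}_v\}$. This justification is the only step I expect to need any care, and in the paper's framing it is covered by the measurability assumption.

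\emph{Canonical row.} The selector is defined pointwise by a fixed rule, so it is a single specified admissible selector, and the row $P^{c_{\mathrm{loc}}}_{i\cdot}$ is fixed without any further choice. It matches the definition $P_{ij}=\mathbb P_i(\omega\notin\mathcal A_i,\ c_{\mathrm{loc}}(i;\Pi_i(\omega))=j)$. By Proposition~\ref{prop:single-node-uniqueness}(i), any selector that agrees with it almost surely induces the same row, so the row depends only on the rule $\prec$ and not on a representative. This is the sense in which the row is canonical.

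\emph{Stochastic matrix.} The events $\{\omega\notin\mathcal A_i,\ c_{\mathrm{loc}}=j\}$, $j\in V$, together with $\mathcal A_i$ partition $\Omega_i$. Hence the entries are nonnegative and $\sum_j P_{ij}+P_{i\perp}=1$. Doing this for each $i\in V$, and appending the absorbing row for $\perp$, gives a stochastic matrix of the form \eqref{eq:amc}. That matrix defines a node-valued AMC on $V\cup\{\perp\}$ with the stated one-step Markov property.
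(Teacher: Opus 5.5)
Your proposal is correct and follows the paper's implicit argument: with $\Gamma=\Gamma^{\max}$, the map $c_{\mathrm{loc}}(i;\cdot)=\min_{\prec}\Gamma^{\max}(i;\cdot)$ is exactly the baseline selector $h_i$ from the proof of Proposition~\ref{prop:single-node-uniqueness}(ii), which is measurable because $V$ is finite and the membership events are measurable, and the induced rows are normalized as in Lemma~\ref{lem:row-normalization}. Your extra sketch of why $\Gamma^{\max}$ is measurable goes beyond the paper, which simply takes this as a standing assumption.
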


\begin{proposition}
\label{prop:set-state-not-unique}
Let $A\subseteq V$ with $|A|\ge 2$.
If there exist $u,v\in A$ such that $P_{u\cdot}\neq P_{v\cdot}$, then $A$ does not determine a unique next-step row on $V\cup\{\perp\}$. For every $\lambda\in[0,1]$, $R^{(\lambda)}_{A\cdot}:=\lambda P_{u\cdot}+(1-\lambda)P_{v\cdot}$ is a valid probability row. Hence propagating the whole candidate set requires extra structure, such as a selector, explicit weights on the elements of $A$, or a larger state space.
\end{proposition}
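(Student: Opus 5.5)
The plan is to prove the three assertions of Proposition~\ref{prop:set-state-not-unique} in order: validity of each mixed row, non-uniqueness of the induced row, and the interpretive consequence.

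\textbf{Validity of the mixed rows.} Each $P_{u\cdot}$ and $P_{v\cdot}$ is a probability row on $V\cup\{\perp\}$. By \eqref{eq:amc} and Lemma~\ref{lem:row-normalization}, it has nonnegative entries and $\sum_{j\in V}P_{uj}+P_{u\perp}=1$, since $Q\mathbf 1+r=\mathbf 1$. For $\lambda\in[0,1]$, nonnegativity of $R^{(\lambda)}_{A\cdot}$ is immediate, because it is a convex combination of nonnegative entries. Its row sum is $\lambda\cdot 1+(1-\lambda)\cdot 1=1$ by linearity. So every $R^{(\lambda)}_{A\cdot}$ lies in the probability simplex on $V\cup\{\perp\}$.

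\textbf{Non-uniqueness.} The key observation is that $\lambda\mapsto R^{(\lambda)}_{A\cdot}$ is injective whenever $P_{u\cdot}\neq P_{v\cdot}$. Indeed,
\[
R^{(\lambda)}_{A\cdot}-R^{(\lambda')}_{A\cdot}=(\lambda-\lambda')\,(P_{u\cdot}-P_{v\cdot}),
\]
and the vector $P_{u\cdot}-P_{v\cdot}$ is nonzero by hypothesis. So distinct $\lambda$ give distinct rows, and we obtain a continuum of valid candidate rows.

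Each of these rows is a legitimate next-step law for the set state $A$. It is the law obtained by first selecting $u$ with probability $\lambda$ and $v$ otherwise, then transitioning according to the selected node's row. This makes precise that ``the set $A$'' carries no canonical way to weight its members. Any rule assigning a row to $A$ must implicitly choose a weighting, for instance a degenerate selector ($\lambda\in\{0,1\}$), explicit weights on $A$, or a state that retains more than $A$. Hence the set alone does not determine a unique row.

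\textbf{Main obstacle.} The delicate step is the interpretation of ``does not determine a unique row.'' The proposition gives no formal specification of a set-valued transition rule that could be violated. I would therefore frame the claim as follows: the only information available from the node-valued kernel is the collection of rows $\{P_{w\cdot}:w\in A\}$. Every convex combination of these rows is consistent with that information, and at least two distinct ones exist by the injectivity argument. This mirrors part (ii) of Proposition~\ref{prop:single-node-uniqueness}, with the selector replaced by a mixing weight.
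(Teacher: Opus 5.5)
Your proposal is correct and follows essentially the same approach as the paper. Each $R^{(\lambda)}_{A\cdot}$ is a valid row because it is a convex combination of probability rows, and $P_{u\cdot}\neq P_{v\cdot}$ makes the row depend nontrivially on $\lambda$; your identity $R^{(\lambda)}_{A\cdot}-R^{(\lambda')}_{A\cdot}=(\lambda-\lambda')(P_{u\cdot}-P_{v\cdot})$ is simply a more explicit form of that one-line observation in the paper.
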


\begin{proof}
Each $R^{(\lambda)}_{A\cdot}$ is a convex combination of probability rows, so it is again a probability row.
Since $P_{u\cdot}\neq P_{v\cdot}$, at least one coordinate differs, and therefore $R^{(\lambda)}_{A\cdot}$ depends nontrivially on $\lambda$.
\end{proof}

Proposition~\ref{prop:set-state-not-unique} shows that unresolved candidate sets cannot in general be propagated within the same node-valued AMC. Unless all members of the set have identical outgoing rows, the set does not determine a unique transition law. Thus, some fixed selection rule is required: an unresolved candidate set does not by itself determine a transition law. Deterministic tie-breaking is a sufficient and easily implementable choice, but, as the next remark records, it is not the only one.

\begin{remark}[Randomized tie-breaking and label-invariance]\label{rem:randomized-tiebreak}
Proposition~\ref{prop:set-state-not-unique} shows that propagating an unresolved set is ambiguous; it does not show that the selector must be deterministic. A randomized selector that draws uniformly from the candidate set also induces a unique node-valued row,
\[
P_{ij}\;=\;\mathbb E_i\!\left[\mathbf 1\{\omega\notin\mathcal A_i\}\,
\frac{\mathbf 1\{j\in\Gamma(i;\Pi_i(\omega))\}}{|\Gamma(i;\Pi_i(\omega))|}\right],
\qquad j\in V,
\]
and hence a well-defined AMC, since the auxiliary randomization can be absorbed into the one-step probability space $(\Omega_i,\mathcal F_i,\mathbb P_i)$. The two conventions differ in one important respect: the $\min_\prec$ rule makes AFC depend on the node labeling, because in realizations with symmetric co-maximizers (e.g., an isolated clique, where all betweenness values tie) the rule systematically routes occupancy to low-index nodes. The uniform rule is equivariant under relabeling and removes this artifact, at the cost of a small amount of additional simulator randomness. 
\end{remark}

Proposition~\ref{prop:single-node-uniqueness} does not state that only the highest-ranked node is meaningful. It only states that an unresolved candidate set does not by itself specify a unique node-valued AMC. For any fixed rank $r$, one may instead define $c_{\mathrm{loc}}^{(r)}(i;H):=\pi_{K(i;H),H}(r)$ whenever $r\le |K(i;H)|$, and study the corresponding node-valued process. Alternatively, one may keep the full Top-$k$ output and work with a set-valued or tuple-valued state space.

\subsection{Well-posedness: order/branch invariance and row normalization}
\label{subsec:wellposed}

We may reach the same realized graph through a different reveal order. 
The compressed transition row should therefore depend only on the law of the realized graph and on the absorption rule, not on how that graph came to be generated.

\begin{definition}[Pushforward of the one-step simulator]
\label{def:pushforward}
First, fix $i\in V$. Then, the pushforward distribution of the realized graph is $\nu_i:=\mathbb P_i\circ\Pi_i^{-1}$.
\end{definition}

\begin{figure}[t]
    \centering
    \includegraphics[width=0.8\linewidth]{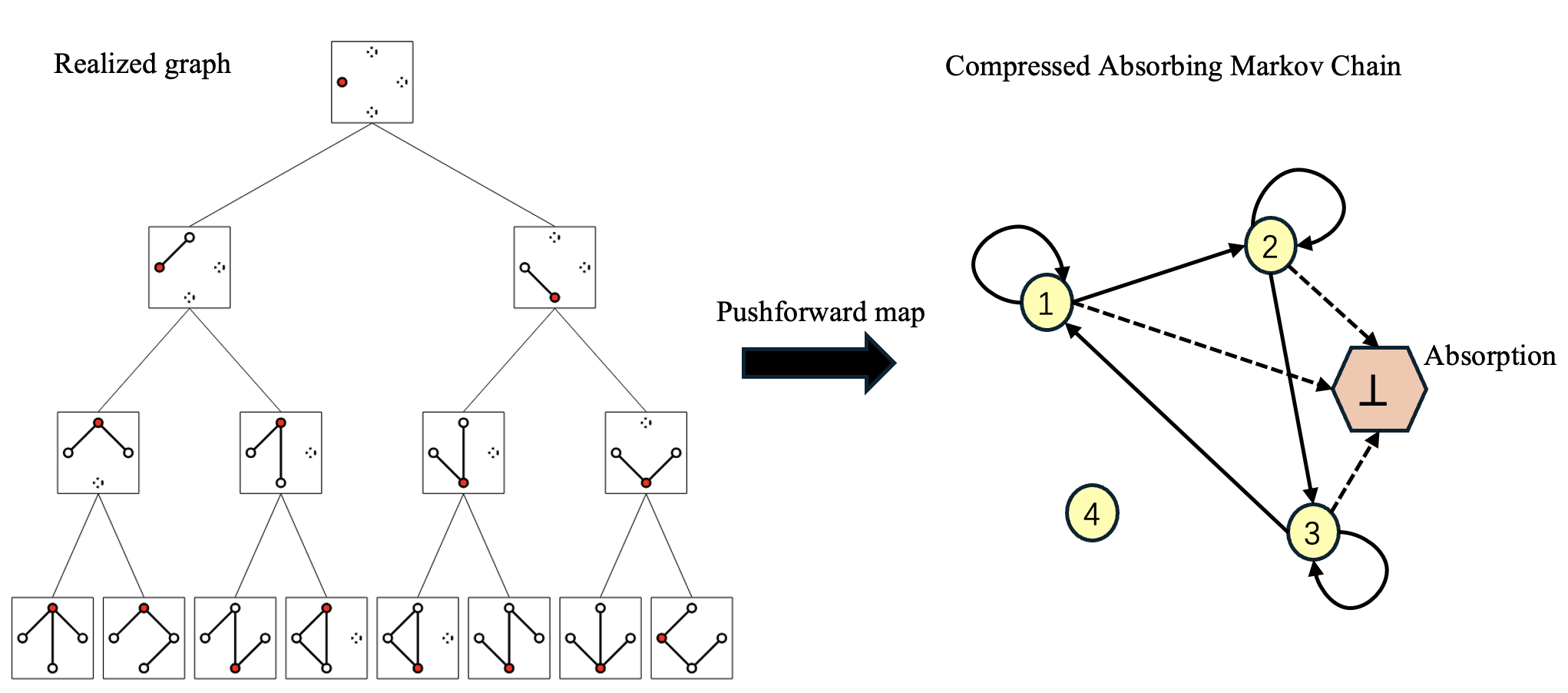}
    \caption{Illustration of the pushforward compression from the branching one-step simulator to the node-valued absorbing Markov chain. For a fixed anchor $i$, each branch yields a realized working graph $H=\Pi_i(\omega)$. Valid realizations are mapped to the tie-broken local center $c_{\mathrm{loc}}(i;H)\in V$, while realizations in $\mathcal A_i$ are mapped to the absorbing state $\perp$. Carrying out this compression for each anchor yields the compressed AMC on $V\cup\{\perp\}$. Solid arrows denote transient transitions and dashed arrows denote absorption; only selected positive-probability transitions are shown.}
    \label{fig:pushforward-amc}
\end{figure}

Figure~\ref{fig:pushforward-amc} gives a schematic view of the compression underlying Definition~\ref{def:pushforward}. For a fixed current anchor $i$, the internal randomness of the one-step simulator can be represented by a branching realization tree, whose terminal branches produce realized working graphs $H=\Pi_i(\omega)$. The detailed branch structure is then collapsed by sending each valid realization to its reported local center $c_{\mathrm{loc}}(i;H)$, and each invalid realization in $\mathcal A_i$ to the absorbing state $\perp$. The resulting description is node-valued and forgets the reveal order itself. The next proposition makes this precise by showing that the compressed transition law is determined by the pushforward law $\nu_i$ of realized graphs together with the conditional absorption function $a_i(\cdot)$, rather than by the particular branching representation used to generate the same graph law.

\begin{proposition}[Order/branch invariance of the compressed step]
\label{prop:order-invariance}
For $i,j\in V$, define
\begin{equation}\label{eq:def-Pij}
P_{ij}
:=
\mathbb P_i\!\big(\omega\notin\mathcal A_i,\ c_{\mathrm{loc}}(i;\Pi_i(\omega))=j\big),
\qquad
P_{i\perp}
:=
\mathbb P_i(\mathcal A_i).
\end{equation}
Let $a_i(H):=\mathbb E_i[\mathbf 1_{\mathcal A_i}\mid \Pi_i=H]$, that is, a measurable version of the conditional absorption probability given the realized graph. Then the row $(P_{ij})_{j\in V\cup\{\perp\}}$ is determined by $\nu_i$ together with $a_i(\cdot)$. Hence any two generation procedures that induce the same $\nu_i$ and the same conditional absorption function $a_i(\cdot)$ give the same AMC row for $i$. If $\mathcal A_i\in\sigma(\Pi_i)$, then $a_i(H)\in\{0,1\}$ $\nu_i$-almost surely, and the row depends only on $\nu_i$ and the graph-based absorption rule.
\end{proposition}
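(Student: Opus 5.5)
The plan is to write every entry of the row as an integral against $\nu_i$ by conditioning on $\Pi_i$. The key observation is that $c_{\mathrm{loc}}(i;\cdot)$ is a deterministic, measurable function of the realized graph $H$. It is the tie-broken argmax of finitely many betweenness values, each a measurable function of the edge set and weights, and by Corollary~\ref{cor:tie-break-unique-row} it is an admissible selector. Hence, for each $j\in V$, the set $C_j:=\{H\in\mathcal H:\ c_{\mathrm{loc}}(i;H)=j\}$ lies in $\mathscr H$, and $\{c_{\mathrm{loc}}(i;\Pi_i)=j\}=\Pi_i^{-1}(C_j)\in\sigma(\Pi_i)$.

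For $j\in V$, I would write $P_{ij}=\mathbb E_i[\mathbf 1_{\Omega_i\setminus\mathcal A_i}\,\mathbf 1_{C_j}(\Pi_i)]$. The second factor is $\sigma(\Pi_i)$-measurable, so the tower property lets us pull it out of the conditional expectation:
\[
P_{ij}=\mathbb E_i\big[\mathbf 1_{C_j}(\Pi_i)\,(1-a_i(\Pi_i))\big]=\int_{\mathcal H}\mathbf 1_{C_j}(H)\,\big(1-a_i(H)\big)\,\nu_i(dH).
\]
The last equality is the change-of-variables formula for the pushforward $\nu_i=\mathbb P_i\circ\Pi_i^{-1}$, applied to the bounded measurable function $H\mapsto \mathbf 1_{C_j}(H)(1-a_i(H))$. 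In the same way, $P_{i\perp}=\mathbb E_i[\mathbb E_i[\mathbf 1_{\mathcal A_i}\mid\Pi_i]]=\int a_i\,d\nu_i$.

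Both integrals depend only on $\nu_i$ and $a_i$; the family $C_j$ is fixed by the graph-level definitions and not by the simulator. So two procedures with the same $(\nu_i,a_i)$ yield identical rows. There is one point to check: $a_i$ is only defined $\nu_i$-a.e. Any two versions agree $\nu_i$-a.s., so the integrals do not depend on the version chosen, and the statement should be read as saying that the pair is determined up to $\nu_i$-null sets.

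For the final claim, suppose $\mathcal A_i\in\sigma(\Pi_i)$. Then $\mathcal A_i=\Pi_i^{-1}(D)$ for some $D\in\mathscr H$; this is the Doob--Dynkin representation of a $\sigma(\Pi_i)$-measurable indicator. Consequently $\mathbb E_i[\mathbf 1_{\mathcal A_i}\mid\Pi_i]=\mathbf 1_D(\Pi_i)$ a.s., so $a_i=\mathbf 1_D$ holds $\nu_i$-a.s. and takes values in $\{0,1\}$. The row then reduces to $P_{ij}=\nu_i(C_j\setminus D)$ and $P_{i\perp}=\nu_i(D)$, which is a function of $\nu_i$ and the graph-based rule $D$ alone. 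One example is $D\supseteq\{H:|K(i;H)|<k_{\min}\}$, which is measurable because component membership is a finite function of the edge set.

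The main obstacle is really the measurability of $C_j$. It requires that the betweenness map $H\mapsto \mathrm B_v(H)$ be measurable on $(\mathcal H,\mathscr H)$. I would argue this by noting that, for a fixed finite $V$, shortest-path counts are piecewise-constant functions of the edge set and of finitely many linear comparisons among path weights; such functions are Borel. I would state this as a standing assumption on $\mathscr H$ if the paper leaves it implicit. Everything else is the tower property plus change of variables.
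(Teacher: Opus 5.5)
Your proposal is correct and follows the paper's own argument: use the tower property to replace $\mathbf 1_{\Omega_i\setminus\mathcal A_i}$ by $1-a_i(\Pi_i)$, then change variables under $\nu_i$, which gives the same integral formulas for $P_{ij}$ and $P_{i\perp}$. Your additional points are sound supplements to that same route: the measurability of $C_j$, the independence from the chosen version of $a_i$, and the Doob--Dynkin argument for the case $\mathcal A_i\in\sigma(\Pi_i)$, which the paper states but does not explicitly prove.
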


\begin{proof}
Since $c_{\mathrm{loc}}(i;\Pi_i(\omega))$ depends only on the realized graph $H=\Pi_i(\omega)$, the tower property gives
\begin{align*}
P_{ij} =
\mathbb E_i\!\Big[
\mathbf 1\{c_{\mathrm{loc}}(i;\Pi_i)=j\}\,
\mathbf 1\{\omega\notin\mathcal A_i\}
\Big] 
&=
\mathbb E_i\!\Big[
\mathbf 1\{c_{\mathrm{loc}}(i;\Pi_i)=j\}\,
\mathbb E_i[\mathbf 1_{\Omega_i\setminus\mathcal A_i}\mid \Pi_i]
\Big]\\ 
&=
\int_{\mathcal H}
\mathbf 1\{c_{\mathrm{loc}}(i;H)=j\}\,
\bigl(1-a_i(H)\bigr)\,d\nu_i(H),
\end{align*}
where $a_i(H)=\mathbb P_i(\mathcal A_i\mid \Pi_i=H)\in[0,1]$. Thus $P_{ij}$ depends only on $\nu_i$ and $a_i(\cdot)$. Similarly, $P_{i\perp}=\int_{\mathcal H} a_i(H)\,d\nu_i(H)$.
\end{proof}

\begin{lemma}[Row normalization]
\label{lem:row-normalization}
For each $i\in V$, $P_{ii}+\sum_{j\in V,\ j\neq i}P_{ij}+P_{i\perp}=1$.
\end{lemma}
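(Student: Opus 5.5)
The plan is to show that the events defining the row entries in \eqref{eq:def-Pij} form a partition of $\Omega_i$ (up to measurability), and then apply countable (here finite) additivity of $\mathbb P_i$. Concretely, I would consider the $n+1$ events
\[
\mathcal A_i,\qquad F_{ij}:=\{\omega\notin\mathcal A_i:\ c_{\mathrm{loc}}(i;\Pi_i(\omega))=j\},\quad j\in V.
\]
Since $P_{ii}+\sum_{j\neq i}P_{ij}=\sum_{j\in V}P_{ij}$, it suffices to show $\sum_{j\in V}\mathbb P_i(F_{ij})+\mathbb P_i(\mathcal A_i)=1$.

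The first step is measurability of each $F_{ij}$. By Corollary~\ref{cor:tie-break-unique-row}, $c_{\mathrm{loc}}(i;\cdot)=\min_\prec\Gamma^{\max}(i;\cdot)$ is an admissible selector, hence a measurable map $\mathcal H\to V$. Composed with the measurable $\Pi_i$, the preimage $\{c_{\mathrm{loc}}(i;\Pi_i)=j\}$ lies in $\mathcal F_i$. Intersecting with $\Omega_i\setminus\mathcal A_i\in\mathcal F_i$ gives $F_{ij}\in\mathcal F_i$.

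The second step is disjointness and covering. The sets $F_{ij}$ are pairwise disjoint over $j$ because $c_{\mathrm{loc}}$ is single-valued. Each $F_{ij}$ is disjoint from $\mathcal A_i$ by construction. For coverage, take $\omega\notin\mathcal A_i$. Then $|K(i;\Pi_i(\omega))|\ge k_{\min}\ge 1$, since the small-component event is contained in $\mathcal A_i$. So $K(i;\Pi_i(\omega))$ is a nonempty finite set, and the argmax $\Gamma^{\max}$ is nonempty. Its $\prec$-minimum is therefore some $j\in V$, placing $\omega$ in $F_{ij}$. Hence $\Omega_i=\mathcal A_i\sqcup\bigsqcup_{j\in V}F_{ij}$, and additivity with $\mathbb P_i(\Omega_i)=1$ yields the identity. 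Equivalently, the tower-property form from the proof of Proposition~\ref{prop:order-invariance} gives $\int[\sum_j\mathbf 1\{c_{\mathrm{loc}}=j\}(1-a_i)+a_i]\,d\nu_i=\int 1\,d\nu_i=1$.

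The only step needing care is the coverage claim: that $c_{\mathrm{loc}}$ is defined, i.e.\ the argmax is nonempty, on every valid realization. This rests on $V$ being finite and the anchor's component being nonempty (it always contains $i$). The lemma does not require $k_{\min}\ge 1$ for this, since $i\in K(i;H)$ always holds. I do not expect any real obstacle beyond stating these measurability and nonemptiness facts explicitly.
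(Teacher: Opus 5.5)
Your proposal is correct and follows the paper's proof: both partition $\Omega_i$ into $\mathcal A_i$ and the events $\{\omega\notin\mathcal A_i,\ c_{\mathrm{loc}}(i;\Pi_i(\omega))=j\}$, $j\in V$, and then apply additivity together with $\mathbb P_i(\Omega_i)=1$. Your added checks make explicit what the paper's one-line argument assumes: measurability via Corollary~\ref{cor:tie-break-unique-row}, and that the argmax is nonempty because $i\in K(i;H)$.
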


\begin{proof}
The events $E_{ij}:=\{\omega\notin\mathcal A_i,\ c_{\mathrm{loc}}(i;\Pi_i(\omega))=j\}, j\in V$, together with $E_{i\perp}:=\mathcal A_i$, are pairwise disjoint and their union is $\Omega_i$. Therefore,
$$
1=\mathbb P_i(\Omega_i)
=\sum_{j\in V}\mathbb P_i(E_{ij})+\mathbb P_i(E_{i\perp})
=\sum_{j\in V}P_{ij}+P_{i\perp}.
$$
\end{proof}
\subsection{General equivalence via uniform pre-absorption sampling}
\label{subsec:uniform-step}
We now give a trajectory interpretation of AFC.
The idea is simple: AFC is the distribution of a uniformly chosen pre-absorption step, but under a length-biased law on sample paths.

\begin{definition}[Length-biased uniform pre-absorption step]
\label{def:lengthbiased}
Let $(X_t)_{t\ge 0}$ be any process on $V\cup\{\perp\}$ with absorption time $T$ as in \eqref{eq:absorption-time}. Assume that $T<\infty$ almost surely and $\mathbb E_s[T]<\infty$ under $X_0\sim s$.
Define $\widehat{\mathbb P}_s$ on $(\Omega\times\mathbb Z_{\ge 0},\,\mathcal F\otimes 2^{\mathbb Z_{\ge 0}})$ \cite{gill_vardi_wellner_1988}:
\begin{equation}\label{eq:lengthbiased-law}
\widehat{\mathbb P}_s(B)
:=
\frac{1}{\mathbb E_s[T]}
\,
\mathbb E_s\!\left[
\sum_{t=0}^{T-1}\mathbf 1\{(\omega,t)\in B\}
\right],
\qquad
B\in\mathcal F\otimes 2^{\mathbb Z_{\ge 0}}.
\end{equation}
Equivalently, first sample a path with probability proportional to its pre-absorption length $T(\omega)$, and then choose $U\sim\mathrm{Unif}\{0,\dots,T(\omega)-1\}$. Under $\widehat{\mathbb P}_s$, the random node $X_U$ is a uniformly chosen pre-absorption step.
Moreover, $\widehat{\mathbb P}_s(U=t)=\frac{\mathbb P_s(T>t)}{\mathbb E_s[T]}$.
\end{definition}

\begin{proposition}[Uniform-step equivalence]
\label{prop:uniform-step}
Let $\mu_v(s):=\mathbb E_s\!\Big[\sum_{t=0}^{T-1}\mathbf 1\{X_t=v\}\Big], b_v(s):=\frac{\mu_v(s)}{\mathbb E_s[T]}$. Then, under $\widehat{\mathbb P}_s$ from Definition~\ref{def:lengthbiased}, for every $v\in V$,
\begin{equation}\label{eq:uniform-step}
b_v(s) = \widehat{\mathbb P}_s(X_U=v) =
\frac{\sum_{t\ge 0}\mathbb P_s(X_t=v,\ T>t)}
{\sum_{t\ge 0}\mathbb P_s(T>t)}.
\end{equation}
\end{proposition}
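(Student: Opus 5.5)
The plan is to apply the definition \eqref{eq:lengthbiased-law} of $\widehat{\mathbb P}_s$ to a single well-chosen measurable set and then exchange sums and expectations using Tonelli. The equality $b_v(s)=\widehat{\mathbb P}_s(X_U=v)$ will come out directly. The last expression in \eqref{eq:uniform-step} will follow from two tail-sum identities, one for the numerator and one for the denominator.

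First, fix $v\in V$ and take $B_v:=\{(\omega,t):X_t(\omega)=v\}$. This set lies in $\mathcal F\otimes 2^{\mathbb Z_{\ge 0}}$ because each section $\{\omega: X_t(\omega)=v\}$ is in $\mathcal F$ and the time coordinate is discrete. The event $\{X_U=v\}$ is by definition the set of $(\omega,U)$ belonging to $B_v$. Substituting $B_v$ into \eqref{eq:lengthbiased-law} gives
\[
\widehat{\mathbb P}_s(X_U=v)=\frac{1}{\mathbb E_s[T]}\,\mathbb E_s\Big[\sum_{t=0}^{T-1}\mathbf 1\{X_t=v\}\Big]=\frac{\mu_v(s)}{\mathbb E_s[T]}=b_v(s).
\]
This quotient is well defined because $\mathbb E_s[T]<\infty$ by assumption. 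We also have $\mathbb E_s[T]\ge 1$, since $X_0\sim s$ is supported on $V$ and so $T\ge 1$.

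Second, rewrite the truncated sum as an infinite sum with an indicator, $\sum_{t=0}^{T-1}\mathbf 1\{X_t=v\}=\sum_{t\ge0}\mathbf 1\{X_t=v,\,T>t\}$. This identity holds pathwise because $T<\infty$ almost surely. All terms are nonnegative, so Tonelli (monotone convergence) lets us swap expectation and summation, which gives $\mu_v(s)=\sum_{t\ge0}\mathbb P_s(X_t=v,T>t)$. Applying the same argument with the indicator of $v$ replaced by $1$ gives $\mathbb E_s[T]=\sum_{t\ge0}\mathbb P_s(T>t)$. Dividing the first identity by the second yields the right-hand expression in \eqref{eq:uniform-step}.

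There is no real obstacle here; the only points requiring care are measure-theoretic. One must check that $B_v$ is product-measurable and that the interchange of sum and expectation is justified. Nonnegativity handles the interchange without needing any integrability bound, and finiteness of $\mathbb E_s[T]$ is needed only so that the normalization is a genuine probability. As a sanity check, summing over $v\in V$ gives $1$: for $t<T$ we have $X_t\in V$, because $\perp$ is first hit at time $T$. This confirms that $b(s)\in\Delta^{n-1}$.
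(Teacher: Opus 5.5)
Your proposal is correct and matches the paper's proof: apply the length-biased definition to the set $\{(\omega,t): X_t=v\}$ to obtain $b_v(s)$, then write $\mu_v(s)$ and $\mathbb E_s[T]$ as tail sums. You also justify product-measurability and the sum--expectation interchange (Tonelli), which the paper leaves implicit; the pathwise identity $\sum_{t=0}^{T-1}\mathbf 1\{X_t=v\}=\sum_{t\ge 0}\mathbf 1\{X_t=v,\,T>t\}$ in fact holds even without $T<\infty$.
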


\begin{proof}
Apply Definition~\ref{def:lengthbiased} to the event $\{(\omega,t): X_t=v\}$. This gives
$$
\widehat{\mathbb P}_s(X_U=v) = \frac{1}{\mathbb E_s[T]} \, \mathbb E_s\!\Big[\sum_{t=0}^{T - 1}\mathbf 1\{X_t=v\}\Big] = \frac{\mu_v(s)}{\mathbb E_s[T]} = b_v(s).
$$
Also, $\mu_v(s)=\sum_{t\ge 0}\mathbb P_s(X_t=v,\ T>t), \mathbb E_s[T]=\sum_{t\ge 0}\mathbb P_s(T>t)$, which yields \eqref{eq:uniform-step}.
\end{proof}

\begin{corollary}[Survival-weighted decomposition]
\label{cor:survival-decomp}
Let $w_t:=\frac{\mathbb P_s(T>t)}{\mathbb E_s[T]}$. Whenever $\mathbb P_s(T>t)>0$, define $\pi_t(v):=\mathbb P_s(X_t=v\mid T>t)$. When $\mathbb P_s(T>t)=0$, define $\pi_t$ arbitrarily.
Then $w_t\ge 0$, $\sum_{t\ge 0}w_t=1$, and
\begin{equation}\label{eq:b-survival-average}
b_v(s)=\sum_{t\ge 0} w_t\,\pi_t(v).
\end{equation}
\end{corollary}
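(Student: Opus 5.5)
The plan is to read the decomposition directly off the second equality in Proposition~\ref{prop:uniform-step}. There $b_v(s)$ is written as a ratio: the numerator is $\sum_{t\ge 0}\mathbb P_s(X_t=v,\,T>t)$ and the denominator is $\sum_{t\ge 0}\mathbb P_s(T>t)=\mathbb E_s[T]$. The assumptions of Definition~\ref{def:lengthbiased}, which hold for the AMC because $\rho(Q)<1$, give $0<\mathbb E_s[T]<\infty$. Positivity holds because $T\ge 1$, since $X_0\sim s$ is supported on $V$. So the weights $w_t$ are well defined.

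The key steps are as follows.

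\emph{Nonnegativity and normalization.} We have $w_t\ge 0$ trivially. For the sum, we use the tail-sum formula $\mathbb E_s[T]=\sum_{t\ge 0}\mathbb P_s(T>t)$ for the nonnegative integer-valued variable $T$. This is already used in the proof of Proposition~\ref{prop:uniform-step}, and it gives $\sum_t w_t=1$.

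\emph{Termwise factorization.} For each $t$ we write $\mathbb P_s(X_t=v,\,T>t)=\mathbb P_s(T>t)\,\pi_t(v)$. When $\mathbb P_s(T>t)>0$, this is the definition of conditional probability. When $\mathbb P_s(T>t)=0$, both sides vanish, because the left side is dominated by $\mathbb P_s(T>t)$. Therefore the arbitrary choice of $\pi_t$ is harmless.

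\emph{Assembly.} Dividing the numerator series by $\mathbb E_s[T]$ gives $b_v(s)=\sum_t w_t\pi_t(v)$. The rearrangement is justified because all terms are nonnegative and the series converges; it is bounded by $\sum_t \mathbb P_s(T>t)<\infty$.

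There is essentially no obstacle. The only points needing care are the null-survival times, which the convention on $\pi_t$ handles as described above, and checking that $\mathbb E_s[T]>0$ so that division is legitimate. For completeness, one can also note that each $\pi_t$ with $\mathbb P_s(T>t)>0$ is a probability vector on $V$. This holds because $\{T>t\}=\{X_t\in V\}$, since $\perp$ is absorbing. As a result, \eqref{eq:b-survival-average} exhibits $b(s)$ as a convex mixture of the conditional laws $\pi_t$.
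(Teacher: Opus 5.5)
Your proposal is correct and follows the paper's proof: both read the decomposition off the ratio form in \eqref{eq:uniform-step}, factor $\mathbb P_s(X_t=v,\,T>t)=\mathbb P_s(T>t)\,\pi_t(v)$, and note that null-survival times carry zero weight, so the arbitrary choice of $\pi_t$ there does not matter. Your explicit checks of $\sum_t w_t=1$ via the tail-sum formula and of $\mathbb E_s[T]\ge 1$ fill in details the paper leaves implicit.
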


\begin{proof}
From \eqref{eq:uniform-step}, $b_v(s)
=
\frac{\sum_{t\ge 0}\mathbb P_s(T>t)\,\mathbb P_s(X_t=v\mid T>t)}
{\mathbb E_s[T]}
=
\sum_{t\ge 0} w_t\,\pi_t(v)$, where terms with $\mathbb P_s(T>t)=0$ have weight $w_t=0$ and therefore do not matter.
\end{proof}

\begin{remark}[Within-path uniform sampling vs.\ length bias]
If one samples a path under the original law $\mathbb P_s$ and only then chooses $U\sim\mathrm{Unif}\{0,\dots,T-1\}$, the resulting distribution is $\mathbb E_s\!\Big[\frac{1}{T}\sum_{t=0}^{T-1}\mathbf 1\{X_t=v\}\Big]$, which need not equal $b_v(s)$. The length bias in \eqref{eq:lengthbiased-law} is therefore essential for \eqref{eq:uniform-step}.
\end{remark}

\section{Row-wise perturbations of the AMC kernel: perturbed AFC and comparison to the mixture formula}
\label{subsec:robust-perturb-b}

Subsection~\ref{subsec:stochastic} gives us a useful benchmark: when all post-initial survival-conditional laws are equal to a common distribution $p$, AFC collapses to the simple mixture formula \eqref{eq:mixture-general}. That benchmark is informative, but it relies on a strong assumption. In many settings, the induced AMC kernel is only approximate: it may come from Monte Carlo construction, empirical fitting, or anchor-dependent local-center dynamics. This makes it necessary to study how AFC behaves when the transient kernel is perturbed.

The purpose of this section is therefore twofold. First, we define AFC for every admissible row-wise perturbation of the AMC kernel and show that the absorbing construction remains well-posed. Second, we explain why such perturbations typically destroy the structure behind Theorem~\ref{thm:mixture}, so that the full survival-weighted representation must be used in place of the simple mixture form.

\subsection{Row-wise additive uncertainty set}

Let $P^0$ be a nominal AMC kernel as in \eqref{eq:amc}, with transient block $Q^0$. Fix row-wise perturbation radii $\varepsilon_{ij}\ge 0$ for $i,j\in V$, together with leak lower bounds $\underline r_i\in(0,1]$. For each row $i$, define
{\small\[
\mathcal U_i^{\pm}
:=
\left\{
q_i\in\mathbb R_+^{V}:\ 
q_{ij}\in\bigl[(Q^0_{ij}-\varepsilon_{ij})_+,\ \min\{Q^0_{ij}+\varepsilon_{ij},1\}\bigr]\ \ \forall j\in V,\ 
\sum_{j\in V}q_{ij}\le 1-\underline r_i
\right\},
\]}
where $(x)_+:=\max\{x,0\}$.
Assume $\mathcal U_i^{\pm}\neq\varnothing$ for every $i$; for example, it is enough that $\sum_{j\in V}(Q^0_{ij}-\varepsilon_{ij})_+\le 1-\underline r_i$.

Each $q_i\in\mathcal U_i^{\pm}$ is lifted to a probability row on $S=V\cup\{\perp\}$ by setting $P_{ij}:=q_{ij}\quad (j\in V)$, and $P_{i\perp}:=1-\sum_{j\in V}q_{ij}$. Thus $P_{i\perp}\ge \underline r_i$, so every admissible perturbation retains a strictly positive absorption probability. If $r_i^0\ge \underline r_i$ for all $i$, then the nominal kernel $P^0$ itself belongs to the uncertainty set.

The global uncertainty set is
$$
\mathcal U^{\pm}
:=
\left\{
P=
\begin{bmatrix}Q&r\\0&1\end{bmatrix}:\ 
\forall i\in V,\ (Q_{ij})_{j\in V}\in\mathcal U_i^{\pm},\ \ r_i=1-\sum_{j\in V}Q_{ij}
\right\}.
$$

\begin{remark}[Uniform transience under leak lower bounds]
\label{rem:pm-transience}
Let $\underline r_{\min}:=\min_{i\in V}\underline r_i$.
For any $P\in\mathcal U^{\pm}$, $\|Q\|_\infty\le 1-\underline r_{\min}<1$, hence $\rho(Q)<1$.
Therefore the fundamental matrix $N(P)$ exists for every admissible kernel, and
\begin{equation}\label{eq:pm-N-bound}
\|N(P)\|_\infty
\le
\sum_{t\ge 0}\|Q\|_\infty^t
\le
\frac{1}{\underline r_{\min}}.
\end{equation}
So the absorbing construction is uniformly well posed over the whole uncertainty set.
\end{remark}

\subsection{AFC under a fixed perturbed kernel}

Fix $P\in\mathcal U^{\pm}$ and write $Q=Q(P)$ for its transient block.
Let $\mathbb P_s^P$ and $\mathbb E_s^P$ denote probabilities and expectations for the AMC with kernel $P$ started from $X_0\sim s$. As before, $N(P):=(I-Q)^{-1}, \mu(s;P):=sN(P), b(s;P):=\frac{\mu(s;P)}{\mu(s;P)\mathbf 1}$.

For this fixed kernel, $\mathbb P_s^P(X_t=\cdot,\ T>t)=sQ^t, \mathbb P_s^P(T>t)=sQ^t\mathbf 1$. Hence the survival-conditional law at time $t$ is $\pi_t(\cdot;P) := \mathbb P_s^P(X_t=\cdot\mid T>t)
=
\frac{sQ^t}{sQ^t\mathbf 1},
\quad t\ge 0$, with arbitrary $\pi_t(\cdot;P)$ when $sQ^t\mathbf 1=0$. Accordingly,
\begin{equation}\label{eq:pm-survival-average}
b(s;P)
=
\sum_{t\ge 0} w_t(P)\,\pi_t(\cdot;P),
\qquad
w_t(P)
:=
\frac{sQ^t\mathbf 1}{\sum_{k\ge 0}sQ^k\mathbf 1}
=
\frac{\mathbb P_s^P(T>t)}{\mathbb E_s^P[T]}.
\end{equation}
Thus, even after perturbation, AFC remains the survival-weighted average of the conditional laws
$\pi_t(\cdot;P)$.

Why is it that $b(s;P)$ generally does not reduce to the earlier derived mixture formula? Theorem~\ref{thm:mixture} requires a single post-initial law $p$ such that $\pi_t=p,~\text{for all }t\ge 1$. For a fixed AMC kernel $P$, this becomes:
\begin{equation}\label{eq:pm-S1-amc}
\frac{sQ^t}{sQ^t\mathbf 1}=p
,~\text{for all }t\ge 1.
\end{equation}
This is a strong structural constraint: every survival-conditional law after time $0$ must coincide with the same distribution $p$.

Row-wise $\pm$ perturbations typically destroy that structure by separating the rows of $Q$.
Then the family $\{\pi_t(\cdot;P)\}_{t\ge 1}$ generally varies with $t$ and depends on the initial distribution $s$. In this regime AFC no longer collapses to the two-point mixture \eqref{eq:mixture-general}; one must instead use the full survival-weighted representation \eqref{eq:pm-survival-average}, or equivalently the fundamental-matrix formula.

\begin{remark}[Quantifying deviation from post-initial stationarity]
\label{rem:pm-beyond-KL}
Fix any reference distribution $\bar p\in\Delta^{|V|-1}$ and define the mixture proxy
$
b_{\mathrm{mix}}(s;P,\bar p)
:=
w_0(P)\,s+(1-w_0(P))\,\bar p,
\quad
w_0(P)=\frac{1}{\mathbb E_s^P[T]}.
$
Using \eqref{eq:pm-survival-average} and $\pi_0(\cdot;P)=s$, we obtain
$
b(s;P)-b_{\mathrm{mix}}(s;P,\bar p)
=
\sum_{t\ge 1} w_t(P)\bigl(\pi_t(\cdot;P)-\bar p\bigr).
$
Therefore, for any norm $\|\cdot\|$ on $\mathbb R^V$,
$
\|b(s;P)-b_{\mathrm{mix}}(s;P,\bar p)\|
\le
\sum_{t\ge 1} w_t(P)\,\|\pi_t(\cdot;P)-\bar p\|.
$

More generally, let $D_{\mathcal F}(\nu,\eta):=\sup_{f\in\mathcal F}|\nu f-\eta f|$ be an integral probability metric (IPM) \cite{muller1997integral}. Then $D_{\mathcal F}\!\bigl(b(s;P),b_{\mathrm{mix}}(s;P,\bar p)\bigr)
\le
\sum_{t\ge 1} w_t(P)\, D_{\mathcal F}\!\bigl(\pi_t(\cdot;P),\bar p\bigr)$.

In particular, if $V$ is endowed with a ground metric $d$, then the $1$-Wasserstein distance $W_1^d$ is an IPM with $\mathcal F$ the class of $1$-Lipschitz functions, so $$W_1^d\!\bigl(b(s;P),b_{\mathrm{mix}}(s;P,\bar p)\bigr)
\le
\sum_{t\ge 1} w_t(P)\, W_1^d\!\bigl(\pi_t(\cdot;P),\bar p\bigr).
$$

These bounds show clearly that the error from replacing the whole time-varying family
$\{\pi_t(\cdot;P)\}_{t\ge 1}$ by a single law $\bar p$ is controlled by the survival-weighted average discrepancy
between them. The same viewpoint applies to total variation or Wasserstein discrepancies
\cite{rahimian_2018,mohajerin_esfahani_kuhn_2017}.
Information-theoretic quantities such as Kullback--Leibler divergence can also be useful as diagnostics
\cite{ben-tal_den_hertog_de_waegenaere_melenberg_rennen_2013}, but they are not IPMs and therefore do not enter the
preceding bound in the same way.
\end{remark}

\subsection{State-independent absorption}

Let us now study a useful special case. Specifically, assume $P_{i\perp}\equiv \alpha\in(0,1]$ for all $i \in V$. Then $Q=(1-\alpha)\,M$, where $M$ is row-stochastic on $V$ \cite{gleich_2015}. In this case,
\begin{equation}\label{eq:pm-discounted-occupancy}
b(s;P)
=
\frac{s\sum_{t\ge 0}Q^t}{s\sum_{t\ge 0}Q^t\mathbf 1}
=
\alpha\sum_{t\ge 0}(1-\alpha)^t\,sM^t,
\end{equation}
because $M^t\mathbf 1=\mathbf 1$ implies $\sum_{t\ge 0}sQ^t\mathbf 1 = \sum_{t\ge 0}(1-\alpha)^t = \frac{1}{\alpha}$. So in the constant-hazard case, AFC is a geometrically discounted average of the nonabsorbing evolution under $M$.

The mixture benchmark \eqref{eq:mixture-general} is recovered whenever the post-initial laws are constant, namely when $sM^t=p$ for all $t\ge 1$. Sufficient conditions include $M=\mathbf 1 p$, or more generally $sM=p$ and $pM=p$. Without such post-initial stationarity, \eqref{eq:pm-discounted-occupancy} remains the correct representation.

\subsection{First-order sensitivity of AFC}

We are also interested in quantifying AFC's sensitivity to a small perturbation of $Q$. Let $Q=Q^0+E, \rho(Q)<1$, and let $N^0,\mu^0,b^0$ be the quantities induced by $Q^0$ through \eqref{eq:fundamental}, \eqref{eq:visits}, and \eqref{eq:afc-def} \cite{ipsen_meyer_1994}.
The resolvent identity gives
$
N-N^0
=
NEN^0
=
N^0EN.
$
If $E$ is small, for example if $\|N^0E\|<1$ in a submultiplicative norm, then
$
N
=
N^0+N^0EN^0+O(\|E\|^2).
$
Hence
$
\mu-\mu^0
\approx
sN^0EN^0.
$
Since
$
b=\frac{\mu}{\mu\mathbf 1},
$
the corresponding first-order change in AFC is
$$
b-b^0
\approx
\frac{sN^0EN^0}{\mu^0\mathbf 1}
-
b^0\,\frac{sN^0EN^0\mathbf 1}{\mu^0\mathbf 1}.
$$
This gives a local sensitivity approximation for the effect of a small perturbation of the transient kernel. The expansion can be made rigorous with an explicit remainder: from $I-Q=(I-Q^0)(I-N^0E)$ we obtain $N=(I-N^0E)^{-1}N^0$, hence, whenever $\|N^0E\|<1$ in a submultiplicative norm, $N-N^0-N^0EN^0=(N^0E)^2N$ and therefore
\[
\bigl\|N-N^0-N^0EN^0\bigr\|\;\le\;\frac{\|N^0E\|^2\,\|N^0\|}{1-\|N^0E\|},
\]
which bounds the $O(\|E\|^2)$ term explicitly.

\begin{remark}[Robust optimization viewpoint]
\label{rem:pm-robust-view}
Given $\mathcal U^{\pm}$, one may report envelopes $\inf_{P\in\mathcal U^{\pm}} b_v(s;P), \sup_{P\in\mathcal U^{\pm}} b_v(s;P)$,
or optimize a scalar objective over $\mathcal U^{\pm}$, obtain an adversarial kernel $P^\star$, and then report $b(s;P^\star)$. The point is that departures from Theorem~\ref{thm:mixture} are driven by the generic failure of \eqref{eq:pm-S1-amc} under row-wise perturbations.
\end{remark}

\subsection{Ranking reversals and Top-\texorpdfstring{$\boldsymbol{k}$}{k} instability}
\label{sec:rankreversals}

Under row-wise $\pm$ perturbations of $Q$ the AFC ordering may change, including Top-$k$ sets, and may produce ranking reversals \cite{wills_ipsen_2009, bressan_peserico_2010, chartier_kreutzer_langville_pedings_2011, segarra_ribeiro_2016}.
For $u,v\in V$, define the visit-gap
\begin{equation}\label{eq:pm-gap}
G_{uv}(P)
:=
\mu_u(s;P)-\mu_v(s;P)
=
\bigl(sN(P)\bigr)_u-\bigl(sN(P)\bigr)_v,
\end{equation}
where $N(P)$ is the fundamental matrix associated with $Q(P)$.
Since $b(s;P)$ is a positive normalization of $\mu(s;P)$,
$
\operatorname{sign}\bigl(b_u(s;P)-b_v(s;P)\bigr)
=
\operatorname{sign}\bigl(G_{uv}(P)\bigr).
$

A convenient robust diagnostic is 
$$
\underline G_{uv}
:=
\inf_{P\in\mathcal U^{\pm}}G_{uv}(P), \qquad \overline G_{uv}
:=
\sup_{P\in\mathcal U^{\pm}}G_{uv}(P).
$$ 
If $\underline G_{uv}>0$, then $u$ robustly outranks $v$ over all admissible perturbations. If $\overline G_{uv}<0$, then $v$ robustly outranks $u$. If $0\in[\underline G_{uv},\overline G_{uv}]$, then the interval diagnostic does not certify a strict order and indicates possible ranking instability. The same logic applies to Top-$k$ stability by checking boundary pairs between the nominal Top-$k$ set and its complement.

Now assume $P^0\in\mathcal U^{\pm}$ and set $\underline r_{\min}:=\min_i\underline r_i$, and
$\bar\varepsilon:=\max_{i\in V}\sum_{j\in V}\varepsilon_{ij}$. Then $\|Q(P)-Q^0\|_\infty\le \bar\varepsilon$ for all $P\in\mathcal U^{\pm}$. Using $N(P)-N^0=N(P)\,(Q(P)-Q^0)\,N^0, $ together with \eqref{eq:pm-N-bound}, we obtain $\|\mu(s;P)-\mu(s;P^0)\|_\infty
=
\|s(N(P)-N^0)\|_\infty
\le
\frac{\bar\varepsilon}{\underline r_{\min}^2}.
$
Therefore, for all $u,v\in V$,
$
|G_{uv}(P)-G_{uv}(P^0)|
\le
2\,\|\mu(s;P)-\mu(s;P^0)\|_\infty
\le
\frac{2\bar\varepsilon}{\underline r_{\min}^2}.
$

Consequently, a sufficient certificate for preserving the nominal order $u\succ v$ under all admissible perturbations is $G_{uv}(P^0)>\frac{2\bar\varepsilon}{\underline r_{\min}^2}$.

If this inequality fails, the certificate is inconclusive: the nominal order may still be robust, but one can no longer guarantee it from this bound alone, and a sharper interval computation or direct optimization over $\mathcal U^{\pm}$ is then needed.

The certificate above relies only on the worst-case bound \eqref{eq:pm-N-bound} and is therefore loose. The same resolvent identity yields a pair-specific certificate that replaces the crude factor $2/\underline r_{\min}^{\,2}$ with a column difference of the nominal fundamental matrix, available at no extra cost.

\begin{proposition}[Sharpened pair-specific certificate]\label{prop:sharp-certificate}
Let $P^0\in\mathcal U^{\pm}$, $\underline r_{\min}:=\min_i\underline r_i$, and $\bar\varepsilon:=\max_{i\in V}\sum_{j\in V}\varepsilon_{ij}$. Then for every $P\in\mathcal U^{\pm}$ and every $u,v\in V$,
\begin{equation}\label{eq:sharp-gap}
\bigl|G_{uv}(P)-G_{uv}(P^0)\bigr|
\;\le\;
\frac{\bar\varepsilon}{\underline r_{\min}}\,
\bigl\|N^0(e_u-e_v)\bigr\|_\infty
\;=\;
\frac{\bar\varepsilon}{\underline r_{\min}}\,
\max_{i\in V}\bigl|N^0_{iu}-N^0_{iv}\bigr|.
\end{equation}
Consequently, $G_{uv}(P^0)>\frac{\bar\varepsilon}{\underline r_{\min}}\|N^0(e_u-e_v)\|_\infty$ certifies $u\succ v$ over all of $\mathcal U^{\pm}$. Since $0\le N^0_{iw}\le 1/\underline r_{\min}$ entrywise, \eqref{eq:sharp-gap} improves on the earlier threshold $2\bar\varepsilon/\underline r_{\min}^{\,2}$ by at least a factor of two, and by much more whenever the $u$- and $v$-columns of $N^0$ are close.
\end{proposition}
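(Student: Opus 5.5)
We will use the resolvent identity in the order $N(P)-N^0 = N(P)\,E\,N^0$, where $E:=Q(P)-Q^0$. This order is chosen so that the factor $N^0$ sits on the right, where it will meet the difference vector $e_u-e_v$. Since $G_{uv}(P)=sN(P)(e_u-e_v)$, we get
\[
G_{uv}(P)-G_{uv}(P^0) \;=\; s\,\bigl(N(P)-N^0\bigr)(e_u-e_v) \;=\; s\,N(P)\,E\,\bigl[N^0(e_u-e_v)\bigr].
\]
The plan is to bound this scalar by a chain of $\ell_\infty$-operator-norm estimates applied to the column vector $x:=N^0(e_u-e_v)$. Keeping $x$ intact, rather than splitting $G_{uv}$ into two separate visit counts, is what avoids the factor $2$ and the second $1/\underline r_{\min}$ in the earlier bound.

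The key steps are as follows.

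First, $\|Ex\|_\infty\le \|E\|_\infty\|x\|_\infty\le\bar\varepsilon\,\|x\|_\infty$. This holds because each row of $E$ has entries bounded by $\varepsilon_{ij}$ in absolute value. That is immediate from the box constraints $q_{ij}\in[(Q^0_{ij}-\varepsilon_{ij})_+,\min\{Q^0_{ij}+\varepsilon_{ij},1\}]$ together with $Q^0_{ij}\ge 0$.

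Second, $\|N(P)\,y\|_\infty\le\|N(P)\|_\infty\|y\|_\infty\le \|y\|_\infty/\underline r_{\min}$, by Remark~\ref{rem:pm-transience} and \eqref{eq:pm-N-bound}.

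Third, since $s\in\Delta^{n-1}$ is a probability row, $|s z|\le \|z\|_\infty$ for any vector $z$.

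Chaining these three steps gives \eqref{eq:sharp-gap}. The identity $\|N^0(e_u-e_v)\|_\infty=\max_i|N^0_{iu}-N^0_{iv}|$ is just the definition of the sup norm.

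The certificate then follows directly. If $G_{uv}(P^0)$ exceeds the right-hand side of \eqref{eq:sharp-gap}, then $G_{uv}(P)>0$ for every $P\in\mathcal U^{\pm}$, and by the sign equivalence noted after \eqref{eq:pm-gap} we conclude $b_u(s;P)>b_v(s;P)$.

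For the comparison claim, we use that $N^0=\sum_t (Q^0)^t$ is entrywise nonnegative. Since $P^0\in\mathcal U^{\pm}$, its row sums are bounded by $\|N^0\|_\infty\le 1/\underline r_{\min}$. Hence $0\le N^0_{iw}\le 1/\underline r_{\min}$, and so $|N^0_{iu}-N^0_{iv}|\le 1/\underline r_{\min}$, since this is a difference of two numbers in $[0,1/\underline r_{\min}]$. This yields the sharp bound $\le\bar\varepsilon/\underline r_{\min}^2$, half the earlier threshold. The bound is much smaller still when the two columns nearly coincide.

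The only delicate point is choosing the correct order of the resolvent identity, $N E N^0$ rather than $N^0 E N$. The perturbed $N(P)$ must appear on the left, where only its worst-case bound is needed, and the nominal $N^0$ on the right, acting on $e_u-e_v$. With the reverse order, the column difference would involve the unknown $N(P)$ and the certificate would not be computable from nominal data. Nonnegativity of the entries of $N^0$ is also essential for the factor-of-two claim.
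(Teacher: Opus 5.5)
Your proof is correct and matches the paper's. Both arguments use the resolvent identity in the order $N(P)\,\Delta Q\,N^0$ to isolate the nominal column difference $N^0(e_u-e_v)$, and both rely on $\|\Delta Q\|_\infty\le\bar\varepsilon$ and the $1/\underline r_{\min}$ bound on $N(P)$. The only difference is bookkeeping: you chain $\ell_\infty$ operator norms on the column vector, whereas the paper bounds the row vector $sN(P)\Delta Q$ in $\ell_1$ via $sN(P)\mathbf 1=\mathbb E^P_s[T]\le 1/\underline r_{\min}$, which is the dual form of the same estimate. As a minor aside, nonnegativity of $N^0$ is not actually essential for the factor of two when $u\neq v$, since $|N^0_{iu}|+|N^0_{iv}|\le\|N^0\|_\infty$ already holds.
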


\begin{proof}
By the resolvent identity, $G_{uv}(P)-G_{uv}(P^0)=s\bigl(N(P)-N^0\bigr)(e_u-e_v)=\bigl(sN(P)\,\Delta Q\bigr)\,N^0(e_u-e_v)$ with $\Delta Q:=Q(P)-Q^0$. For a row vector $x$ and a matrix $A$, $\|xA\|_1\le\|x\|_1\|A\|_\infty$; here $\|sN(P)\|_1=sN(P)\mathbf 1=\mathbb E^P_s[T]\le 1/\underline r_{\min}$ and $\|\Delta Q\|_\infty\le\bar\varepsilon$. The claim follows from $|yz|\le\|y\|_1\|z\|_\infty$ applied to $y=sN(P)\Delta Q$ and $z=N^0(e_u-e_v)$.
\end{proof}

In Appendix \ref{appendixB}, we show a worked illustration for the interested reader.




\section{Multi-reward absorbing--frequency centrality for valued local Top-\texorpdfstring{$\boldsymbol{k}$}{k} betweenness candidates}
\label{subsec:multi-node-reward}

The previous section studied how AFC changes when the AMC kernel is perturbed. Here we keep the same AMC and instead change what is rewarded. This distinction is important in applications: uncertainty may enter through the kernel, but the decision objective may also depend on what value is attached to each realized step of the process.

The basic AFC vector $b(s)$ in \eqref{eq:afc-def} assigns unit reward to each pre-absorption visit of the reported node. That is appropriate when one wants to summarize where the node-valued AMC spends its time. In many applications, however, the object of interest is richer than the single reported node: what matters is the total value carried by the entire local Top-$k$ candidate set produced by the realized graph at each step. Examples include total capacity, cumulative risk, aggregate demand, or combined criticality of the current local betweenness candidates.

This section is needed precisely to capture such objectives without enlarging the AMC state space. The key idea is to attach rewards to the same one-step simulator used to construct the AMC in Subsection~\ref{subsec:wellposed}. This lets us score the whole realized local Top-$k$ set at each step while preserving both the fundamental-matrix formula and the length-biased uniform pre-absorption interpretation from Proposition~\ref{prop:uniform-step} and Corollary~\ref{cor:survival-decomp}.

\subsection{Step rewards on the one-step simulator}

For each transient state $i\in V$, let $\omega_t$ denote the simulator draw used at time $t$ when $X_t=i$. Thus, conditional on $X_t=i$, we have $\omega_t\sim\mathbb P_i$, and the next state is generated by the same one-step simulator as in Subsection~\ref{subsec:wellposed}. Define
$$
\mathrm{Next}_i(\omega):=
\begin{cases}
c_{\mathrm{loc}}(i;\Pi_i(\omega)), & \omega\notin\mathcal A_i,\\
\perp, & \omega\in\mathcal A_i.
\end{cases}
$$
So $\mathrm{Next}_i(\omega)$ is the next reported state generated from anchor $i$.

A step reward is a family $\ell=(\ell_i)_{i\in V}$ of measurable maps $\ell_i:\Omega_i\to\mathbb R_+, i\in V$, with finite expectations $\psi_i:=\mathbb E_i[\ell_i(\omega)]<\infty, i\in V$. Write $\psi=(\psi_i)_{i\in V}\in\mathbb R_+^V$.
Along an absorbed trajectory, define the accumulated pre-absorption reward by
$
R^{(\ell)}:=\sum_{t=0}^{T-1}\ell_{X_t}(\omega_t),
$
where $T$ is the absorption time from \eqref{eq:absorption-time}.

\begin{proposition}[Reward-AFC on the same AMC]
\label{prop:reward-afc-general}
For any nonnegative integrable step reward $\ell$,
\begin{equation}\label{eq:mr-afc-def}
b_\ell(s)
:=
\frac{\mathbb E_s[R^{(\ell)}]}{\mathbb E_s[T]}
=
\frac{sN\psi}{sN\mathbf 1}.
\end{equation}
Moreover, if $U$ is the length-biased uniform pre-absorption step from~\ref{def:lengthbiased}, then
\begin{equation}\label{eq:mr-afc-uniform}
b_\ell(s)=\widehat{\mathbb E}_s\!\big[\ell_{X_U}(\omega_U)\big].
\end{equation}
\end{proposition}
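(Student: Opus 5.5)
The plan is to compute $\mathbb E_s[R^{(\ell)}]$ by conditioning step by step on the current state, reducing it to $\sum_{t\ge 0}\sum_{i\in V}\mathbb P_s(X_t=i,\,T>t)\,\psi_i$. We then identify this double sum with $sN\psi$ through the fundamental-matrix expansion. The denominator $\mathbb E_s[T]=sN\mathbf 1$ is already \eqref{eq:expected-length}, so only the numerator needs work.

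\textbf{Step 1 (decomposition).} Write
\[
R^{(\ell)}=\sum_{t\ge 0}\sum_{i\in V}\mathbf 1\{X_t=i,\ T>t\}\,\ell_i(\omega_t).
\]
Every term is nonnegative, so Tonelli's theorem lets us interchange expectation with both sums, and no integrability issue arises beyond $\psi_i<\infty$.

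\textbf{Step 2 (conditioning).} The event $\{X_t=i,\,T>t\}$ is measurable with respect to the history $\mathcal G_t:=\sigma(X_0,\omega_0,\dots,\omega_{t-1})$. By construction of the chain, conditional on $\mathcal G_t$ and $X_t=i$, the draw $\omega_t$ has law $\mathbb P_i$; this is exactly the Markov/simulator property stated for the AMC. Hence
\[
\mathbb E_s\!\big[\mathbf 1\{X_t=i,\,T>t\}\,\ell_i(\omega_t)\big]=\mathbb P_s(X_t=i,\,T>t)\,\psi_i .
\]

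\textbf{Step 3 (matrix form).} Use $\mathbb P_s(X_t=\cdot,\,T>t)=sQ^t$ together with \eqref{eq:fundamental}. Summing over $t$ gives
\[
\mathbb E_s[R^{(\ell)}]=\sum_{t\ge 0}sQ^t\psi=sN\psi,
\]
which proves \eqref{eq:mr-afc-def}. For \eqref{eq:mr-afc-uniform}, apply the definition \eqref{eq:lengthbiased-law} of $\widehat{\mathbb P}_s$ to the nonnegative function $(\omega,t)\mapsto \ell_{X_t}(\omega_t)$. The defining identity extends from indicators to nonnegative measurable functions by the usual simple-function and monotone-convergence argument. The result is
\[
\widehat{\mathbb E}_s\big[\ell_{X_U}(\omega_U)\big]=\frac{\mathbb E_s\big[\sum_{t<T}\ell_{X_t}(\omega_t)\big]}{\mathbb E_s[T]}=b_\ell(s).
\]
Here $\Omega$ is understood to carry the simulator draws $(\omega_t)$ as well as the path.

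\textbf{Main obstacle.} The delicate point is Step 2. It requires a precise formulation of the joint process in which $\omega_t$ is drawn from $\mathbb P_{X_t}$ independently of the past given $X_t$, and in which $X_{t+1}=\mathrm{Next}_{X_t}(\omega_t)$. I would make this explicit as a sequential construction, either via the Ionescu-Tulcea theorem or by realizing all draws on a product space of i.i.d. copies of each $(\Omega_i,\mathbb P_i)$. That construction gives the conditional-law statement directly and ensures the chain so built has kernel $P$. Everything else is Tonelli plus the already established identity $\mathbb P_s(X_t=\cdot,T>t)=sQ^t$.
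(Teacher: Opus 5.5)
Your proposal is correct and follows essentially the same route as the paper. The paper also uses the tower property to replace $\ell_{X_t}(\omega_t)$ by $\psi_{X_t}$, identifies $\mathbb E_s\bigl[\sum_{t=0}^{T-1}\psi_{X_t}\bigr]$ with $sN\psi$ through the fundamental matrix, and obtains \eqref{eq:mr-afc-uniform} by applying the length-biased law \eqref{eq:lengthbiased-law} directly. Your added care (Tonelli, the explicit sequential construction carrying the draws $\omega_t$, and extending \eqref{eq:lengthbiased-law} from indicators to nonnegative functions) simply makes rigorous what the paper leaves implicit.
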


\begin{proof}
By the tower property,
$$
\mathbb E_s[R^{(\ell)}]
=
\mathbb E_s\!\Big[\sum_{t=0}^{T-1}\ell_{X_t}(\omega_t)\Big]
=
\mathbb E_s\!\Big[\sum_{t=0}^{T-1}\mathbb E\!\big[\ell_{X_t}(\omega_t)\mid X_t\big]\Big]
=
\mathbb E_s\!\Big[\sum_{t=0}^{T-1}\psi_{X_t}\Big].
$$
The last expression is the accumulated node reward with reward vector $\psi$, so by
\eqref{eq:fundamental}--\eqref{eq:visits},
$
\mathbb E_s[R^{(\ell)}]=sN\psi.
$
Dividing by $\mathbb E_s[T]=sN\mathbf 1$ gives \eqref{eq:mr-afc-def}.

For the uniform-step representation,
$
\widehat{\mathbb E}_s\!\big[\ell_{X_U}(\omega_U)\big]
=
\frac{1}{\mathbb E_s[T]}\,
\mathbb E_s\!\Big[\sum_{t=0}^{T-1}\ell_{X_t}(\omega_t)\Big]
=
b_\ell(s),
$
which is \eqref{eq:mr-afc-uniform}.
\end{proof}

Proposition~\ref{prop:reward-afc-general} is the main reduction for the rest of this section: once a reward is
attached to the one-step simulator, the resulting reward-AFC is computed from the same fundamental matrix $N$. Applying
the construction to several reward systems in parallel yields a multi-reward profile on the same AMC.

\subsection{Valued local Top-\texorpdfstring{$\boldsymbol{k}$}{k} sets on each realized graph}

We now turn to the main case of interest: rewarding the entire local Top-$k$ set produced by the realized graph at each step, not just the single reported node.

For $i\in V$ and $\omega\in\Omega_i$, define the realized candidate set
\begin{equation}\label{eq:realized-topk-set}
\mathcal C_k(i,\omega)
:=
\begin{cases}
\mathrm{Top}^{\mathrm{loc}}_k(i;\Pi_i(\omega)), & \omega\notin\mathcal A_i,\\
\emptyset, & \omega\in\mathcal A_i.
\end{cases}
\end{equation}
Thus, invalid draws contribute no candidates, consistent with Definition~\ref{def:local-topk}.

Let $\gamma\in\mathbb R_+^V$ be a vector of node values. The step reward associated with the whole local Top-$k$ set is
$
\ell_i^{(k,\gamma)}(\omega)
:=
\sum_{v\in\mathcal C_k(i,\omega)}\gamma_v, i\in V.
$
Its expected one-step reward is $\psi_i^{(k,\gamma)}
:=
\mathbb E_i\!\big[\ell_i^{(k,\gamma)}(\omega)\big]
=
\mathbb E_i\!\Big[\sum_{v\in\mathcal C_k(i,\omega)}\gamma_v\Big].
$
The corresponding valued Top-$k$ reward-AFC is
$
b_{k,\gamma}(s)
:=
b_{\ell^{(k,\gamma)}}(s)
=
\frac{sN\psi^{(k,\gamma)}}{sN\mathbf 1}.
$
By \eqref{eq:mr-afc-uniform},
$
b_{k,\gamma}(s)
=
\widehat{\mathbb E}_s\!\Big[\sum_{v\in\mathcal C_k(X_U,\omega_U)}\gamma_v\Big].
$
So $b_{k,\gamma}(s)$ is the average total value of the entire local Top-$k$ candidate set seen at a length-biased
uniformly sampled pre-absorption step.

If $\gamma\equiv\mathbf 1$, then $b_{k,\mathbf 1}(s) = \widehat{\mathbb E}_s\!\big[\,|\mathcal C_k(X_U,\omega_U)|\,\big]$, the expected size of the current local Top-$k$ set on that random step.

This quantity is different from a node-reward summary based only on the visited state $X_t$. Here the reward is attached to the whole candidate set generated by the realized graph at step $t$. This is exactly the relevant object when the application cares about the current collection of betweenness candidates rather than only the single reported center.

\subsection{Pool-restricted valued Top-\texorpdfstring{$\boldsymbol{k}$}{k} rewards}

Some applications may restrict our candidate pool to $W\subseteq V$, for example an empirical union of local Top-$k$ sets: $W := \bigcup_{H\in\mathcal H_0}\ \bigcup_{i\in V}\ \mathrm{Top}^{\mathrm{loc}}_k(i;H)$, for some finite $\mathcal H_0\subset\mathcal H$.

Given such a candidate pool $W$ and node values $\gamma\in\mathbb R_+^V$, let $\ell_i^{(k,W,\gamma)}(\omega) := \sum_{v\in\mathcal C_k(i,\omega)\cap W}\gamma_v, i\in V$. Let $\psi_i^{(k,W,\gamma)} := \mathbb E_i\!\big[\ell_i^{(k,W,\gamma)}(\omega)\big]$. Then the pool-restricted reward-AFC is $b_{k,W,\gamma}(s) := b_{\ell^{(k,W,\gamma)}}(s) = \frac{sN\psi^{(k,W,\gamma)}}{sN\mathbf 1}$. Equivalently: 
\begin{equation}\label{eq:mr-pool-afc}
b_{k,W,\gamma}(s) = \widehat{\mathbb E}_s\!\Big[\sum_{v\in\mathcal C_k(X_U,\omega_U)\cap W}\gamma_v\Big].
\end{equation}

If $\gamma\equiv\mathbf 1$, this becomes $b_{k,W,\mathbf 1}(s) = \widehat{\mathbb E}_s\!\big[\,|\mathcal C_k(X_U,\omega_U)\cap W|\,\big]$, the expected number of members of the current local Top-$k$ set that fall in the pool $W$.

\subsection{Node-based and transition-based rewards as special cases}

The simulator-level reward formulation is not a separate competing model. Rather, it is a strict extension of the reward summaries already used earlier in the paper. This subsection is included for two reasons. First, it shows that the valued local Top-$k$ construction is backward-compatible with node-based and transition-based AFC. Second, it makes clear that all these summaries are computed on the same AMC through the same expected one-step reward vector and the same fundamental matrix.

Let $f:V\to\mathbb R_+$ and define $\ell_i^{(f)}(\omega):=f(i), i\in V$. Then $\psi_i=f(i)$, so \eqref{eq:mr-afc-def} gives
\begin{equation}\label{eq:mr-inner-product}
b_f(s)
=
b_{\ell^{(f)}}(s)
=
\frac{sNf}{sN\mathbf 1}
=
b(s)\,f
=
\sum_{v\in V}b_v(s)\,f_v.
\end{equation}
Thus the usual node-reward AFC is recovered by taking a reward that depends only on the currently visited node.

Let $\eta:S\times S\to\mathbb R_+$ with $\eta(\perp,\cdot)=0$, and define $\ell_i^{(\eta)}(\omega):=\eta\bigl(i,\mathrm{Next}_i(\omega)\bigr), i\in V$. Then $\sum_{t=0}^{T 1}\eta(X_t,X_{t+1}) = \sum_{t=0}^{T-1}\ell_{X_t}^{(\eta)}(\omega_t)$. Its expected one-step reward is $\psi_i^{(\eta)} = \mathbb E_i\!\big[\ell_i^{(\eta)}(\omega)\big] = \sum_{j\in V}Q_{ij}\,\eta(i,j)+P_{i\perp}\,\eta(i,\perp)$. Therefore
$
b_\eta(s)
:=
\frac{\mathbb E_s\!\big[\sum_{t=0}^{T-1}\eta(X_t,X_{t+1})\big]}{\mathbb E_s[T]}
=
\frac{sN\psi^{(\eta)}}{sN\mathbf 1}
=
b_{\psi^{(\eta)}}(s).
$
So transition rewards again reduce to the same reward-AFC formula.

A canonical example is the switching count $\eta_{\mathrm{sw}}(i,j):=\mathbf 1\{i\in V,\ j\in V,\ j\neq i\}$, and $\eta_{\mathrm{sw}}(i,\perp):=0$, which measures the average rate at which the reported center changes before absorption.

\section{Set-constrained Top-\texorpdfstring{$\boldsymbol{k}$}{k} selection and node-value summaries on the AMC}
\label{subsec:set-topk-phi-QN}

The previous section kept the AMC fixed and changed what was rewarded. In particular, the target pool $W$ entered only through the reward definition, while the one-step transition rule itself was unchanged. Here we take the complementary step: the pool constraint is enforced directly in the one-step update. This changes the induced kernel on $S=V\cup\{\perp\}$, but it does not enlarge the AMC state space in \eqref{eq:amc}.

Thus the distinction is as follows. In Section~\ref{subsec:multi-node-reward}, one scores the realized local Top-$k$ set produced at each step under a fixed kernel. In the present section, one changes the reported next state itself by requiring it to lie in a target pool whenever possible, and only then computes node-wise AFC and node-value summaries from the constrained kernel. This is useful when the structural constraint is part of the reporting rule rather than part of the reward.

We focus on three objects: the row-wise feasibility of the target constraint, the resulting node-wise occupancy profile, and the total occupancy mass on the target pool.

\subsection{Stepwise Top-\texorpdfstring{$\boldsymbol{k}$}{k} filtering and constrained selection}

Recall the realized candidate set $\mathcal C_k(i,\omega)$ from \eqref{eq:realized-topk-set}. Fix a deterministic
target pool $W\subseteq V$; in applications, $W$ is often a union of prescribed shapes such as $A\cup B$.

For each $i\in V$, define the constrained selector $\mathrm{Sel}_W(i,\omega)$ as follows:
if $\mathcal C_k(i,\omega)\cap W\neq\emptyset$, choose the highest-ranked element of
$\mathcal C_k(i,\omega)\cap W$ under the same deterministic ranking already used to form
$\mathrm{Top}^{\mathrm{loc}}_k$; otherwise set $\mathrm{Sel}_W(i,\omega)=\perp$.
Thus every nonabsorbing constrained update reports a node in $W$.

This induces a new AMC kernel
$
P^W=
\begin{bmatrix}
Q^W & r^W\\
0 & 1
\end{bmatrix}
$
on the same state space $S$, with $P^W_{ij}:=
\mathbb P_i\!\big(\mathrm{Sel}_W(i,\omega)=j\big), j\in V$,
and $P^W_{i\perp}:= 1-\sum_{j\in V}P^W_{ij}$. Under this hard filtering rule, absorption occurs either because the original draw is invalid or because the realized local Top-$k$ set has empty intersection with $W$. As in the previous sections, we assume the resulting kernel is absorbing; this is automatic, for example, under the same row-wise leak lower bound used earlier.

Define the one-step feasibility probability by
\begin{equation}\label{eq:settopk-xi}
\xi_i
:=
\mathbb P_i\!\big(\mathcal C_k(i,\omega)\cap W\neq\emptyset\big),
\qquad i\in V.
\end{equation}
Equivalently, $\xi_i=\sum_{j\in W}P^W_{ij}=1-P^W_{i\perp}$. So $\xi_i$ measures how often the target constraint can be satisfied directly from anchor $i$.

\subsection{Node-wise values and pool occupancy under the constrained kernel}

Let $N^W:=(I-Q^W)^{-1}$ be the fundamental matrix of the constrained kernel, and let $b^W(s):=\frac{sN^W}{sN^W\mathbf 1}$ be the corresponding node-wise AFC.

For any node-value vector $f\in\mathbb R_+^V$, define the constrained node-value summary by
$
b_f^W(s)
:=
\frac{sN^Wf}{sN^W\mathbf 1}
=
b^W(s)\,f.
$
Thus, once the kernel is replaced by $P^W$, node-wise value summaries follow from exactly the same linear-algebraic pipeline as in the unconstrained AMC. This differs from the pool-restricted reward $b_{k,W,\gamma}(s)$ in \eqref{eq:mr-pool-afc}, which leaves the kernel unchanged and modifies only the reward.

A basic global diagnostic is the AFC mass on the target pool:
\begin{equation}\label{eq:settopk-bPsi}
m_W(s)
:=
\sum_{v\in W} b_v^W(s)
=
b_{\mathbf 1_W}^W(s).
\end{equation}
This is the pre-absorption fraction of time spent in the target pool under the constrained kernel.
Under hard filtering, every post-initial nonabsorbing update lies in $W$, so off-pool mass can come only from initial
states outside $W$. In particular, if the initial distribution $s$ is supported on $W$, then $m_W(s)=1$.

When $m_W(s)>0$, the within-pool profile is obtained by renormalizing $\bar b_v^W(s):=\frac{b_v^W(s)}{m_W(s)}, v\in W$. This gives the occupancy distribution inside the target pool itself.

For ``motif''-structured targets \cite{milo_2002,benson_gleich_leskovec_2016}, we treat only the chosen primitives $A,B,\dots$ as admissible targets. Some of these ``motifs'' have been investigated in centrality studies, such as cliques \cite{vogiatzis2015integer,rysz2018finding} and induced stars \cite{vogiatzis2019identification}, among others \cite{rasti2022novel}. Mixed motifs formed by combining vertices across different primitives are not counted as separate targets unless they are already contained in at least one primitive. For example, if $A=\{1,2,3\}$ and $B=\{2,3,4\}$ are the designated triangles, then $\{1,3,4\}$ is not regarded as an additional target triangle unless it is itself one of the chosen primitives (i.e., a triangle). 

\subsection{Feasibility via a fixed fallback shape}

On steps where $\mathcal C_k(i,\omega)\cap W=\emptyset$, searching for alternative target witnesses may be expensive. A simpler implementation is to use a fixed fallback set $V_{\mathrm{fb}}\subseteq V$, assumed disjoint from $W$, whose induced subgraph contains an admissible shape in every realization (for example, because its internal edges are
deterministic). Fix a representative node $v_{\mathrm{fb}}\in V_{\mathrm{fb}}$.

Define the fallback selector by
{\small
\[
\mathrm{Sel}_{W,\mathrm{fb}}(i,\omega)
:=
\begin{cases}
\text{the highest-ranked element of }\mathcal C_k(i,\omega)\cap W,
& \mathcal C_k(i,\omega)\cap W\neq\emptyset,\\[2mm]
v_{\mathrm{fb}}, & \omega\notin\mathcal A_i,\ \mathcal C_k(i,\omega)\cap W=\emptyset,\\[1mm]
\perp, & \omega\in\mathcal A_i.
\end{cases}
\]}
This yields a second constrained kernel
$
P^{W,\mathrm{fb}}=
\begin{bmatrix}
Q^{W,\mathrm{fb}} & r^{W,\mathrm{fb}}\\
0 & 1
\end{bmatrix},
$
again on the same state space $S$. The feasibility probability $\xi_i$ from \eqref{eq:settopk-xi} is unchanged: it still records how often the target pool is reached directly. The difference is that filtered-empty but otherwise valid draws are now redirected to $v_{\mathrm{fb}}$ rather than absorbed.

Assuming this fallback kernel is absorbing, let $N^{W,\mathrm{fb}}:=(I-Q^{W,\mathrm{fb}})^{-1}$, and $b^{W,\mathrm{fb}}(s):=\frac{sN^{W,\mathrm{fb}}}{sN^{W,\mathrm{fb}}\mathbf 1}$. For any node-value vector $f\in\mathbb R_+^V$, define $b_f^{W,\mathrm{fb}}(s):=\frac{sN^{W,\mathrm{fb}}f}{sN^{W,\mathrm{fb}}\mathbf 1}$.


The corresponding pool mass is $m_W^{\mathrm{fb}}(s):=\sum_{v\in W} b_v^{W,\mathrm{fb}}(s)$, so that $1-m_W^{\mathrm{fb}}(s)$ is the pre-absorption fraction of time spent outside the target pool. This off-pool mass is driven mainly by fallback detours, together with any initial mass placed outside $W$.

One may keep fallback detours rare by connecting $V_{\mathrm{fb}}$ to $V\setminus V_{\mathrm{fb}}$ only through sparse or very low-probability links. The role of the fallback set is not to represent a target structure, but to maintain a well-defined continuation rule without enlarging the AMC state space.

\subsection{Optional censoring of fallback visits}

Sometimes the fallback states are used only as an implementation device, and one wishes to remove their contribution in the computations. At the path level this corresponds to deleting indices where $X_t\in V_{\mathrm{fb}}$ (e.g., a segment $3\!-\!v_{\mathrm{fb}}\!-\!4$ is read as $3\!-\!4$ after deletion). Note that this pathwise operation need not preserve the Markov property. At the level of AFC, however, it amounts to conditioning the uniform pre-absorption step on avoiding the fallback set.

Specifically, if $c_{\mathrm{fb}}(s):=\sum_{u\in V\setminus V_{\mathrm{fb}}} b_u^{W,\mathrm{fb}}(s)>0$, let $\widetilde b_v(s) := \frac{b_v^{W,\mathrm{fb}}(s)}{c_{\mathrm{fb}}(s)}$, and $v\in V\setminus V_{\mathrm{fb}}$. Then $\widetilde b_v(s) = \widehat{\mathbb P}_s^{W,\mathrm{fb}}\!\big(X_U=v \mid X_U\notin V_{\mathrm{fb}}\big), v\in V\setminus V_{\mathrm{fb}}$, where $U$ is the length-biased uniform pre-absorption step for the fallback kernel. Thus censoring removes the contribution of fallback detours at the occupancy level without changing the underlying AMC construction or the downstream linear-algebraic pipeline.

\begin{figure}[htbp]
    \centering
    \includegraphics[width=0.6\linewidth]{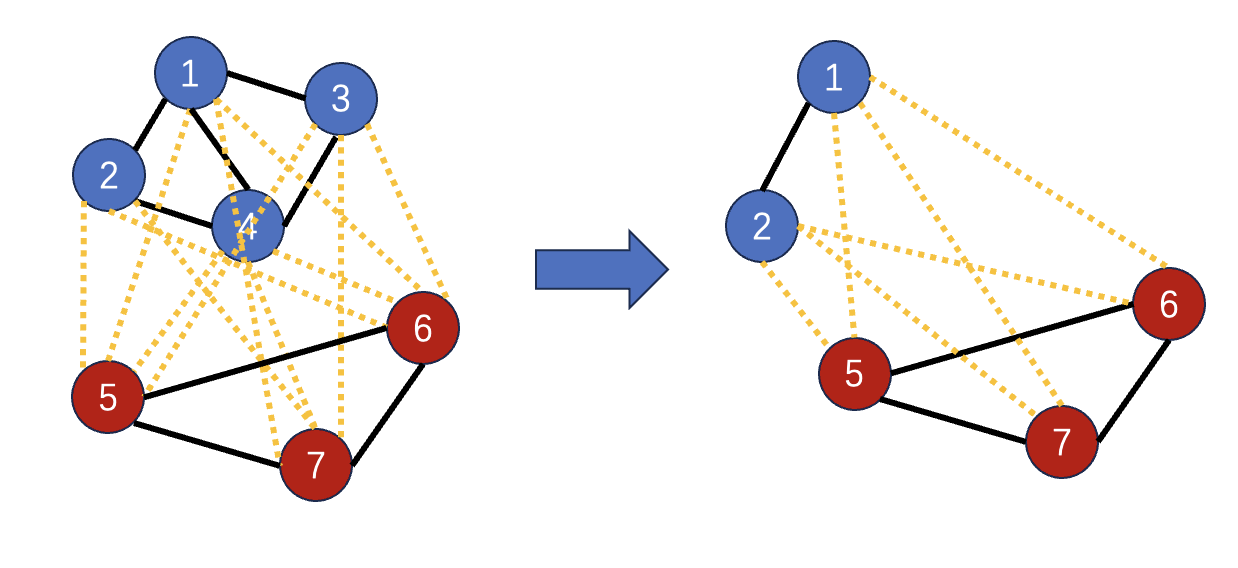}
    \caption{Illustrative fallback construction: primary nodes $1$--$4$ and a fallback triangle
    $V_{\mathrm{fb}}=\{5,6,7\}$. The figure uses full fallback connectivity only for visualization; in applications one
    may instead connect the fallback shape to the rest of the network through sparse or low-probability links so that
    fallback detours remain rare.}
    \label{fig:simple_set}
\end{figure}

Figure~\ref{fig:simple_set} illustrates this idea. The primary network consists of nodes $\{1,2,3,4\}$, while $\{5,6,7\}$ forms the fallback triangle. Assume that, at some realized step, the target pool cannot be reached through the current local Top-$k$ set. This could happen because, for example, the realized graph effectively leaves only a small component such as $\{1,2\}$. Then, the procedure will output the fallback representative $v_{\mathrm{fb}}\in\{5,6,7\}$ for that particular step.


\section{Constructing and estimating the absorbing Markov chain from the stochastic network}
\label{subsec:construct-amc}

The previous sections treated the absorbing Markov chain through its kernel $P$. To apply the framework, one must
construct that kernel from the underlying stochastic network model. This section makes that step explicit. We first
recall how the one-step simulator from Subsection~\ref{subsec:wellposed} induces each row of $P$, and then describe a
Monte Carlo estimator when the row probabilities are not available in closed form. The same construction also supports
the reward-aware and set-constrained variants developed earlier.

\subsection{One-step simulation and the induced row law}

Fix $i\in V$. As in Subsection~\ref{subsec:wellposed}, let $(\Omega_i,\mathcal F_i,\mathbb P_i),
\Pi_i:\Omega_i\to\mathcal H, \mathcal A_i $ denote the one-step simulator, realized-graph map, and effective absorption event, respectively. The next state is the tie-broken local center $c_{\mathrm{loc}}(i;\Pi_i(\omega))$ from \eqref{eq:local-center} on $\Omega_i\setminus\mathcal A_i$, and is $\perp$ on $\mathcal A_i$; equivalently, the row law is exactly the one defined in \eqref{eq:def-Pij}.

Algorithm~\ref{alg:sample-next} makes this construction operational. It is the basic primitive that turns one draw from the stochastic network model into one AMC step from anchor $i$. In particular, the exact row law of the AMC is the law of the output of Algorithm~\ref{alg:sample-next}, and the Monte Carlo estimator in Algorithm~\ref{alg:construct-amc} is obtained by repeating Algorithm~\ref{alg:sample-next} independently row by row.

\begin{algorithm}
  \caption{\textsc{SampleNext}$(i)$: primitive one-step simulator for the AMC row from anchor $i$}
  \label{alg:sample-next}
  \begin{algorithmic}[1]
    \REQUIRE Current local center $i\in V$.
    \ENSURE Next AMC state $X_{t+1}\in V\cup\{\perp\}$.
    \STATE Sample $\omega\sim\mathbb P_i$ on $(\Omega_i,\mathcal F_i)$.
    \STATE $H\gets \Pi_i(\omega)$.
    \IF{$\omega\in\mathcal A_i$}
      \STATE \textbf{return} $X_{t+1}\gets\perp$
      \COMMENT{this includes the earlier invalidity rule, e.g.\ $|K(i;H)|<k_{\min}$}
    \ENDIF
    \STATE \textbf{return} the tie-broken local center $c_{\mathrm{loc}}(i;H)$
    \COMMENT{using the fixed deterministic tie-breaker from the earlier definitions}
  \end{algorithmic}
\end{algorithm}

Algorithm~\ref{alg:sample-next} is not merely a procedural description: it is the mechanism that defines the transition row from $i$. Therefore the probabilities below are simply the output law of Algorithm~\ref{alg:sample-next}. For $j\in V$,
\begin{equation}\label{eq:sample-next-Pij}
P_{ij}:=\mathbb P_i\!\big(\textsc{SampleNext}(i)=j\big),
\end{equation}
and
\begin{equation}\label{eq:sample-next-Piabs}
P_{i\perp}:=\mathbb P_i\!\big(\textsc{SampleNext}(i)=\perp\big).
\end{equation}
These coincide with \eqref{eq:def-Pij}, and row normalization follows from
Lemma~\ref{lem:row-normalization}.

Let $\nu_i$ be the pushforward law on realized graphs from Definition~\ref{def:pushforward}, and let $a_i(\cdot)$ be the conditional absorption function from Proposition~\ref{prop:order-invariance}. Then the row probabilities are 
$
P_{ij}
=
\int_{\mathcal H}
\mathbf 1\{c_{\mathrm{loc}}(i;H)=j\}\,
\bigl(1-a_i(H)\bigr)\,d\nu_i(H),
j\in V,
$
and
$
P_{i\perp}
=
\int_{\mathcal H} a_i(H)\,d\nu_i(H).
$
Hence any two generation procedures inducing the same $\nu_i$ and $a_i(\cdot)$ yield the same AMC row, by Proposition~\ref{prop:order-invariance}.

\subsection{Monte Carlo estimation of the transition matrix}

When the row probabilities are not available in closed form, Algorithm~\ref{alg:sample-next} can be used as a sampling oracle for each transient row. Repeating it independently $M$ times for every anchor $i$ yields the empirical row estimator below. Algorithm~\ref{alg:construct-amc} summarizes the resulting row-wise Monte Carlo construction of the entire transition matrix.

For each $i\in V$, draw i.i.d.\ outputs
$
Z_i^{(1)},\dots,Z_i^{(M)}
\stackrel{d}{=}
\textsc{SampleNext}(i),
$
and define
$
\widehat P_{ij}
:=
\frac{1}{M}\sum_{m=1}^M \mathbf 1\{Z_i^{(m)}=j\},
j\in V\cup\{\perp\}.
$

\begin{algorithm}
  \caption{Row-wise Monte Carlo construction of the AMC transition matrix from repeated calls to \textsc{SampleNext}}
  \label{alg:construct-amc}
  \begin{algorithmic}[1]
    \REQUIRE Node set $V=\{1,\dots,n\}$; samples per row $M\in\mathbb N$.
    \ENSURE Estimated transition matrix $\widehat P$ on $S=V\cup\{\perp\}$.
    \FORALL{$i\in V$}
      \STATE Estimate $(\widehat P_{ij})_{j\in V\cup\{\perp\}}$ by empirical frequencies of \textsc{SampleNext}$(i)$.
    \ENDFOR
    \STATE Set $\widehat P_{\perp\perp}\gets 1$ and $\widehat P_{\perp j}\gets 0$ for $j\in V$.
  \end{algorithmic}
\end{algorithm}

Algorithm~\ref{alg:construct-amc} is the basic estimation routine used: once $\widehat P$ is available, AFC, reward-aware summaries, robust post-processing, and set-constrained variants all follow from the same downstream linear-algebraic computations.

For each fixed pair $(i,j)$, the strong law of large numbers gives $\widehat P_{ij}\xrightarrow{\mathrm{a.s.}} P_{ij}$ as $M\to\infty$.

We note that exact computation is generally intractable, hence the need for Monte Carlo estimation. Under independent edge failures, evaluating a single entry $P_{ij}$ amounts to integrating the indicator $\{c_{\mathrm{loc}}(i;H)=j\}$ over $2^{|E|}$ realizations, and closely related possible-world quantities are \#P-hard \cite{valiant1979complexity}. We therefore regard \eqref{eq:sample-next-Pij}--\eqref{eq:sample-next-Piabs} as quantities to be estimated rather than computed. 

In finite samples, one may observe $\widehat P_{i\perp}=0$ for some $i$, which can make
$(I-\widehat Q)$ ill-conditioned. Standard stabilizations include pseudocount (Laplace) smoothing or enforcing a small floor on $\widehat P_{i\perp}$ followed by row renormalization. After stabilization one forms $\widehat Q, 
\widehat N:=(I-\widehat Q)^{-1}, \widehat b(s):=\frac{s\widehat N}{s\widehat N\mathbf 1}$.

\subsection{Post-processing extensions on the estimated AMC}

Once the nominal kernel $\widehat P$ has been constructed by Algorithm~\ref{alg:construct-amc}, the extensions developed earlier in the paper can be applied without changing the basic estimation pipeline.

Starting from the nominal estimate $\widehat P$ produced by Algorithm~\ref{alg:construct-amc}, one may apply the row-wise uncertainty model of Subsection~\ref{subsec:robust-perturb-b}. For a continuation-value vector $v\in\mathbb R^{V\cup\{\perp\}}$, the row-wise robust inner problem is
\[
\mathcal T_i(v)
:=
\inf_{q_i\in\mathcal U_i^{\pm}}
\left[
\sum_{j\in V} q_{ij}v_j
+
\Bigl(1-\sum_{j\in V}q_{ij}\Bigr)v_\perp
\right],
\qquad i\in V.
\]
A scenario kernel $P^\star\in\mathcal U^{\pm}(\widehat P)$ is then selected by the specified robust rule (row-wise linear programs, a robust Bellman operator, or another criterion), and AFC is computed from $P^\star$ through the same formulas as before.

Once Algorithm~\ref{alg:construct-amc} has produced $\widehat P$ and hence $\widehat N$, any node-value summary follows immediately: for $f\in\mathbb R_+^V$, $\widehat b_f(s):=\frac{s\widehat N f}{s\widehat N\mathbf 1}$, which is the empirical version of \eqref{eq:mr-inner-product}. More generally, for a simulator-level reward family $\ell=(\ell_i)_{i\in V}$ from Proposition~\ref{prop:reward-afc-general}, estimate the expected one-step reward vector $\widehat\psi_i:=\frac{1}{M}\sum_{m=1}^M \ell_i(\omega_i^{(m)})$, using the same row samples, and then compute $\widehat b_\ell(s):=\frac{s\widehat N\widehat\psi}{s\widehat N\mathbf 1}$. This covers, in particular, the valued local Top-$k$ rewards and the pool-restricted rewards in \eqref{eq:mr-pool-afc}.

For the constrained selectors from Section~\ref{subsec:set-topk-phi-QN}, Algorithm~\ref{alg:construct-amc} is applied unchanged to the modified one-step rule, thereby yielding $\widehat P^W$ or $\widehat P^{W,\mathrm{fb}}$. Their corresponding AFC vectors are then $\widehat b^W(s)=\frac{s\widehat N^W}{s\widehat N^W\mathbf 1}$, and $\widehat b^{W,\mathrm{fb}}(s)=\frac{s\widehat N^{W,\mathrm{fb}}}{s\widehat N^{W,\mathrm{fb}}\mathbf 1}$. Using the same samples, estimate the feasibility probabilities in \eqref{eq:settopk-xi} by $\widehat\xi_i := \frac{1}{M}\sum_{m=1}^M \mathbf 1\{\mathcal C_k(i,\omega_i^{(m)})\cap W\neq\emptyset\}$, and record the fallback activation rate when relevant. The corresponding pool masses $\widehat m_W(s)$ and $\widehat m_W^{\mathrm{fb}}(s)$ are obtained by summing the relevant estimated AFC coordinates over $W$.

\subsection{Finite-sample accuracy: how many realizations are needed?}
\label{subsec:sample-size}

We now record conservative sample-size guarantees for two settings:
(i) estimating the one-shot law in the connected/anchor-free special case, and
(ii) estimating the full transition matrix (hence $b(s)$) in the general component-wise regime.
Errors are measured in $\|\cdot\|_\infty$ and $\|\cdot\|_1$
(total variation up to a factor $1/2$) \cite{david_asher_levin_y_peres_wilmer_propp_wilson_2017}.

\subsubsection{Connected/anchor-free special case: estimating the one-shot law \texorpdfstring{$\boldsymbol{p}$}{p}}

Recall the one-shot center law $p$ from \eqref{eq:oneshot-p}. Given i.i.d.\ samples
$Y^{(1)},\dots,Y^{(M)}\sim\mathcal L$, define $\widehat p_v:=\frac{1}{M}\sum_{m=1}^M \mathbf 1\{c(Y^{(m)})=v\}, v\in V$. For each fixed $v\in V$, the indicators are i.i.d.\ Bernoulli with mean $p_v$, so Hoeffding's inequality
\cite{hoeffding_1963} gives $\mathbb P\big(|\widehat p_v-p_v|\ge \varepsilon\big)
\le
2\exp(-2M\varepsilon^2).
$
Applying a union bound over $v\in V$ yields
$
\mathbb P\!\left(\|\widehat p-p\|_\infty\ge \varepsilon\right)
\le
2n\,\exp(-2M\varepsilon^2).
$
Equivalently, a sufficient condition for $\|\widehat p-p\|_\infty<\varepsilon$
with probability at least $1-\delta$ is
\begin{equation}\label{eq:phat-M}
M\ge \frac{1}{2\varepsilon^2}\log\!\Big(\frac{2n}{\delta}\Big).
\end{equation}

A conservative $\ell_1$ consequence is $\|\widehat p-p\|_1\le n\,\|\widehat p-p\|_\infty$. Hence, if
$
M\ge \frac{n^2}{2\varepsilon^2}\log\!\Big(\frac{2n}{\delta}\Big),
$
then $\|\widehat p-p\|_1<\varepsilon$ with probability at least $1-\delta$.

In the connected/anchor-free regime, \eqref{eq:phat-M} is a sufficient number of generated realizations to guarantee $\|p-\widehat p\|_\infty<\varepsilon$ with confidence $1-\delta$. Sharper multinomial concentration bounds are available
\cite{stephane_boucheron_gabor_lugosi_pascal_massart_2013}, but Hoeffding plus a union bound is often sufficient.

\subsubsection{General component-wise regime: estimating \texorpdfstring{$\boldsymbol{P}$}{P} and propagating error to \texorpdfstring{$\boldsymbol{b(s)}$}{b(s)}}

One call to Algorithm~\ref{alg:sample-next} returns a state in $V\cup\{\perp\}$ with row law
\eqref{eq:sample-next-Pij}--\eqref{eq:sample-next-Piabs}. With the empirical estimator from
Algorithm~\ref{alg:construct-amc}, Hoeffding's inequality gives, for any fixed $(i,j)$,
$
\mathbb P\big(|\widehat P_{ij}-P_{ij}|\ge\varepsilon\big)
\le
2e^{-2M\varepsilon^2}.
$
A union bound over all $n(n+1)$ pairs $(i,j)$ yields
$
\mathbb P\!\left(
\max_{i\in V}\max_{j\in V\cup\{\perp\}}
|\widehat P_{ij}-P_{ij}|
\ge \varepsilon
\right)
\le
2n(n+1)\exp(-2M\varepsilon^2).
$
Therefore, a sufficient condition for $\max_{i,j}|\widehat P_{ij}-P_{ij}|<\varepsilon$ with probability at least $1-\delta$ is
$
M\ge \frac{1}{2\varepsilon^2}\log\!\Big(\frac{2n(n+1)}{\delta}\Big).
$

Let $Q$ be the transient block of $P$ and $\widehat Q$ the transient block of $\widehat P$.
Assume a uniform hazard lower bound
$
r_i=P_{i\perp}\ge \underline r>0, i\in V,
$
so that $\|Q\|_\infty\le 1-\underline r$ and $N=(I-Q)^{-1}$ exists. Write
$
N:=(I-Q)^{-1}, \widehat N:=(I-\widehat Q)^{-1}, b(s):=\frac{sN}{sN\mathbf 1}, \widehat b(s):=\frac{s\widehat N}{s\widehat N\mathbf 1}.
$
To control $\|\widehat Q-Q\|_\infty$, note that
$
\|\widehat Q-Q\|_\infty
\le
n\,\max_{i\in V}\max_{j\in V}|\widehat P_{ij}-P_{ij}|.
$
Hence the entrywise guarantee
$
\max_{i\in V}\max_{j\in V\cup\{\perp\}}|\widehat P_{ij}-P_{ij}|<\frac{\varepsilon_Q}{n}
$
implies
$
\|\widehat Q-Q\|_\infty\le \varepsilon_Q.
$

Assume first that $\|\widehat Q-Q\|_\infty\le \frac{\underline r}{2}$. Then
$
\|\widehat Q\|_\infty\le 1-\frac{\underline r}{2},
$
so the Neumann series implies $\|N\|_\infty\le \frac{1}{\underline r}$, and 
$\|\widehat N\|_\infty\le \frac{2}{\underline r}$. Using the resolvent identity
$
\widehat N-N
=
(I-\widehat Q)^{-1}-(I-Q)^{-1}
=
\widehat N(\widehat Q-Q)N,
$
we obtain
\begin{equation}\label{eq:N-perturb}
\|\widehat N-N\|_\infty
\le
\frac{2}{\underline r^2}\,\|\widehat Q-Q\|_\infty.
\end{equation}

Now let $\mu:=sN$, and $\widehat\mu:=s\widehat N$. Then
$
b(s)=\frac{\mu}{\mu\mathbf 1}$,
and $\widehat b(s)=\frac{\widehat\mu}{\widehat\mu\mathbf 1}$. By \eqref{eq:N-perturb},
$
\|\widehat\mu-\mu\|_\infty
\le
\|\widehat N-N\|_\infty,
$ and $
\|\widehat\mu-\mu\|_1
\le
n\|\widehat\mu-\mu\|_\infty
\le
\frac{2n}{\underline r^2}\,\|\widehat Q-Q\|_\infty.
$
Also,
$
\mu\mathbf 1=\mathbb E_s[T]\ge 1,
$
since $T\ge 1$ almost surely when $X_0\in V$ almost surely.
If, in addition,
$
\|\widehat Q-Q\|_\infty\le \frac{\underline r^2}{8n},
$
then $\|\widehat\mu-\mu\|_1\le 1/4$, and therefore
$
\widehat\mu\mathbf 1
\ge
\mu\mathbf 1-\|\widehat\mu-\mu\|_1
\ge
\frac{3}{4}.
$

A direct normalization bound now gives
$$
\|\widehat b(s)-b(s)\|_1
\le
\frac{\|\widehat\mu-\mu\|_1}{\widehat\mu\mathbf 1}
+
\frac{|\widehat\mu\mathbf 1-\mu\mathbf 1|}{\widehat\mu\mathbf 1}
\le
\frac{2\|\widehat\mu-\mu\|_1}{\widehat\mu\mathbf 1}
\le
\frac{8}{3}\|\widehat\mu-\mu\|_1.
$$
Using the looser but cleaner constant $8$ produces
\begin{equation}\label{eq:b-perturb}
\|\widehat b(s)-b(s)\|_1
\le
\frac{8n}{\underline r^2}\,\|\widehat Q-Q\|_\infty.
\end{equation}
Hence, to guarantee $\|\widehat b(s)-b(s)\|_1\le \varepsilon_b$, it suffices to set
$\varepsilon_Q :=
\min\Bigl\{
\frac{\underline r}{2},
\frac{\underline r^2}{8n},
\frac{\varepsilon_b\,\underline r^2}{8n}
\Bigr\},
$ and require the per-row sample size to satisfy
$
M\ge \frac{n^2}{2\varepsilon_Q^2}\log\!\Big(\frac{2n(n+1)}{\delta}\Big).
$
Then, with probability at least $1-\delta$,
$
\max_{i,j}|\widehat P_{ij}-P_{ij}|<\frac{\varepsilon_Q}{n},
$
hence $\|\widehat Q-Q\|_\infty\le \varepsilon_Q$, and \eqref{eq:b-perturb} implies
$
\|\widehat b(s)-b(s)\|_1\le \varepsilon_b
$
for $s\in\Delta^{n-1}$.

We want to emphasize the gap between these worst-case guarantees and practice. At the experimental scale of Section~\ref{sec:numerical-experiments} ($n=100$), the bound above demands a per-row sample size many orders of magnitude beyond the $M=60$ that suffices empirically to stabilize the reported rankings. The bound is conservative in three ways: the union bound over all $n(n+1)$ entries, the $\ell_\infty$-to-$\ell_1$ conversion factor $n$, and the worst-case resolvent constant $1/\underline r^{2}$. It should therefore be read as an ex-post validation device rather than an ex-ante sample budget. In practice we recommend nonparametric bootstrap confidence intervals on $\widehat b(s)$, obtained by resampling the $M$ row draws with replacement, together with the rank-stability check of rerunning Algorithm~\ref{alg:construct-amc} at $M$ and $2M$ and verifying that the Top-$k$ set is unchanged.

The finite-sample bounds above are stated for the raw empirical estimator $\widehat P$. If one applies Laplace smoothing or a floor on $\widehat P_{i\perp}$ before forming $\widehat N$, the same perturbation argument still applies after accounting for the deterministic effect of the stabilization step. This effect is easy to quantify: imposing a floor of size $\eta$ on $\widehat P_{i\perp}$ followed by row renormalization changes each transient row by at most $\eta$ in $\ell_1$, so $\|\widetilde Q-\widehat Q\|_\infty\le\eta$ and, by \eqref{eq:b-perturb}, the stabilization bias in AFC is at most $8n\eta/\underline r^{2}$ in $\ell_1$; choosing $\eta$ an order of magnitude below the Monte Carlo error renders it negligible.

\section{Numerical Experiments}
\label{sec:numerical-experiments}

We are ready to evaluate our computational pipeline. Each realized working graph induces a betweenness-based local center and local Top-$k$ candidate set; the resulting one-step rule is compressed into an AMC on $S=V\cup\{\perp\}$ with transition matrix \eqref{eq:amc}. Given an estimated kernel, we compute AFC from the fundamental matrix via \eqref{eq:fundamental}--\eqref{eq:afc-def}. The same AMC estimates are subsequently reused for robust sensitivity analysis and reward/structure-aware
summaries (multi-reward and set-based Top-$k$ developments).

\subsection{Numerical test in Erd\H{o}s--R\'enyi and Watts--Strogatz}

We draw a base topology $G_0=(V,E)$ from two standard random graph models with $n=|V|=100$. For Erd\H{o}s--R\'enyi (ER) \cite{erdos_1988}, we sample $G_0\sim G(n,p)$ with $p=0.08$. For Watts--Strogatz (WS) \cite{watts_strogatz_1998}, we use ring degree $6$ and rewiring probability $0.10$.

At each step, we generate a realized working graph $H$ by retaining each base edge independently with probability $p_{\mathrm{on}}=0.85$. All computations use unweighted shortest paths, so betweenness is computed on the realized unweighted graph. One-step updates follow Algorithm~\ref{alg:sample-next} with a deterministic tie-breaker, and terminate either by exogenous stopping (absorption coin $\alpha=0.15$ per step) or by invalid continuation when the anchor’s connected component has a smaller size than $k_{\min}=5$.

For the matrix-based AMC pipeline, we estimate $\widehat P$ row-wise using Algorithm~\ref{alg:construct-amc}, with $M$ independent calls to the one-step routine per transient state (baseline $M=60$). The initial distribution is uniform, $s=\tfrac{1}{n}\mathbf 1$. To stabilize inversion of $I-\widehat Q$ at finite $M$, we impose a small positive floor on $\widehat P_{i\perp}$ (followed by renormalization) whenever sampling yields $\widehat P_{i\perp}=0$. 

\subsubsection{Baseline AFC under topology uncertainty}
\label{subsec:exp1-baseline}

For each ER/WS base topology, we estimate the AMC kernel $\widehat P$ via Algorithm~\ref{alg:construct-amc} under the within-step sampling model and parameters $(p_{\mathrm{on}},\alpha,k_{\min})$, then compute $\widehat b(s)$ using \eqref{eq:fundamental}--\eqref{eq:afc-def}. We report the Top-$5$ nodes under $\widehat b(s)$ and plot
the Top-$10$ values $\widehat b_v(s)$.

\begin{figure}[htbp]
    \centering
    \includegraphics[width=0.49\linewidth]{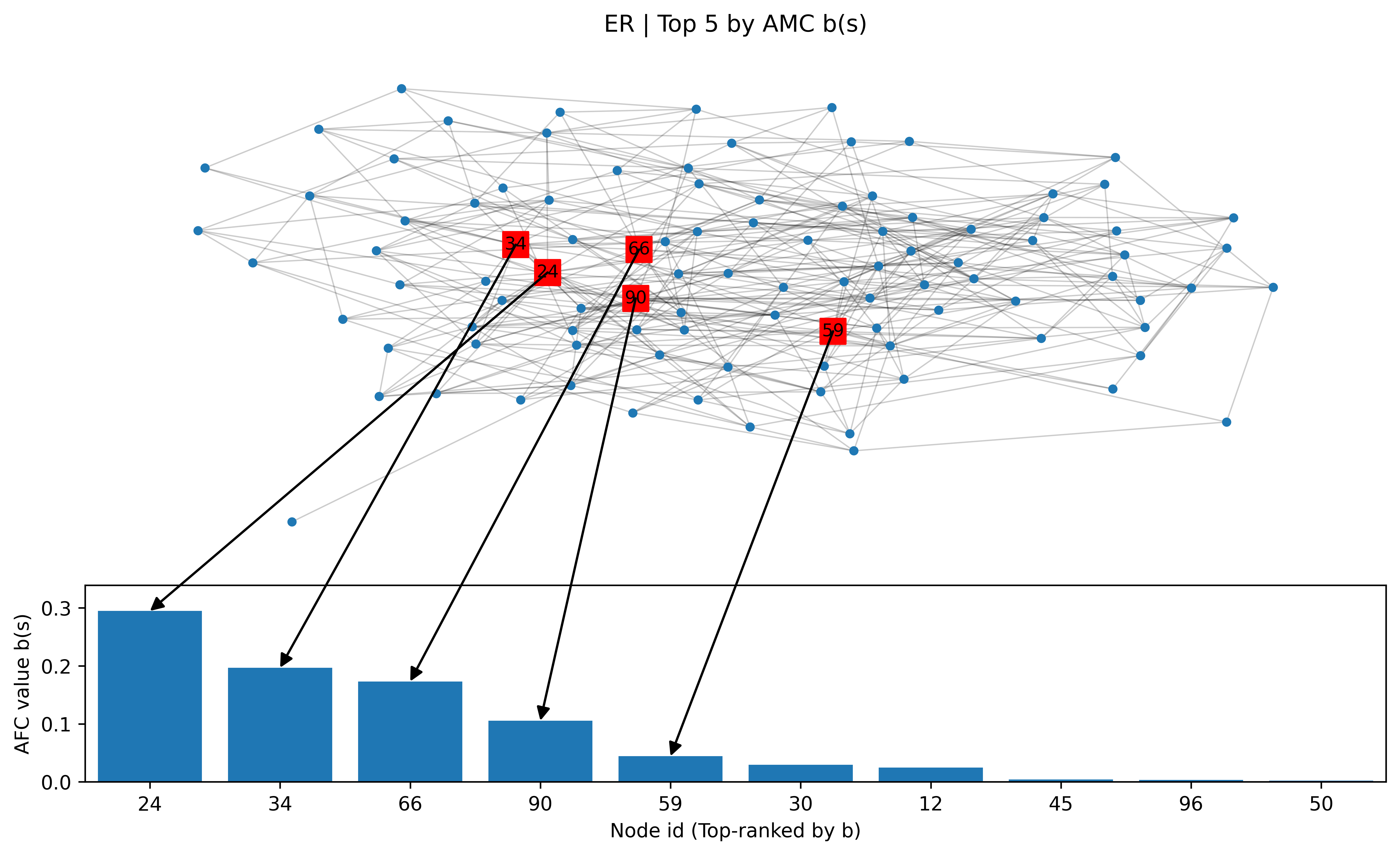}
    \includegraphics[width=0.49\linewidth]{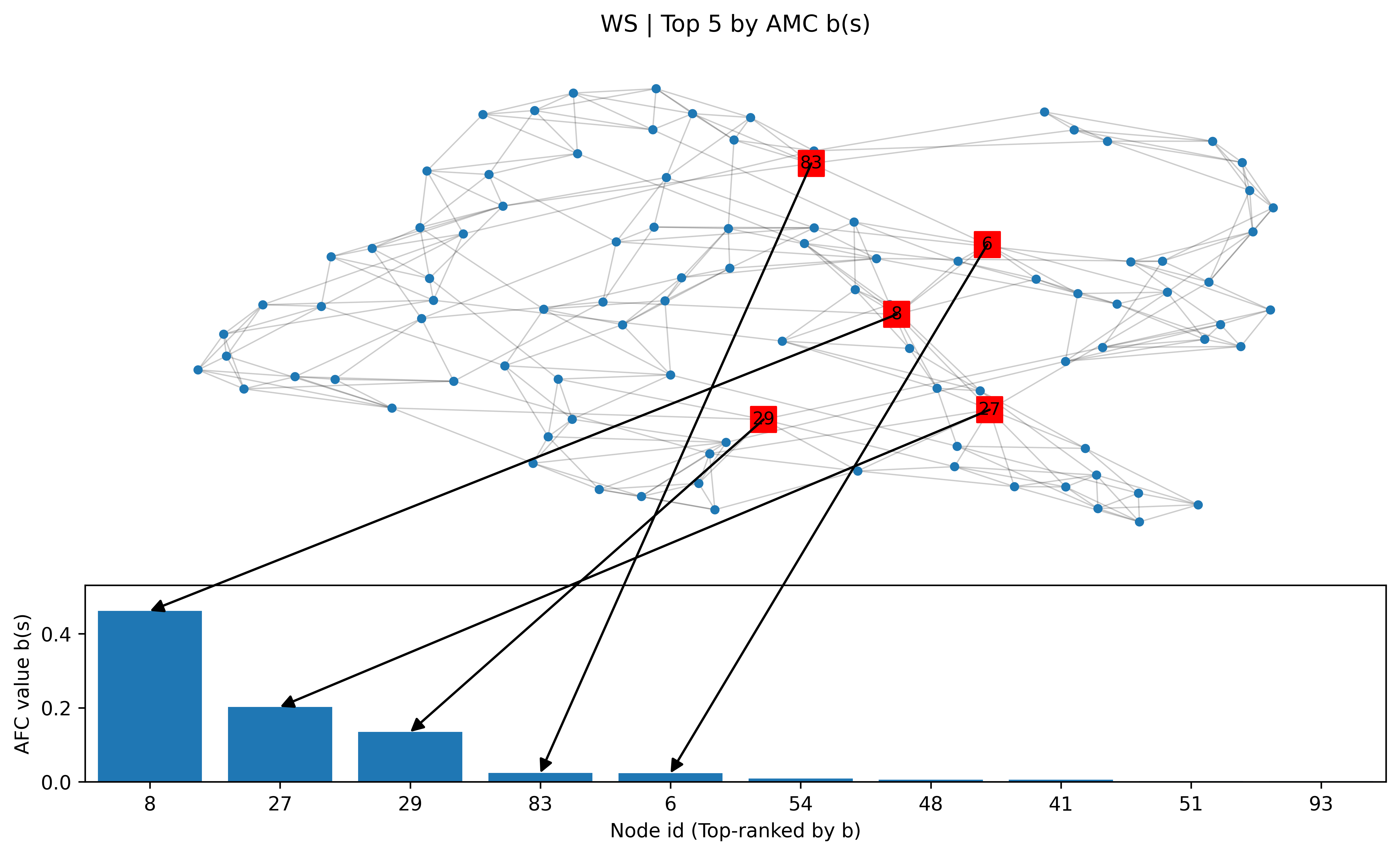}
    \caption{ Network plot on the base topology highlighting the Top-$5$ nodes under $\widehat b(s)$ in ER and WS, together with a bar chart of the Top-$10$ AFC values.}
    \label{fig:test1}
\end{figure}

Figure~\ref{fig:test1} reports the Top-$10$ nodes under $\widehat b(s)$ for the ER and WS networks. In the ER case, the top four nodes exhibit markedly larger AFC scores than the remaining nodes, while in the WS case a similar separation is observed among the top three nodes. Overall, these results indicate that the multi-step center dynamics concentrates occupancy on a small subset of nodes, even as the realized working graph varies from step to step and termination occurs via absorption.

\begin{figure}[htbp]
    \centering
    \includegraphics[width=0.99\linewidth]{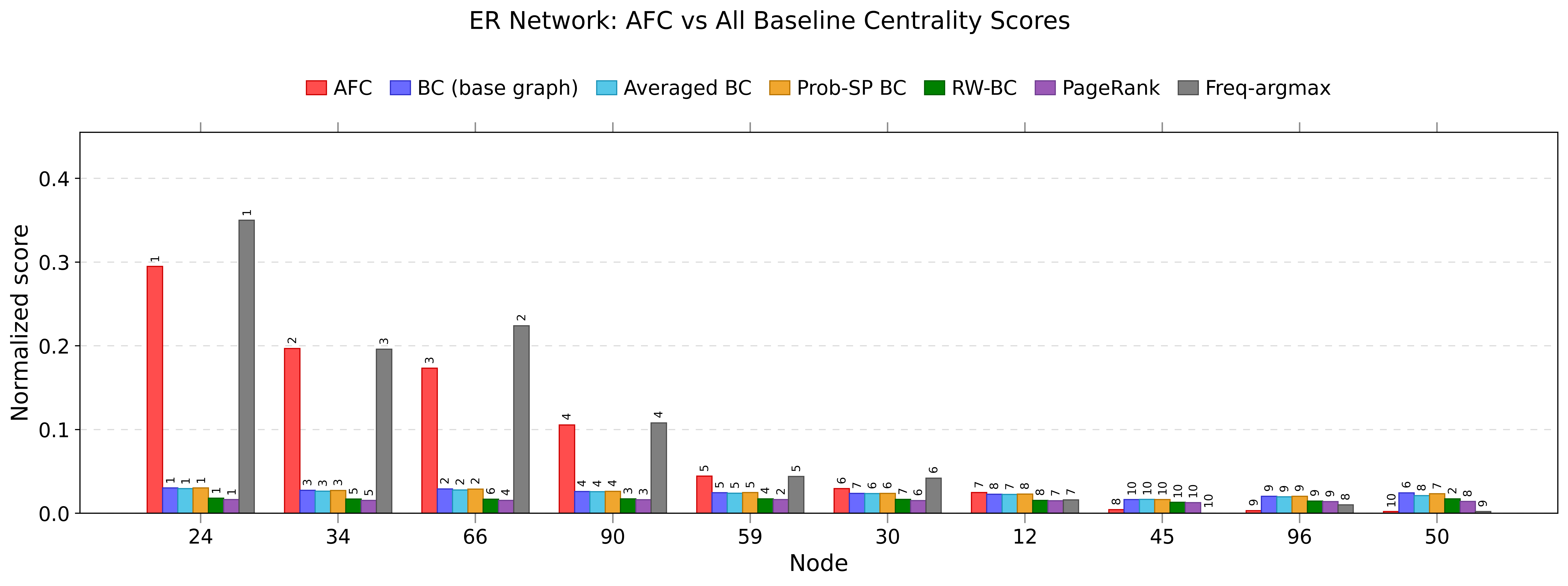}
    \includegraphics[width=0.99\linewidth]{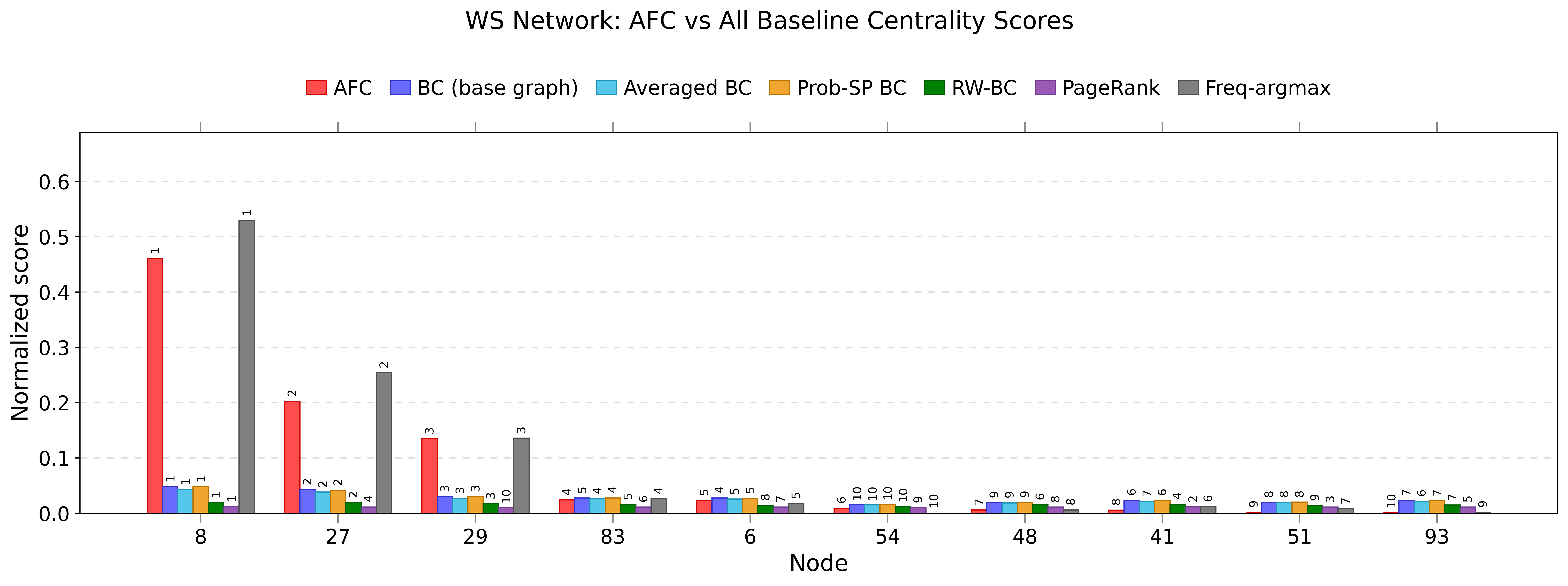}
    \caption{The left and right panels show results for the ER and WS networks, respectively. For each network, the x-axis lists the top ten nodes ranked by AFC. Bars show normalized scores across seven centrality measures: AFC, base-graph betweenness, averaged betweenness over sampled realizations, probabilistic shortest-path betweenness, random-walk betweenness, PageRank, and frequency-of-argmax centrality. Numbers above bars indicate each node's rank under that metric. Scores are normalized within each method to facilitate comparison across centrality definitions.}
    \label{fig:baseline-come}
\end{figure}

Figure~\ref{fig:baseline-come} compares AFC with several baseline centralities on the ER and WS networks. AFC provides much stronger discrimination among important nodes. In the ER network, the top five AFC nodes account for $81.5\%$ of the total normalized score, while the corresponding figures are $13.7\%$ for BC, $13.4\%$ for Averaged BC, $13.7\%$ for Prob-SP BC, $8.7\%$ for RW-BC, and $7.9\%$ for PageRank. A similar pattern holds in the WS network, where the top five AFC nodes capture $84.6\%$ of the score against much flatter baseline distributions. This suggests that BC, Averaged BC, Prob-SP BC, RW-BC, and PageRank have limited discriminative power after normalization. Although Freq-argmax also concentrates mass (92.2\% in ER and 96.4\% in WS), it is a winner-takes-all measure that ignores nodes consistently selected beyond first place. AFC instead accumulates occupation behavior across uncertain realizations, yielding a sharper and more informative ranking of structurally important nodes.

\subsubsection{Robust sensitivity under row-wise kernel perturbations} 
\label{subsec:exp2-robust}

We also evaluate how sensitive the AFC profile $b(s)$ is to finite-sample uncertainty in the estimated AMC
kernel $\widehat P$. Starting from the nominal estimate $\widehat P^{0}$ from
Experiment~\ref{subsec:exp1-baseline}, we perturb the transient block $\widehat Q^{0}$ row-wise and
renormalize with $P_{i\perp}=1-\sum_{j\in V}Q_{ij}$, enforcing $P_{i\perp}\ge r_{\min}$. We use
multiplicative perturbations with relative radius $\delta_{\mathrm{rel}}=0.50$ (clipped to $[0,1]$)
and set $r_{\min}=0.05$. Discrepancies are measured by KL divergence and by $W_1$, where the ground
metric on $V$ is shortest-path distance on the base topology.

For KL we sample $N_{\mathrm{KL}}=100$ admissible kernels and select the maximizer; for $W_1$ we sample $N_{\mathrm{W1}}=100$ and select the maximizer. The resulting AFC vectors are denoted $b^{\mathrm{KL}}(s)$ and $b^{\mathrm{W1}}(s)$. 

\begin{figure}[htbp]
    \centering
    \includegraphics[width=0.49\linewidth]{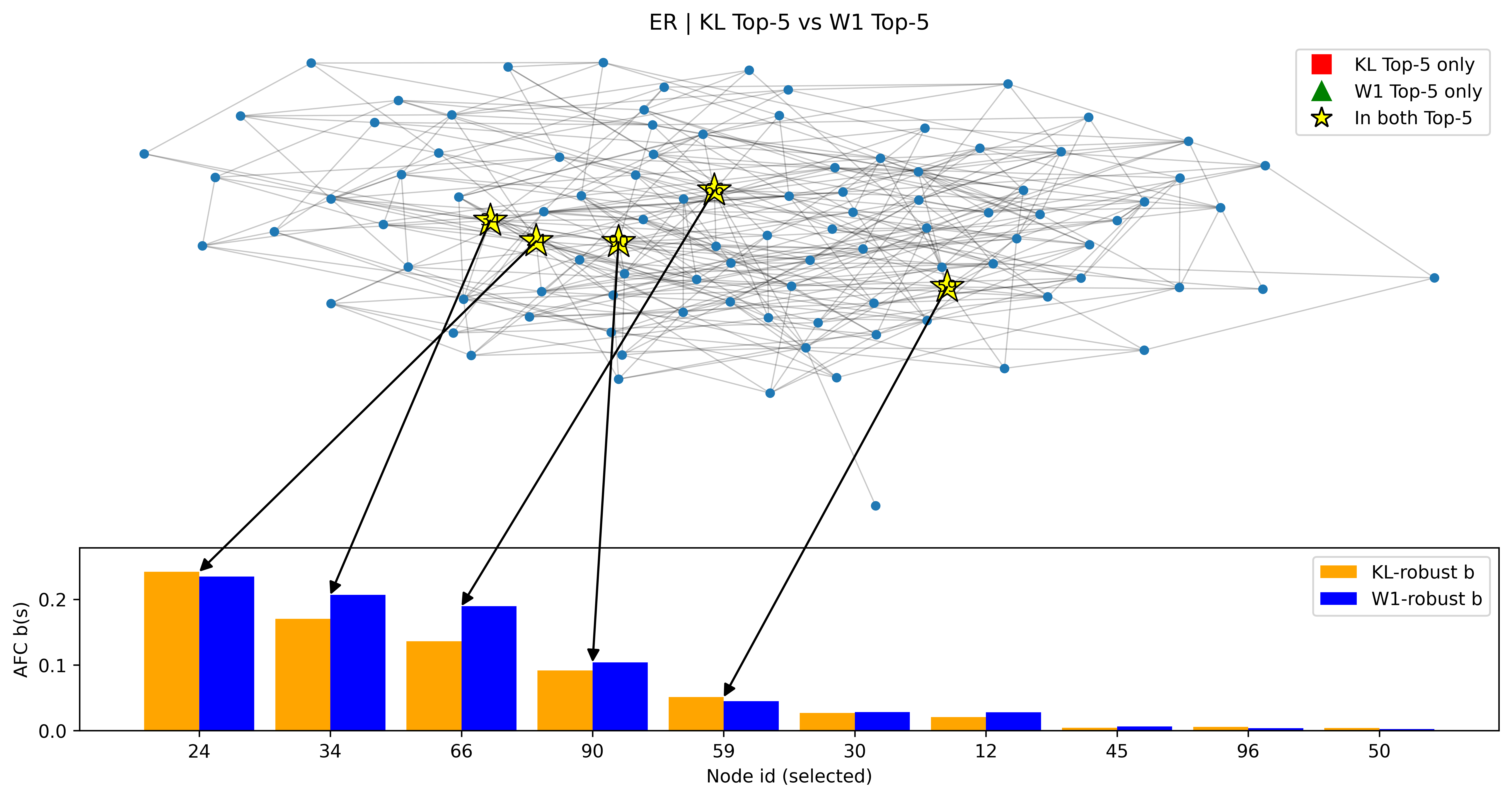}
    \includegraphics[width=0.49\linewidth]{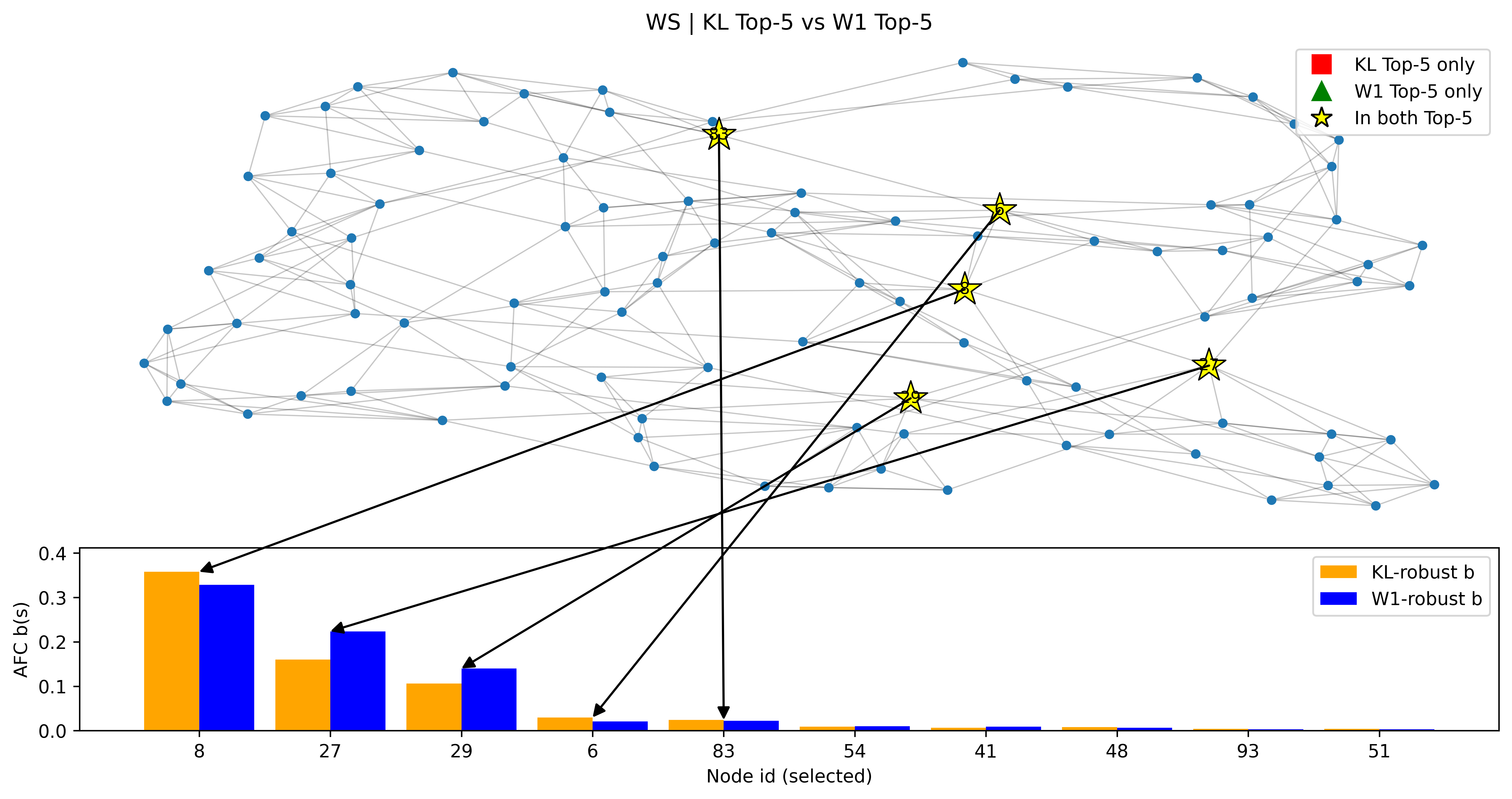}
    \caption{For each model, we mark the Top-(5) nodes under $b^{\mathrm{KL}}(s)$ and $b^{\mathrm{W1}}(s)$ on the network (with overlaps indicated) and plot a Top-(10) bar chart comparing $b^{\mathrm{KL}}_v(s)$ vs.$b^{\mathrm{W1}}_v(s)$ on the union of the two Top-(20) sets. This highlights that perturbations can change both magnitudes and rankings because AFC depends on $(I-Q)^{-1}$, not just one-step marginals.}
    \label{fig:test2}
\end{figure}

Figure~\ref{fig:test2} summarizes the robust search. For both ER and WS, the KL- and $W_1$-based
maximizers yield the same Top-$5$ node set, though with different AFC values. Relative to the baseline
in Figure~\ref{fig:test1}, WS preserves membership among the top three nodes but reorders ranks $4$--$5$:
the Top-$5$ WS nodes change from $(83,6,54,48,41)$ (Baseline) to $(6,83,54,41,48)$, swapping $83$ with $6$ and
exchanging $48$ with $41$.

\subsubsection{Multi-reward evaluation}
\label{subsec:exp4-multireward}

Next, we illustrate the reward-aware AFC from Section~\ref{subsec:multi-node-reward}. Given
$\widehat P$, all reward quantities are computed by post-processing the same AMC (no state-space
change). Let $h_1,h_2,h_3$ be the top-$3$ degree nodes in $G_0$, with rewards
$(R_1,R_2,R_3)=(10,10,10)$. Define reward function
$
f(v)=\max_{\ell\in\{1,2,3\}} R_\ell\,\beta^{d_{G_0}(v,h_\ell)}$ with $\beta=0.60,$ setting $f(v)=0$ if $v$ is unreachable from all hubs. On the same AMC we compute $b_f(s)=b(s)^\top f$ and two transition-based reward-AFC scalars: switching intensity (strict center changes) and improvement (nonnegative one-step increases in $f$) via the reduction in Section~\ref{subsec:multi-node-reward}.

\begin{figure}[htbp]
    \centering
    \includegraphics[width=0.49\linewidth]{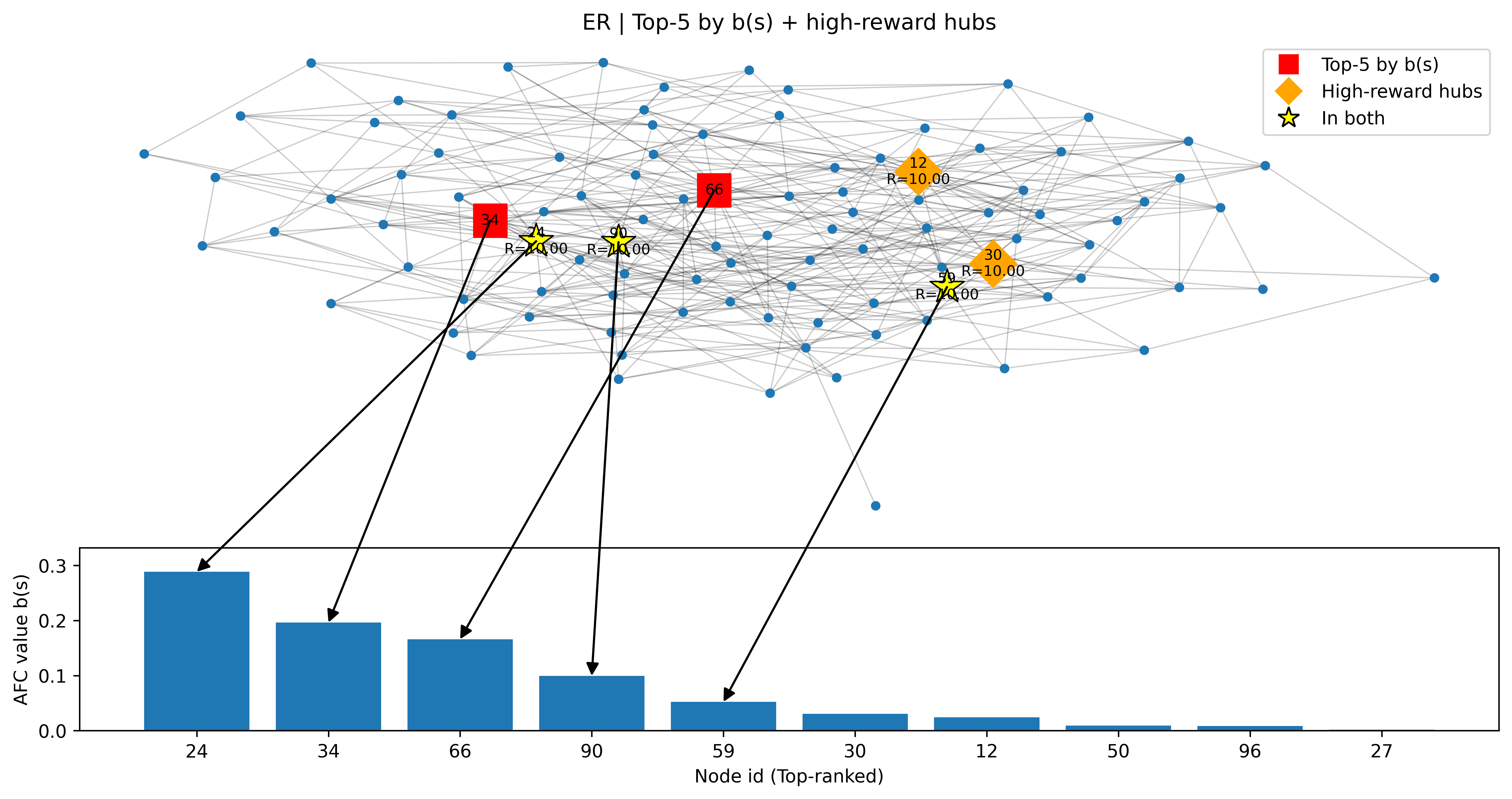}
    \includegraphics[width=0.49\linewidth]{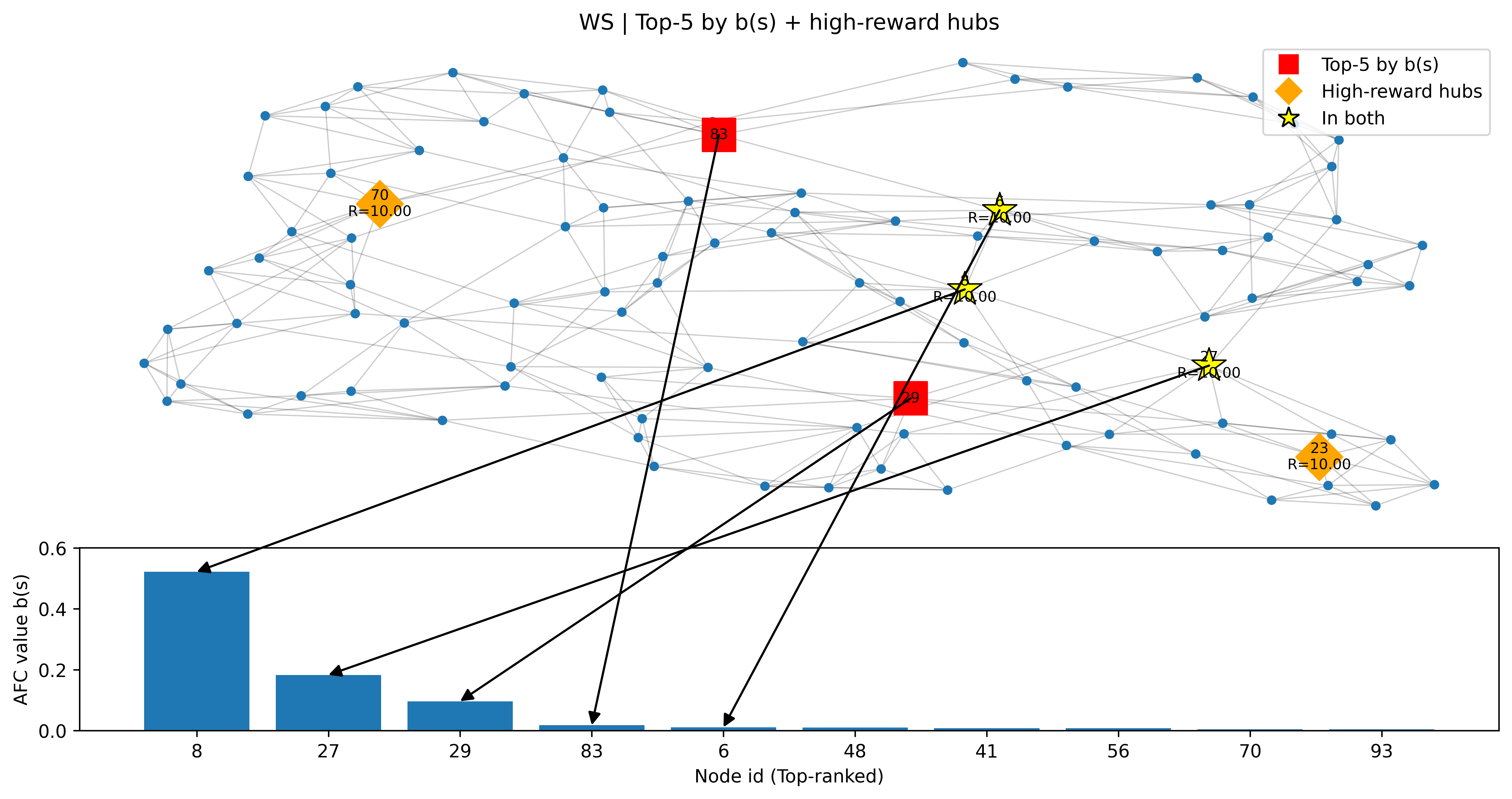}
    \caption{For each model, we report the Top-$5$ nodes under $\widehat b(s)$, mark the reward hubs, and report $b_f(s)$ plus the switching-intensity and improvement reward-AFC scalars.}
    \label{fig:test3}
\end{figure}

Figure~\ref{fig:test3} reports the results. Under our degree-based construction, the high-reward hub nodes are $\{8,70,6,23,27\}$. In the ER graph, the nodes ranked $8$th-$10$th are $\{50,96,27\}$ (vs. $\{45,96,50\}$ in the baseline network). In WS, the high-reward hub nodes are $\{8,27,29,83,6\}$. AFC mass concentrates on the top three nodes ($8,27,29$) with uniformly small values thereafter, contrasting with the baseline where nodes $83$ and $6$ remain comparatively prominent.

\subsubsection{Structure-constrained selection via \texorpdfstring{$\boldsymbol{3}$}{3}-clique target pools}
\label{subsec:exp3-structure}

We impose a motif constraint by restricting the reported next center to a target pool $W\subseteq V$
when feasible, implementing the set-based Top-$k$ filtering of Section~\ref{subsec:set-topk-phi-QN}. This modifies the induced AMC kernel while preserving the state space in \eqref{eq:amc}. We enumerate all $3$-cliques in $G_0$, score each clique by the sum of its node degrees, and select the top $S=8$ cliques (ties lexicographically). Let $W=\bigcup_{\ell=1}^{S}K_\ell$. On each non-absorbing step, after computing the local Top-$k$ set, we output the highest-ranked node in $\mathrm{Top}\text{-}k\cap W$ under the same deterministic ordering; if $\mathrm{Top}\text{-}k\cap W=\emptyset$, we fall back to a fixed $v_{\mathrm{fb}}\in W$ (the smallest id in the first selected clique). Exogenous stopping and small-component invalidity checks remain unchanged. 

\begin{figure}[htbp]
    \centering
    \includegraphics[width=0.49\linewidth]{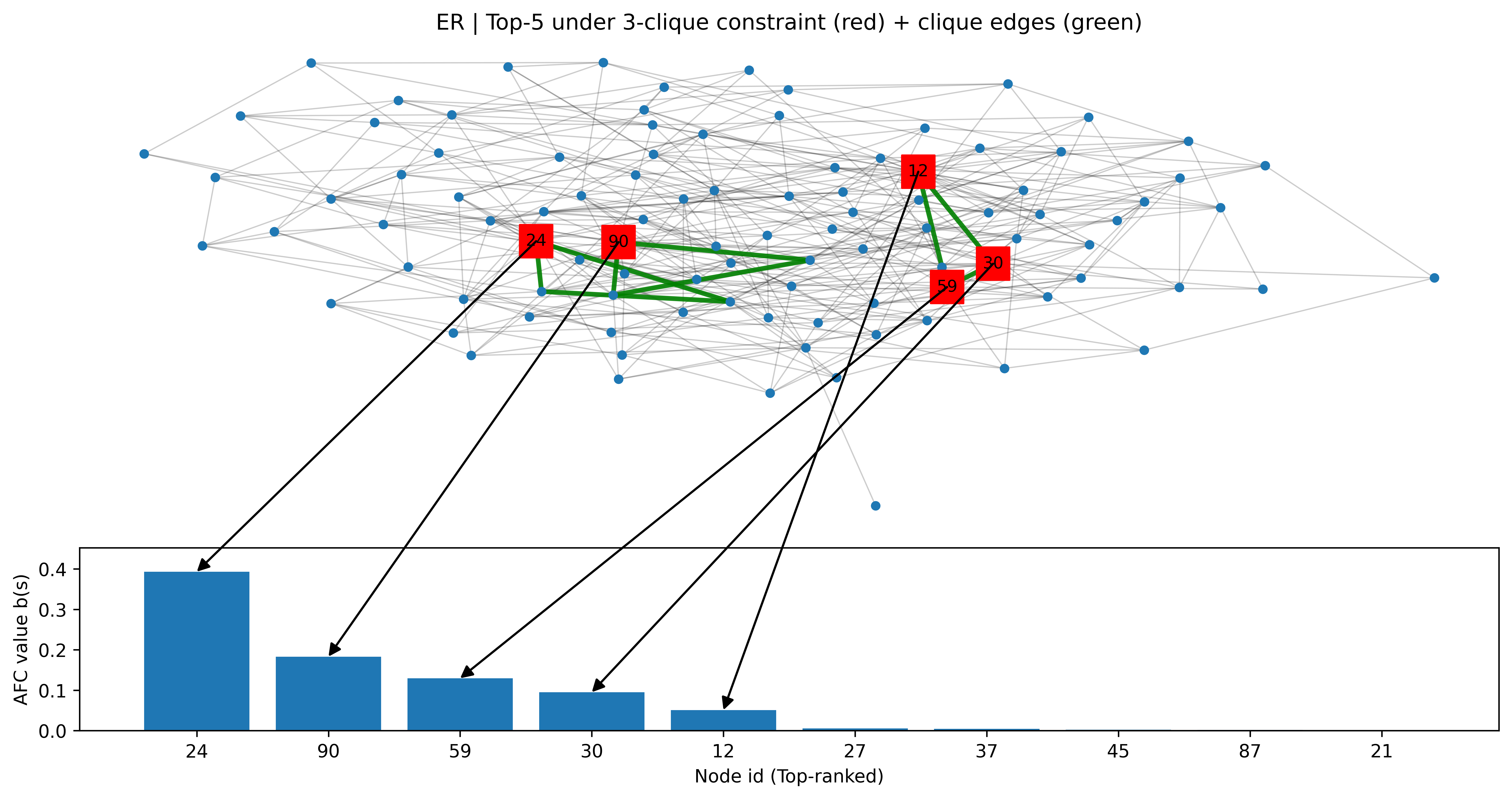}
    \includegraphics[width=0.49\linewidth]{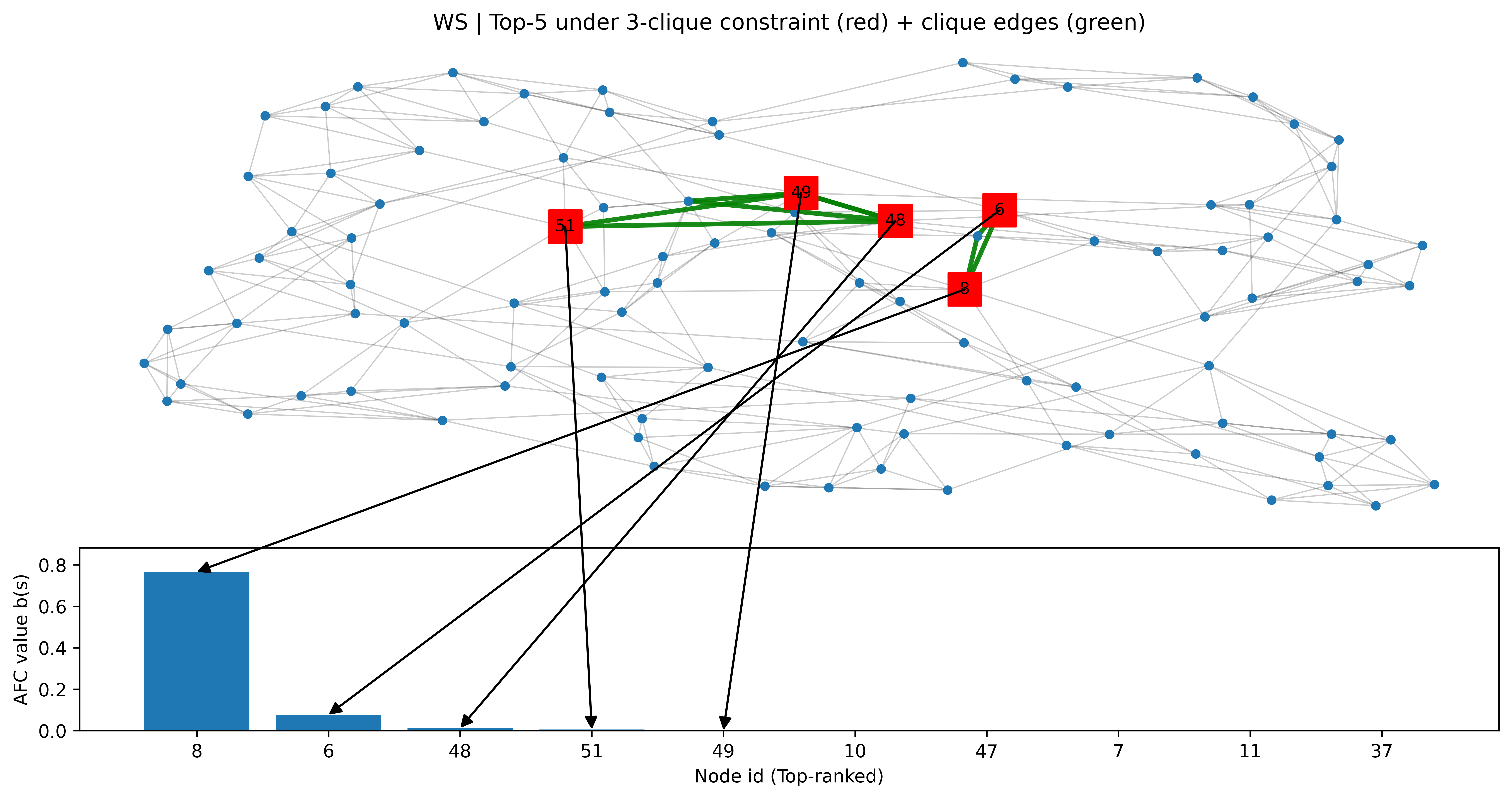}
    \caption{Top-$5$ nodes under the structure-constrained AFC profile, visualize them on the base network with the relevant motif edges emphasized, and plot the Top-$10$ AFC values. This highlights how motif feasibility and fallback detours can concentrate occupancy on $W$ while preserving \eqref{eq:amc}.}
    \label{fig:test4}
\end{figure}

Figure~\ref{fig:test4} shows that the induced rankings differ from the other tests, while the reported centers satisfy the imposed $3$-clique structure.

\subsection{Les Mis\'erables co-occurrence network}
\label{subsec:lesmis}

We apply the AMC--AFC pipeline to the Les Mis\'erables co-occurrence benchmark \cite{knuth1993stanfordgraphbase}, a well-known weighted network on $|V|=77$ vertices and $|E|=254$ edges. The integer weights satisfy $w_e^{0}\in[1,31]$, so we set $w_{\max}=31$. To introduce
stochastic heterogeneity without changing $E$, we resample edge weights at every call to $\texttt{SampleNext}(\cdot)$: conditional on $w_e^{0}$, we draw
$x_e\sim\mathcal N(\mu_e,\sigma_e^2)$ with
$
\mu_e=w_e^{0}+\rho_{\mu}(w_{\max}-w_e^{0}), 
\sigma_e=\rho_{\sigma}(w_{\max}-w_e^{0}),
$
and set $\tilde w_e=\Pi_{[w_e^{0},w_{\max}]}\!\big(\mathrm{round}(x_e)\big)$, i.e., rounded and clipped to $[w_e^{0},w_{\max}]$. 

Because the network is connected, we enforce anchor dependence by computing betweenness only on the $r$-hop neighborhood $S_r(i)$ of the current state $i$ and declaring absorption when $|S_r(i)|<k_{\min}$. At each non-absorbing step we resample $\{\tilde w_e\}$, compute weighted betweenness on the induced subgraph on $S_r(i)$ (using $\tilde w_e$ as edge lengths), and select the deterministic maximizer. We stop exogenously with probability $\alpha$ per step. 

The resulting AMC kernel $P^{0}$ has no closed form and is estimated row-wise via Monte Carlo, yielding $\widehat P^{0}$ from $M$ independent calls to $\texttt{SampleNext}(i)$ per state. We then compute AFC $b(s)$ from the fundamental matrix under a uniform initial distribution $s$, using the same stabilization (absorption floor and renormalization) as in the random-graph tests.

\begin{figure}[htbp]
    \centering
    \includegraphics[width=0.49\linewidth]{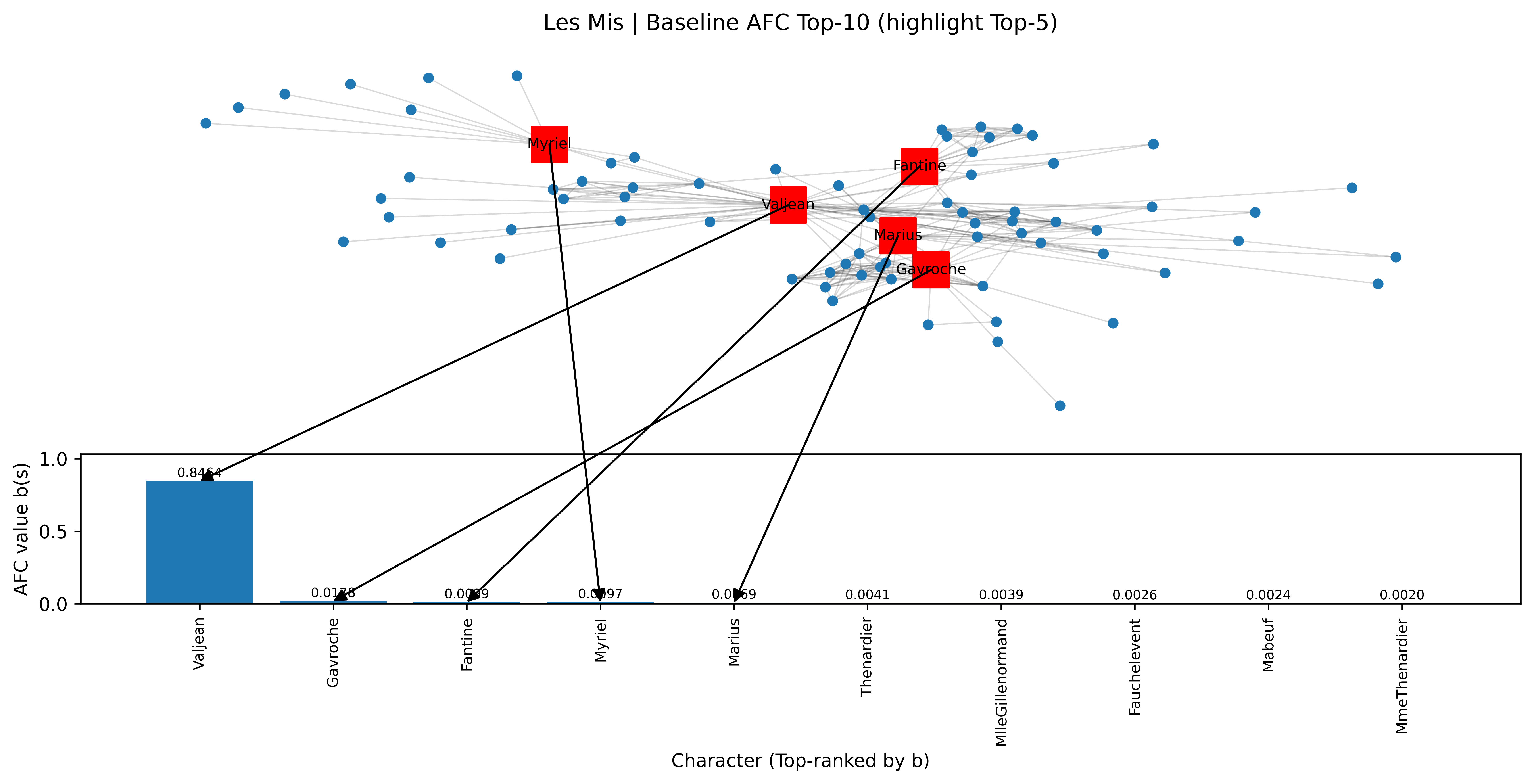}
    \includegraphics[width=0.49\linewidth]{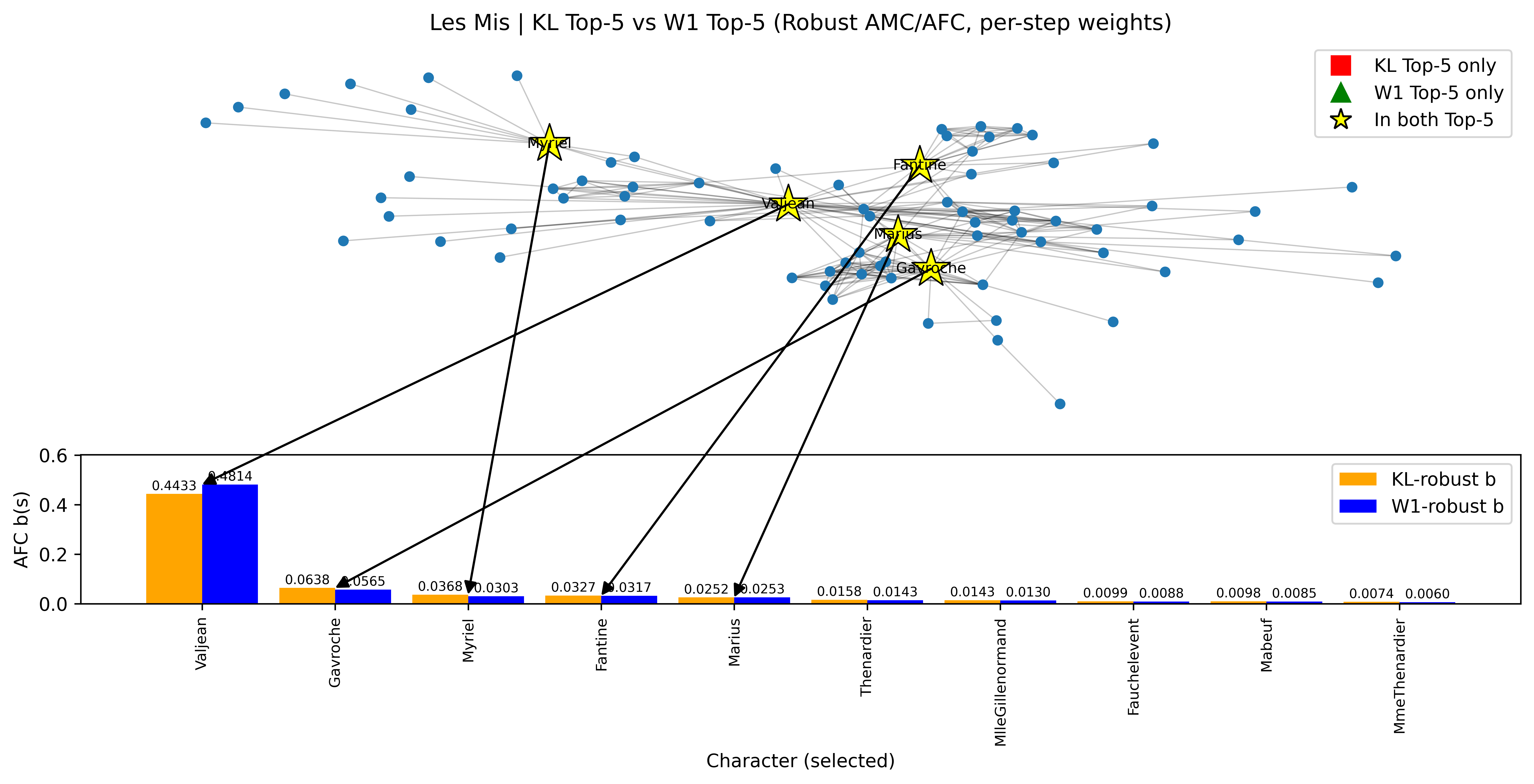}
     \includegraphics[width=0.49\linewidth]{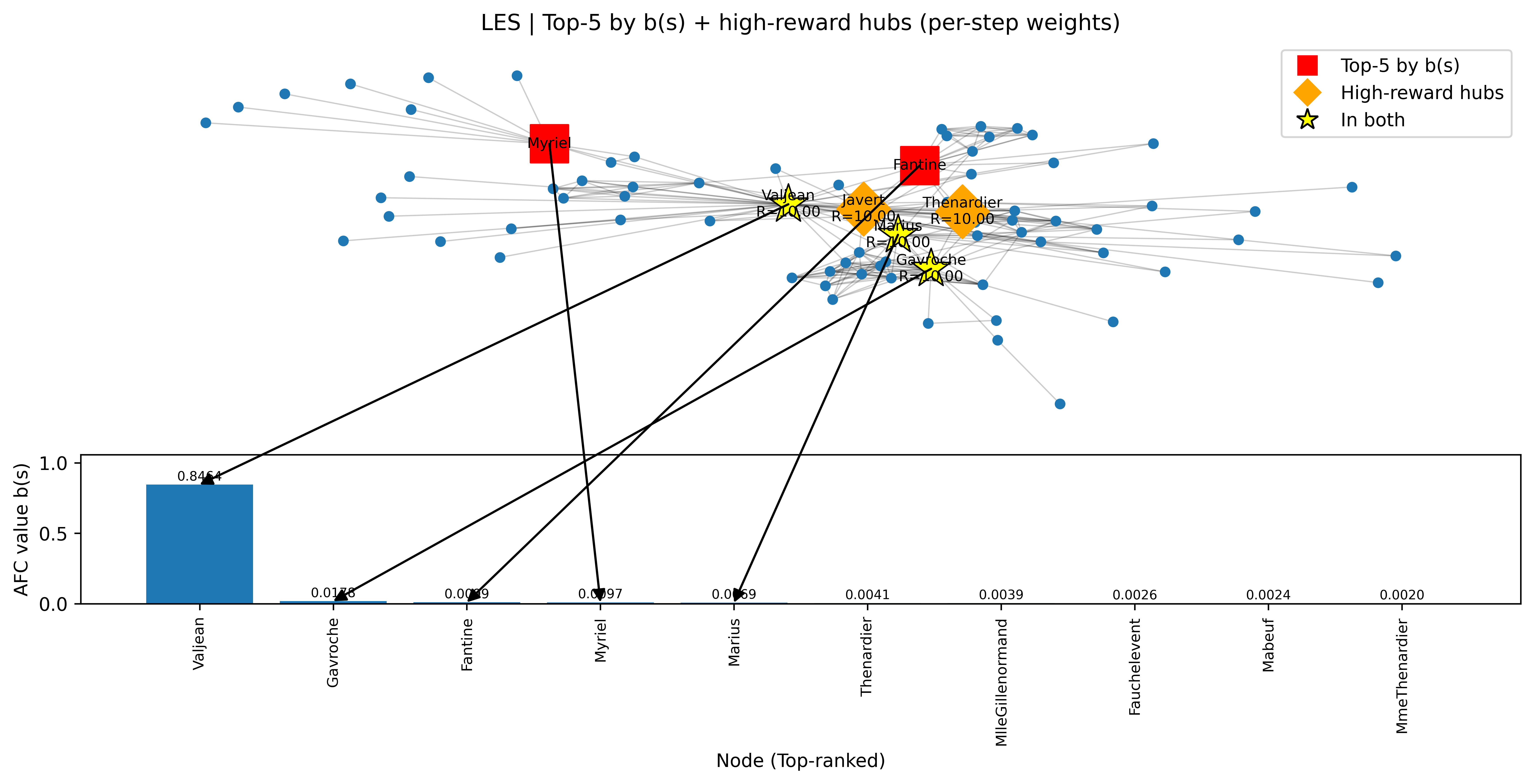}
      \includegraphics[width=0.49\linewidth]{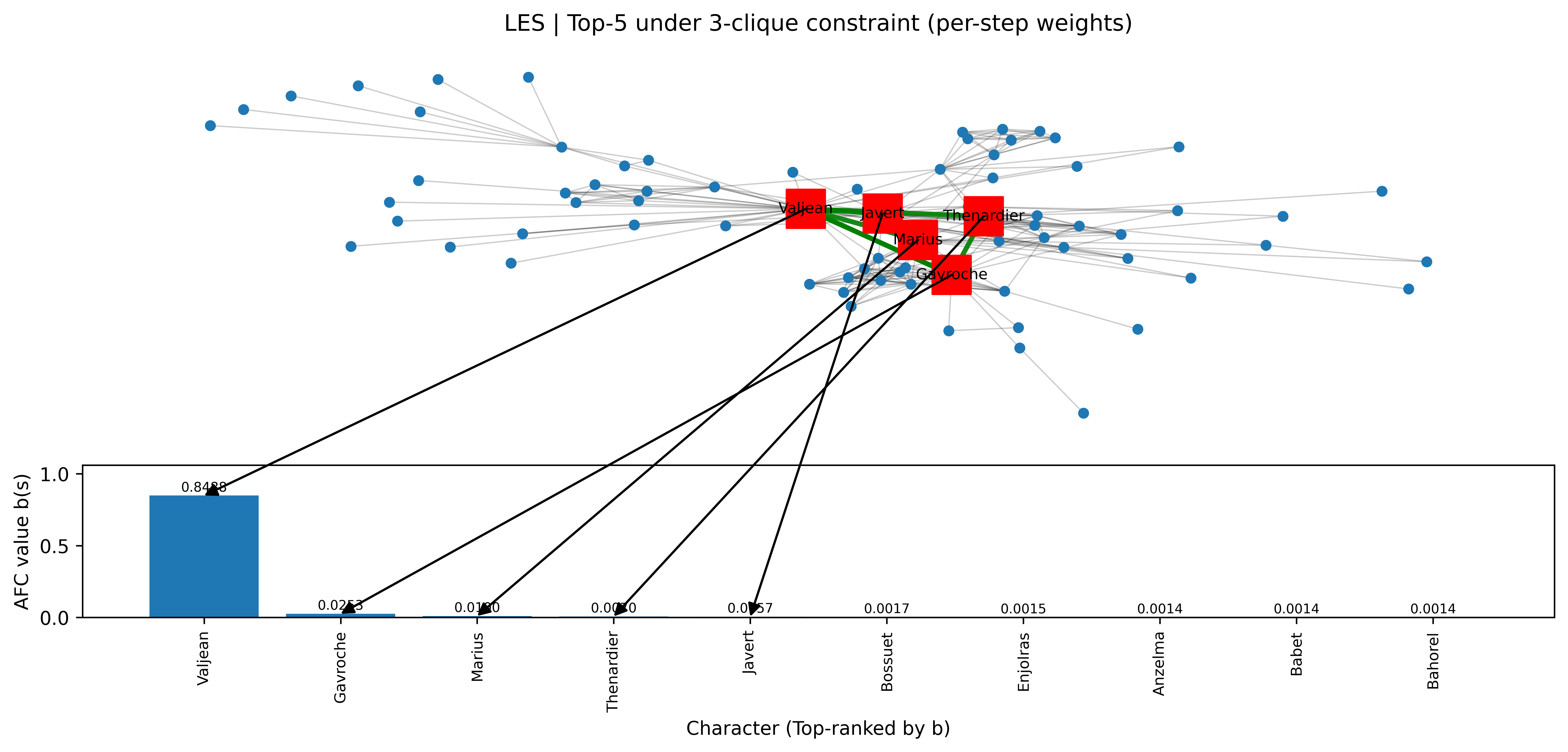}
    \caption{The upper-left panel reports the baseline AFC results. The upper-right panel compares the robust variants under KL divergence and $W_{1}$ discrepancy. The lower-left panel presents the multi-reward extensions. The lower-right panel shows the structure-constrained setting in which betweenness centrality selection is restricted to nodes contained in the chosen $3$-clique target pool.}
    \label{fig:test5}
\end{figure}

Figure~\ref{fig:test5} summarizes the Les Mis\'erables experiments with per-step edge-weight resampling. The upper-left panel shows the baseline occupancy profile by highlighting the Top-$5$ nodes and plotting the Top-$10$ entries of $b(s)$. The upper-right panel studies kernel robustness: starting from the Monte Carlo estimate $\widehat P^{0}$, we apply row-wise feasible perturbations to the transient block $Q$ and search for admissible kernels that maximize either KL divergence or $W_{1}$ distance from the baseline AFC. Unlike ER/WS (where variability is driven by random edge presence), the topology here is fixed and randomness arises from weight resampling, so the robust results reflect both admissible deviations in $Q$ and finite-sample uncertainty in $\widehat P^{0}$. For $W_{1}$, the ground metric is shortest-path distance on the base graph.

The lower-left panel reports the multi-reward test, combining the Top-$10$ AFC bar chart with degree-based reward hubs and reward-aware summaries (e.g., $b_f(s)=b(s)^\top f$ and transition-reward post-processing). The lower-right panel reports the structure-constrained test, where a target pool $W$ is formed from selected $3$-cliques and the one-step output is restricted to $\mathrm{Top}\text{-}k\cap W$ when feasible, with a deterministic fallback otherwise. Together, the four panels separate baseline AMC dynamics, kernel-level sensitivity, reward-aware post-processing, and clique-based constraints under a fixed topology with stochastic edge weights.

\section{Conclusion}

Our work provides a unified and computationally tractable absorbing dynamics framework for stochastic network centrality. Its main limitations are that the node-valued AMC compression is not lossless, since unresolved candidate sets and full local Top-$k$ outputs are reduced by deterministic tie-breaking, and that estimation currently relies on row-wise Monte Carlo with conservative finite-sample guarantees. Additionally, practical stabilization may also introduce bias, and better temporal dependence would require state augmentation, which could hurt its tractability.

These limitations suggest several directions for future work, including sharper non-asymptotic bounds, bias-aware analysis of stabilized estimators, uncertainty models for robust AFC optimization, and set-valued or state-augmented formulations that retain more information from local Top-$k$ structure and temporal dependence. On the applied side, the framework could be extended to problems such as resilient routing, infrastructure monitoring, uncertainty-aware prioritization, and dynamic screening in weighted social or co-occurrence networks.


\bibliography{references}

\appendix 

\section{AMC under an arbitrary initial distribution}\label{subsec:stochastic}

In this appendix, we discuss some initial results starting from an arbitrary initial distribution. In general, under component-wise centers, the next reported center depends on the current anchor $i$ through $c_{\mathrm{loc}}(i;H)$. Therefore the transient kernel $Q$ need not have identical rows.

This appendix presents a simpler connected/anchor-free regime. In that regime, conditional on survival, every post-initial reported center has the same distribution $p$, independent of the current anchor. The main consequence is that AFC becomes a simple mixture of the initial distribution $s$ and this common law $p$.

Let $\mathcal L$ be the law of one network state $Y$ (random topology and/or positive edge weights, possibly correlated within a step). Define the one-shot center distribution by
\begin{equation}\label{eq:oneshot-p}
p_v:=\mathbb P_{Y\sim\mathcal L}(c(Y)=v),\qquad p\in\Delta^{n-1}.
\end{equation}

Fix an initial distribution $s\in\Delta^{n-1}$. Let $(X_t)$ be the reported-center process on $V\cup\{\perp\}$ with absorption time $T$ as in \eqref{eq:absorption-time}, started from $X_0\sim s$. We continue to write $\pi_t(v):=\mathbb P_s(X_t=v\mid T>t)$, and $w_t:=\frac{\mathbb P_s(T>t)}{\mathbb E_s[T]}$, as in Corollary~\ref{cor:survival-decomp}. We assume:

\medskip
\noindent (S1) Stationary post-initial law conditional on survival. There exists $p\in\Delta^{n-1}$ such that $\pi_t=p$ for every $t\ge 1$. 

\medskip
\noindent (S2) Finite expected absorption time; that is, $T<\infty \text{ a.s.}, 1\le \mathbb E_s[T]<\infty.$

\begin{remark}[When does (S1) hold?]
A sufficient condition is that realized working graphs are almost surely connected, each realization has a unique center (after employing the tie-breaker), and for every $t\ge 1$ the random center $X_t=c(Y_t)$ has the same law $p$. Here \(p=(p_v)_{v\in V}\in\Delta^{n-1}\) denotes the common conditional distribution of the reported center after the initial step, that is, \(p_v=\mathbb P_s(X_t=v\mid T>t)\) for every \(t\ge 1\) and \(v\in V\). For example, this holds when $(Y_t)_{t\ge 1}$ are i.i.d.\ with law $\mathcal L$. In that case, $p$ is exactly the one-shot distribution in \eqref{eq:oneshot-p}.
\end{remark}

\begin{remark}[Disconnectedness of realized graphs]
\label{rem:disconnect-what-changes}
If realized graphs are disconnected, then the local center $c_{\mathrm{loc}}(i;H)=c_{\mathrm{comp}}(K(i;H);H)$ depends on the anchor through the random component containing $i$. A useful summary is the conditional continuation law
$$
p^{\mathrm{loc}}(i)_j
:=
\mathbb P(X_{t+1}=j\mid X_t=i,\ X_{t+1}\neq\perp)
=
\frac{Q_{ij}}{1-P_{i\perp}},
\qquad\text{whenever }P_{i\perp}<1.
$$
In general, $p^{\mathrm{loc}}(i)$ varies with $i$. Then (S1) may fail, and the mixture formula derived need not hold.
The general survival-weighted decomposition \eqref{eq:b-survival-average} still holds, and AFC is still computed from the
induced AMC through \eqref{eq:visits}--\eqref{eq:afc-def}.
\end{remark}

Under (S1), the general decomposition \eqref{eq:b-survival-average} simplifies because only the time-$0$ term depends on
the initial distribution.

\begin{theorem}[Mixture formula for AFC]
\label{thm:mixture}
Under (S1) and (S2), the absorbing-frequency centrality satisfies
\begin{equation}\label{eq:mixture-general}
b(s) = \frac{1}{\mathbb E_s[T]}\,s + \Bigl(1-\frac{1}{\mathbb E_s[T]}\Bigr)p.
\end{equation}
If, in addition, $\mathbb E_s[T]>1$ and we exclude the initial step by defining
\begin{equation}\label{eq:def-b-plus}
b^{(+)}(s)
:=
\dfrac{\mathbb E_s\!\big[\sum_{t=1}^{T-1}\mathbf 1\{X_t=\cdot\}\big]}{\mathbb E_s[T-1]},
\end{equation}
then $b^{(+)}(s)=p$ for every $s$.
\end{theorem}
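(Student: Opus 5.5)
The plan is to apply Corollary~\ref{cor:survival-decomp} directly and isolate the $t=0$ term. Condition (S2) guarantees $T<\infty$ almost surely and $\mathbb E_s[T]<\infty$. These are exactly the hypotheses of Definition~\ref{def:lengthbiased} and Proposition~\ref{prop:uniform-step}, so the decomposition
\[
b_v(s)=\sum_{t\ge 0}w_t\,\pi_t(v),\qquad w_t=\frac{\mathbb P_s(T>t)}{\mathbb E_s[T]},
\]
is available.

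\textbf{Step 1: the $t=0$ term.} Since $X_0\sim s$ is supported on $V$, we have $T\ge 1$ almost surely. Hence $\mathbb P_s(T>0)=1$, which gives $w_0=1/\mathbb E_s[T]$ and $\pi_0=s$.

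\textbf{Step 2: the $t\ge 1$ terms.} By (S1), $\pi_t=p$ for every $t\ge1$ with $\mathbb P_s(T>t)>0$. Terms with $w_t=0$ are irrelevant, so the arbitrary choice of $\pi_t$ there does not matter. Using $\sum_{t\ge0}w_t=1$ from Corollary~\ref{cor:survival-decomp}, the tail sums to $\sum_{t\ge1}w_t\,p=(1-w_0)p$. This yields \eqref{eq:mixture-general}.

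\textbf{Step 3: the post-initial profile $b^{(+)}(s)$.} Write the numerator via Tonelli (all terms are nonnegative and $T$ is integrable):
\[
\mathbb E_s\Big[\sum_{t=1}^{T-1}\mathbf 1\{X_t=v\}\Big]=\sum_{t\ge1}\mathbb P_s(X_t=v,\,T>t)=\sum_{t\ge1}\mathbb P_s(T>t)\,p_v .
\]
The second equality uses (S1), and terms with zero survival probability vanish on both sides. Similarly,
\[
\mathbb E_s[T-1]=\sum_{t\ge1}\mathbb P_s(T>t).
\]
This quantity is positive exactly when $\mathbb E_s[T]>1$, so the ratio is well defined and equals $p_v$. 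As a consistency check, one can instead subtract the $t=0$ contribution $s$ from $\mathbb E_s[T]\,b(s)$ and divide by $\mathbb E_s[T]-1$; by \eqref{eq:mixture-general} this also gives $p$.

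\textbf{Main obstacle.} The argument is routine, and the only delicate point is bookkeeping. For times with $\mathbb P_s(T>t)=0$, the conditional law $\pi_t$ is undefined, so (S1) must be read as applying only where conditioning is meaningful. The convention in Corollary~\ref{cor:survival-decomp} (arbitrary $\pi_t$, weight $w_t=0$) handles this. The interchange of sum and expectation is justified by nonnegativity together with (S2).
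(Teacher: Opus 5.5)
Your proposal is correct and follows the paper's proof essentially step for step. You apply Corollary~\ref{cor:survival-decomp}, isolate $w_0=1/\mathbb E_s[T]$ with $\pi_0=s$, and collapse the tail to $(1-w_0)p$ using (S1). For $b^{(+)}(s)$, you write numerator and denominator as sums of $\mathbb P_s(T>t)$ over $t\ge 1$, exactly as the paper does. Your explicit handling of times with $\mathbb P_s(T>t)=0$ and the Tonelli justification are careful additions the paper leaves implicit.
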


\begin{proof}
By Corollary~\ref{cor:survival-decomp}, $b(s)=\sum_{t\ge 0} w_t\,\pi_t$. Under (S1), we have $\pi_0=s$ and $\pi_t=p$ for all $t\ge 1$. Since $T\ge 1$ almost surely, $w_0=\frac{\mathbb P_s(T>0)}{\mathbb E_s[T]}=\frac{1}{\mathbb E_s[T]}$, and $\sum_{t\ge 1} w_t = 1-w_0$. Therefore,
$
b(s)
=
w_0 s+\Bigl(\sum_{t\ge 1}w_t\Bigr)p
=
\frac{1}{\mathbb E_s[T]}s
+
\Bigl(1-\frac{1}{\mathbb E_s[T]}\Bigr)p,
$
which is the same as \eqref{eq:mixture-general}.

For the post-initial normalization, fix $v\in V$. Then
$
\mathbb E_s\!\Big[\sum_{t=1}^{T-1}\mathbf 1\{X_t=v\}\Big]
=
\sum_{t\ge 1}\mathbb P_s(T>t)\,\pi_t(v)
=
p_v \sum_{t\ge 1}\mathbb P_s(T>t).
$
Using
$
\mathbb E_s[T-1]=\sum_{t\ge 1}\mathbb P_s(T>t),
$
we obtain
$
\mathbb E_s\!\Big[\sum_{t=1}^{T-1}\mathbf 1\{X_t=v\}\Big]
=
p_v\,\mathbb E_s[T-1].
$
Dividing by $\mathbb E_s[T-1]>0$ gives $b^{(+)}(s)=p$.
\end{proof}

Theorem~\ref{thm:mixture} uses only the survival-conditional laws $(\pi_t)$; it does not require $(X_t)$ to be
time-homogeneous Markov. A particularly simple case is geometric stopping.

\begin{corollary}[Geometric stopping]
\label{cor:geometric}
Assume $T$ is geometric with parameter $\alpha\in(0,1)$ on $\{1,2,\dots\}$, so $\mathbb E_s[T]=\alpha^{-1}$ for every s. Under (S1), $b(s)=\alpha\,s+(1-\alpha)\,p, b^{(+)}(s)=p$. 
\end{corollary}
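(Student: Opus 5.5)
The plan is to specialize Theorem~\ref{thm:mixture}, so essentially all the work is computing $\mathbb E_s[T]$ and checking that its hypotheses hold. Geometric with parameter $\alpha$ on $\{1,2,\dots\}$ means $\mathbb P_s(T>t)=(1-\alpha)^t$ for all $t\ge 0$. I will first sum the tail series,
\[
\mathbb E_s[T]=\sum_{t\ge 0}\mathbb P_s(T>t)=\sum_{t\ge 0}(1-\alpha)^t=\frac{1}{\alpha}.
\]
Since $\alpha\in(0,1)$, this gives $T<\infty$ almost surely and $1\le \mathbb E_s[T]=\alpha^{-1}<\infty$. That is exactly (S2), and it holds for every $s$.

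Next I will apply Theorem~\ref{thm:mixture}. With (S1) assumed and (S2) just verified, \eqref{eq:mixture-general} applies with $1/\mathbb E_s[T]=\alpha$, which gives $b(s)=\alpha s+(1-\alpha)p$ directly. As a cross-check, the weights in Corollary~\ref{cor:survival-decomp} are $w_t=\alpha(1-\alpha)^t$, so $w_0=\alpha$ and $\sum_{t\ge1}w_t=1-\alpha$, matching the mixture coefficients.

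For the post-initial statement, the theorem additionally requires $\mathbb E_s[T]>1$. Here $\mathbb E_s[T]=1/\alpha>1$ precisely because $\alpha<1$. This makes $\mathbb E_s[T-1]=(1-\alpha)/\alpha>0$, so $b^{(+)}(s)$ in \eqref{eq:def-b-plus} is well defined, and the second part of Theorem~\ref{thm:mixture} yields $b^{(+)}(s)=p$.

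There is no substantive obstacle. The only delicate point is using the support convention $\{1,2,\dots\}$ correctly, so that $\mathbb P_s(T>0)=1$ and the tail sum gives $1/\alpha$ rather than $(1-\alpha)/\alpha$. A second point is that (S1) must be assumed alongside the geometric law, since the geometric tail alone does not imply a stationary post-initial law.
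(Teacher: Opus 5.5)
Your proposal is correct and takes the same route as the paper, whose proof simply substitutes $\mathbb E_s[T]=\alpha^{-1}$ into \eqref{eq:mixture-general}. Your extra checks make explicit what that one-line proof leaves implicit: that the geometric tail $\mathbb P_s(T>t)=(1-\alpha)^t$ gives (S2), and that $\alpha<1$ forces $\mathbb E_s[T]>1$, which is the hypothesis the second part of Theorem~\ref{thm:mixture} needs to conclude $b^{(+)}(s)=p$.
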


\begin{proof}
Substitute $\mathbb E_s[T]=\alpha^{-1}$ into \eqref{eq:mixture-general}.
\end{proof}

\begin{proposition}
\label{prop:linear-algebra}
Consider the \emph{canonical} AMC in which, from every transient state, the chain is absorbed with probability
$\alpha$, and otherwise survives and moves to a fresh draw from $p$, independently of the current node. Then $Q=(1-\alpha)\,\mathbf 1 p$, and $r=\alpha\,\mathbf 1.$ Its fundamental matrix is $N=(I-Q)^{-1}=I+\frac{1-\alpha}{\alpha}\,\mathbf 1 p$.
Consequently, $sN=s+\frac{1-\alpha}{\alpha}\,p$, and $sN\mathbf 1=\frac{1}{\alpha}$. Hence $b(s)=\alpha\,s+(1-\alpha)\,p$.
\end{proposition}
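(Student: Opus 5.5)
The plan is a direct linear-algebra verification, cross-checked against Corollary~\ref{cor:geometric}. First I will read off the kernel. From any $i\in V$, the chain is absorbed with probability $\alpha$. Otherwise it moves to $j$ with probability $(1-\alpha)p_j$, regardless of $i$. Hence every row of $Q$ equals $(1-\alpha)p$, i.e.\ $Q=(1-\alpha)\mathbf 1 p$, and $r=\mathbf 1-Q\mathbf 1=\mathbf 1-(1-\alpha)\mathbf 1(p\mathbf 1)=\alpha\mathbf 1$, using $p\mathbf 1=1$. Since $\|Q\|_\infty=1-\alpha<1$, we get $\rho(Q)<1$, so $N=(I-Q)^{-1}=\sum_{t\ge0}Q^t$ is well defined by \eqref{eq:fundamental}.

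Next I will compute $N$, using the rank-one structure. The key identity is $(\mathbf 1p)^2=\mathbf 1(p\mathbf 1)p=\mathbf 1p$, so $\mathbf 1 p$ is idempotent. Consequently $Q^t=(1-\alpha)^t\mathbf 1p$ for $t\ge1$. Summing the Neumann series gives
\[
N=I+\mathbf 1p\sum_{t\ge1}(1-\alpha)^t=I+\tfrac{1-\alpha}{\alpha}\mathbf 1p.
\]
As an independent check, I will verify directly that $(I-(1-\alpha)\mathbf 1p)\bigl(I+\tfrac{1-\alpha}{\alpha}\mathbf 1p\bigr)=I$. Expanding with idempotence, the coefficient of $\mathbf 1p$ is $\tfrac{1-\alpha}{\alpha}-(1-\alpha)-\tfrac{(1-\alpha)^2}{\alpha}=\tfrac{(1-\alpha)(1-\alpha-(1-\alpha))}{\alpha}\cdot 0$; more plainly, the terms combine to $\tfrac{(1-\alpha)\bigl(1-\alpha-(1-\alpha)\bigr)}{\alpha}=0$. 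Alternatively, Sherman--Morrison gives the same result.

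Then I will multiply by $s$. Since $s\mathbf 1=1$, we get $sN=s+\tfrac{1-\alpha}{\alpha}p$. Also $sN\mathbf 1=1+\tfrac{1-\alpha}{\alpha}=\tfrac1\alpha$, which agrees with \eqref{eq:expected-length} and the geometric absorption time. Dividing as in \eqref{eq:afc-def} yields $b(s)=\alpha s+(1-\alpha)p$.

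Finally, I will note consistency with Corollary~\ref{cor:geometric}. In this chain $T$ is geometric with parameter $\alpha$. Moreover, $sQ^t/(sQ^t\mathbf 1)=p$ for $t\ge1$, so (S1) holds. This is also a special case of \eqref{eq:pm-discounted-occupancy} with $M=\mathbf 1p$.

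There is no real obstacle here. The only point that needs care is the idempotence $(\mathbf 1p)^2=\mathbf 1p$, which relies on $p\in\Delta^{n-1}$ so that $p\mathbf 1=1$.
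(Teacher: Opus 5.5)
Your proposal is correct and follows the paper's proof essentially line by line: you read off $Q=(1-\alpha)\mathbf 1p$ and $r=\alpha\mathbf 1$, use the idempotence $(\mathbf 1p)^2=\mathbf 1p$ to get $Q^t=(1-\alpha)^t\mathbf 1p$, sum the Neumann series, and normalize $sN$ by $sN\mathbf 1=1/\alpha$. The direct inverse check and the consistency remarks are optional extras; just delete the stray ``$\cdot 0$'' in your coefficient computation, since the expression $\tfrac{(1-\alpha)(1-\alpha-(1-\alpha))}{\alpha}=0$ is already correct without it.
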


\begin{proof}
Under this construction, every transient row is the same: with probability $\alpha$ the chain is absorbed, and with probability $1-\alpha$ it survives and the next node is drawn from $p$. This gives $Q=(1-\alpha)\,\mathbf 1 p$, and $r=\alpha\,\mathbf 1.$

Since $p\mathbf 1=1$, we have $(\mathbf 1 p)^2=\mathbf 1(p\mathbf 1)p=\mathbf 1 p$, so for every $k\ge 1$, $Q^k=(1-\alpha)^k\,\mathbf 1 p$. Therefore
$
N=\sum_{k\ge 0}Q^k
=
I+\sum_{k\ge 1}(1-\alpha)^k\,\mathbf 1 p
=
I+\frac{1-\alpha}{\alpha}\,\mathbf 1 p.
$

Multiplying by $s$ gives $sN=s+\frac{1-\alpha}{\alpha}\,p$, and since $p\mathbf 1=1$, $sN\mathbf 1
=
s\mathbf 1+\frac{1-\alpha}{\alpha}\,p\mathbf 1
=
1+\frac{1-\alpha}{\alpha}
=
\frac{1}{\alpha}.
$
Finally,
$
b(s)=\frac{sN}{sN\mathbf 1}=\alpha\,s+(1-\alpha)\,p.
$
\end{proof}

\begin{remark}[When the mixture formula fails]
If the survival-conditional laws $\pi_t$ vary with $t$, or if the stopping mechanism changes the law of $X_t$ given
survival, then \eqref{eq:mixture-general} need not hold. In that case one should use the general decomposition
\eqref{eq:b-survival-average} and the uniform pre-absorption-step interpretation from
Proposition~\ref{prop:uniform-step}.
\end{remark}

\section{A worked illustration for section \ref{sec:rankreversals}}
\label{appendixB}
Here we show a small, worked illustration for section \ref{sec:rankreversals}. Take $V=\{1,2,3\}$, uniform $s$, and nominal transient block and absorption probabilities
\[
Q^0=\begin{bmatrix}0.20&0.10&0\\[1pt]0.05&0.20&0\\[1pt]0.05&0.05&0.10\end{bmatrix},
\qquad r^0=(0.70,\,0.75,\,0.80),
\]
giving $\mu(s;P^0)=(0.4710,\,0.4987,\,0.3704)$, $\mathbb E_s[T]=1.34$, and visit-gaps $G_{21}=0.0277$, $G_{23}=0.1283$, $G_{13}=0.1006$. With entry-wise radii $\varepsilon_{ij}=0.004$ (so $\bar\varepsilon=0.012$) and $\underline r_{\min}=0.65$, the crude threshold is $2\bar\varepsilon/\underline r_{\min}^{\,2}=0.0568$: it certifies $2\succ3$ and $1\succ3$ but is inconclusive for the boundary pair $2\succ1$. The sharpened thresholds from \eqref{eq:sharp-gap} are $0.0218$, $0.0233$, $0.0233$ for the pairs $(2,1)$, $(2,3)$, $(1,3)$ respectively, so Proposition~\ref{prop:sharp-certificate} certifies all three orders, including the boundary pair that the crude bound could not resolve.

Practically, we note two caveats. First, certificates of this kind are informative only when the radii are calibrated to the actual uncertainty: at the radii used in the experiments of Subsection~\ref{subsec:exp2-robust} ($\delta_{\mathrm{rel}}=0.50$ multiplicative, $r_{\min}=0.05$), the row radius is $\bar\varepsilon\approx 0.5(1-\alpha)\approx 0.43$, the crude threshold equals $2\bar\varepsilon/r_{\min}^{2}\approx 340$, and every visit-gap is bounded by $\mathbb E_s[T]\le 1/r_{\min}$, so no pair can be certified, which simply reflects that $50\%$ relative perturbations genuinely reorder near-ties. Second, when the kernel is estimated by Algorithm~\ref{alg:construct-amc}, the natural radii are the finite-sample deviations $\varepsilon_{ij}\asymp\sqrt{\log(2n(n+1)/\delta)/(2M)}$ from Subsection~\ref{subsec:sample-size}; the revised experiments report, for each boundary pair of the nominal Top-$k$ set, whether \eqref{eq:sharp-gap} certifies the order at these statistically calibrated radii.

\end{document}